\pgfplotsset{compat=1.14}
\newcommand{\R}{\mathbb{R}_{\geq 0}}
\newcommand{\eps}{\varepsilon}
\DeclareMathOperator*{\argmax}{arg\,max}
\DeclareMathOperator*{\st}{stop}
\DeclareMathOperator{\ins}{Insert}
\DeclareMathOperator{\maxprescore}{MaxPscore}
\DeclareMathOperator{\maxscore}{MaxScore}
\DeclareMathOperator{\interval}{FindInterval}
\DeclareMathOperator{\MMS}{MMS}
\DeclareMathOperator{\lazy}{LazyMMS}
\DeclareMathOperator{\phragmen}{seqPhragmen}
\DeclareMathOperator{\maxphragmen}{maxPhragmen}
\DeclareMathOperator{\phragmms}{Phragmms}
\DeclareMathOperator{\LSPJR}{LS-Phragmms}
\DeclareMathOperator{\local}{LS-Phragmms}
\DeclareMathOperator{\bal}{Bal}
\DeclareMathOperator{\supp}{supp}
\DeclareMathOperator{\score}{score}
\DeclareMathOperator{\prescore}{pscore}
\DeclareMathOperator{\slack}{slack}
\newtheorem{theorem}{Theorem}
\newtheorem{lemma}[theorem]{Lemma}
\newtheorem*{lemma*}{Lemma}
\newtheorem{definition}[theorem]{Definition}
\newtheorem{remark}[theorem]{Remark}
\newtheorem*{heuristic}{Heuristic}
\title{A verifiably secure and proportional committee election rule}
\author{
  Alfonso Cevallos \\
  Web 3.0 Technologies Foundation\\
  Zug, Switzerland \\
  \texttt{alfonso@web3.foundation} \\
  %% examples of more authors
   \And
 Alistair Stewart \\
  Web 3.0 Technologies Foundation\\
  Zug, Switzerland \\
  \texttt{alistair@web3.foundation} \\
  %% \AND
  %% Coauthor \\
  %% Affiliation \\
  %% Address \\
  %% \texttt{email} \\
}
\begin{document}
\maketitle

\begin{abstract}

The concept of proportional representation in approval-based committee elections has appeared in the social choice literature for over a century and is typically understood as avoiding the underrepresentation of minorities. 
However, we argue that the security of some distributed systems critically depends on the opposite goal of preventing the \emph{overrepresentation} of any minority, a goal not previously formalized that leads us to an optimization objective known as \emph{maximin support}. 
After providing a thorough analysis of the computational complexity of this objective, we propose a new efficient election rule 
%inspired in Phragm\'{e}n's methods 
that simultaneously achieves a) a constant-factor approximation guarantee for it, and b) the property of \emph{proportional justified representation} (PJR) -- one of the strongest forms of proportional representation. However, the most striking feature of the new rule is that one can \emph{verify in linear time} that the winning committee satisfies the two aforementioned guarantees, even when the algorithm is executed by an untrusted party who only communicates the output. 
As a result, the rule can be adapted into a \emph{verifiable computing scheme}. 
Moreover, its verification procedure easily admits \emph{parallel processing} for further efficiency.
%Finally, we present an efficient post-computation that, when paired with any approximation algorithm for maximin support, returns a new solution that a) preserves the approximation guarantee and b) can be efficiently verified to satisfy PJR.

Our work is motivated by an application on blockchain networks that implement \emph{Nominated Proof-of-Stake}, where the community elects a committee of validators to participate in the consensus protocol, and where preventing overrepresentation protects the network against attacks by an adversarial minority. 
Our election rule enables a validator selection protocol with formal guarantees on security and proportionality, and its adaptation as a verifiable computing scheme with a parallelized verification proves to be key for its successful implementation given the computationally limited nature of the blockchain architecture. 
%We provide details of such an implementation in the \emph{Polkadot} network, launched in 2020.

\end{abstract}

\keywords{computational social choice \and approval-based committee election \and approximation algorithms \and proof-of-stake \and blockchain}

%\newpage

\section{Introduction}

In an approval-based committee election, a voter either approves or disapproves of each candidate, with no limit on the number of approved ones and no declared preferences among them~\cite{lackner2020approval}. From the voters' ballots taken as input, an election rule outputs a winning committee of candidates of a certain size $k$, in the pursuit of some goals or criteria. Proportional representation is one of the most prominent such criteria in the literature of these election rules. 
It is typically understood as a guarantee that small minorities within the electorate are not \emph{underrepresented} by the winning committee, and is considered an imperative in any fair election process as it ensures that all voices are heard and all communities are satisfied and engaged. 

In this paper we complement this notion by formalizing the opposite goal of preventing the \emph{overrepresentation} of any minority. We consider this to be a matter of security, and analyze a scenario where an adversarial minority may attempt to gain overrepresentation in the winning committee in order to capture the governance body or interfere with its correct functioning. 
Consequently, we consider the electoral system to be at risk of attack if the winning committee ever contains a subset of seats whose aggregate vote support is particularly low relative to the subset size, and establish an optimization problem that minimizes this risk. 
In this work we justify this problem from first principles and provide a thorough analysis of its computational complexity. 
%We also establish its connection to relevant axioms of proportional representation in the literature, 
We also study the performance of the most prominent election rules in the literature of proportional representation, the conclusion being that each of them either fails to provide a security guarantee or is prohibitively slow. Finally, we present a new, more efficient election rule that provides strong guarantees in terms of security as well as traditional proportional representation. Full details are provided below.

\paragraph{Blockchain architecture, verifiable computing and parallelism.}
Our work is motivated by an application on public, permissionless blockchain networks. 
These networks are equipped with computational and financial capabilities and have no central authority nor single point of failure, which gives them unprecedented levels of resistance to attacks, and for the first time removes the need for trusted intermediaries in peer-to-peer value transfers across the world. Recent years have seen an explosion of blockchain-based applications in finance, commerce, logistics, art and gaming; see~\cite{maesa2020blockchain} for a survey. 
Rather than controlling the identity and correct execution of each node, a blockchain network freely allows nodes to join or leave the network pseudonymously, and adds enough redundancy to resist the erroneous execution of any one of them. 
Nodes that participate in the consensus mechanism are known as \emph{validators}, and the correct functioning of the network is guaranteed as long as a supermajority of validators executes correctly.

Yet, the advantages of a blockchain architecture come at the cost of hard computational limitations. For every new block of transactions, all validators around the globe need to perform the same computations locally, and the network must wait for all of them to finish and agree on the results before processing the next block. Furthermore, a robust design should account for computationally weak validators such as consumer-grade computers, as imposing high performance requirements would lead to centralization, so the per-block computing load must remain conservatively low. Because of this, earlier networks such as Bitcoin~\cite{nakamoto2019bitcoin} and Ethereum 1.0~\cite{wood2014ethereum} can only process tens of transactions per second~\cite{chauhan2018blockchain}. 

As a consequence, only the most efficient algorithms -- such as those with a linear runtime -- can be directly implemented over a blockchain network. This represents a considerable hindrance for the use of committee election rules, in particular those sophisticated enough to provide guarantees on proportionality or security. 
As a case in point, the $\MMS$ rule~\cite{sanchez2016maximin} provides these guarantees, as we establish in this paper, but its slow (polynomial) runtime makes it unsuitable for implementation. On the other hand, the blockchain network EOS~\cite{griggeos} applies the multiwinner approval voting rule on its validator selection protocol, a rule that is highly efficient yet known to perform very poorly in terms of proportional representation; see~\cite[Table 2]{lackner2020approval}. 
The choice of this rule, in all likelihood based on operational considerations, has led to user discontent and claims of excessive centralization of the EOS network.%
\footnote{See the opinion piece ``EOS voting structure encourages centralization''~\cite{garg} as well as the 
news article ``Crypto ratings agency downgrades EOS for serious centralization problems''~\cite{chong}. 
Also, authors Br{\"u}njes et al.~\cite{brunjes2020reward} state ``At some point, controlling EOS delegates representing just 2.2\% of stakeholders was sufficient to halt the system... [and] only 8\% of total stake is represented by the 21 leading delegates.'' 
Finally, protocol designer Aarin Hagerty~\cite{hagerty} comments this about proposed election rules for EOS: ``Personally, I am not satisfied with any of these solutions because none of them addresses the proportional representation criterion. There are voting mechanisms that do satisfy it... The problem tends to be that they are actually very computationally intensive... For small numbers of winners and candidates it may be feasible (though this is still a difficult engineering challenge to build within smart contracts running on a blockchain), but if you go to even moderate numbers of winners and candidates it quickly explodes combinatorially and can even become infeasible to do off-chain... In summary, social choice theory is hard.''}
%We expand on \emph{validator selection protocols} further below.

A number of solutions have been proposed and sucessfully implemented for scaling up the computational capabilities of new-generation blockchain networks, such as sharding and layer-2 solutions; see~\cite{zhou2020solutions} for a survey. 
Of relevance to our work is the use of \emph{verifiable computing schemes}~\cite{gennaro2010non}. Such a scheme offloads a heavy task to one or more \emph{off-chain workers}, that is, entities that are logically separate from the rest of the network and may process the task on high-performance machines and with relaxed time frames as their operations do not interfere with block production. Once the task is completed and the output is fed back into the network, its correctness is verified by the validators. 
This is a sensible scheme for a task if its output can be subjected to a \emph{verification process} that a) guarantees correctness, even when the task is performed by an untrusted party, b) has a much lower runtime than performing the task itself from scratch, and c) admits \emph{parallelism}, so that it can be executed over multiple computing units, each with bounded time and memory complexities. 
These computing units may then be executed on consecutive blocks (sequentially) or separate shards (concurrently), depending on implementation.
One of our main contributions is showing that our proposed rule admits a verification process on its winning committee that can check the guarantees on proportionality and security in \emph{linear time} in the size of the input, i.e., the voters' ballots. Moreover, this verification can be executed over multiple computing units each in time \emph{linear in the number of candidates} and independent of the number of voters. 
In fact, the Polkadot network~\cite{burdges2020overview} is developing an implementation of our election rule within its validator selection protocol, as a verifiable computing scheme with parallelized verification, that can handle hundreds of candidates and a large number of voters; we include details of that protocol in Section~\ref{s:implement}. 

Our work thus constitutes an effort towards applying verifiable computing to election rules. 
Of course, one may argue that developing electoral systems that facilitate the public verification of results in accordance to clearly defined criteria is a worthwhile pursuit in itself, beyond any implementation concerns related to the blockchain architecture.
Yet, it is worth mentioning that the issue of implementability has become ever more relevant in recent years, and not only for blockchain-based solutions expressly built for online voting. New blockchain networks with any sort of functionality are likely to run elections in two of their core protocols. 
The first one is \emph{on-chain governance}~\cite{beck2018governance}: many projects are abandoning the notion of immutable code in favor of a more flexible design that facilitates future code upgrades via an embedded voting process of all holders of the native token. This process not only helps in terms of coordination but also legitimizes the result and avoids hard forks. 
Governance may also allow token holders to vote on committees %, councils 
and referenda, %launch their own candidacies, 
raise proposals, form commissions, etc. 
The second core protocol is validator selection, which we mentioned before and we describe in detail next as it is our motivating application and the background for our problem definition. 

\paragraph{Validator selection in Proof-of-Stake.}
Many blockchain networks launched in recent years substitute the highly inefficient Proof-of-Work (PoW) component of the consensus mechanism~\cite{nakamoto2019bitcoin} with Proof-of-Stake (PoS), in which the level of participation of validators depends on their token holdings --their stake-- as opposed to their computational power. 
While a pure PoS system allows any willing token holder to participate directly, most projects place a bound $k$ on the number of validators that can be active at any given moment. This bound may be set explicitly or implicitly, as a consequence of some barrier to entry. Arguments for setting such a bound are that the increase in operational costs and communication complexity eventually outmatches the marginal increase in benefits stemming from decentralization as $k$ grows, and that most users with little stake would find it inconvenient to keep a validator node constantly online for only sporadic participation, and would rather form validation pools in order to decrease the variance on their revenue and profit from economies of scale. %
Instead, a system may use ``representative democracy'' to formalize and facilitate the formation of these pools, allowing users to either launch the candidacy of their own nodes, or indicate the candidates that they trust. 
From this input, a committee with $k$ of the most trusted candidates emerges as the active validator set. Networks that broadly follow this approach include Polkadot~\cite{burdges2020overview}, Cardano~\cite{brunjes2020reward}, Tron~\cite{tron}, EOS~\cite{griggeos}, Cosmos~\cite{cosmos} and Tezos~\cite{tezos}, among many others. 

While similar in spirit, the approaches taken by these projects vary in several regards, most significantly in terms of incentives and the electoral system used. These design choices are of the utmost importance as they affect the decentralization and security levels achieved by the network; we refer again to the centralization issues experienced by EOS that we mentioned above. 
Yet, rigorous analyses behind these design choices are generally scarce. 
A notable exception is the recent work by Br{\"u}njes et al.~\cite{brunjes2020reward}, that proposes an incentive scheme for stake pools backed by a game theoretical analysis. In turn, in the present work we propose for the first time an electoral system for the selection of validators and analyze it from the perspective of computational social choice. 

We focus on Nominated Proof-of-Stake (NPoS), the design implemented by the Polkadot and Kusama networks~\cite{burdges2020overview}. In NPoS, any stakeholder is free to become a validator candidate, or a \emph{nominator} who provides an unranked list of candidates that she trusts. At regular intervals of a few hours, a committee of $k$ validators --in the order of hundreds-- is elected according to the current nominators' preferences. 
As a security measure, both validators and nominators have their stake locked as collateral, so that if a validator ever shows negligent or adversarial behavior, backing nominators are susceptible to losing their stake. Conversely, during normal execution the network provides economic rewards to all validators and their backing nominators in proportion to their stake and in a non-custodial way. Nominators are thus indirect participants in the consensus mechanism with a vested economic interest to guard the performance of validators and support only the most capable and trustworthy candidates. 
We provide further details about the NPoS mechanism in Section~\ref{s:implement}. 

\paragraph{Problem definition.}
For the sake of simplicity, in what follows we consider a model where only nominators have stake, not candidates, and we equate their stake amount to their voting strength. This leads to a vote-weighted, approval-based committee election problem. 
We remark that most of the following concepts are described in the literature in terms of unit votes; we generalize them to positive real valued vote strengths, following the principle that a voter with two units of strength is equivalent to two voters with unit strength and identical preferences. 
As mentioned before, we set to achieve both proportional representation and security. We formalize each of these goals next. 

\textbf{Proportional representation:} We aim to guarantee that nominators are not \emph{underrepresented} relative to their stake by the elected validators. 
We highlight that diverse preferences and factions may naturally arise among nominators for reasons that range from economically and technically motivated to political, geographical, etc., and that preserving this diversity among the elected validators ensures that the network stays decentralized. 

Electoral system designs that achieve some form of proportional representation have been present in the literature for a very long time. Of special note is the work of Scandinavian mathematicians Edvard Phragm\'{e}n and Thorvald Thiele in the late nineteenth century \cite{phragmen1894methode, phragmen1895proportionella, phragmen1896theorie, phragmen1899till, thiele1895om, janson2016phragmen}. 
Several axioms have been recently proposed to define the property mathematically -- we mention the most relevant ones. 
\emph{Justified representation} (JR)~\cite{aziz2017justified} states that if a group of voters is cohesive enough in terms of candidate preferences and has a large enough aggregate vote strength, then it has a justified claim to be represented by a member of the committee.
\emph{Proportional justified representation} (PJR)~\cite{sanchez2017proportional} says that such a group deserves not just one but a certain number of representatives in proportion to its vote strength, where a committee member is said to represent the group as long as it represents any voter in it.
Finally, \emph{extended justified representation} (EJR)~\cite{aziz2017justified} strengthens this last condition and requires not only that the group have enough representatives collectively, but that some voter in it have enough representatives individually.
It is known that EJR implies PJR and PJR implies JR, while converse implications are not true~\cite{sanchez2017proportional}. %
For each of these properties, a committee election rule is said to satisfy said property if its output committee always satisfies it for any input instance. 
While classical election rules usually achieve JR, they fail the stronger properties of PJR and EJR, and up to recently there were no known efficient rules that satisfy either of the latter two. 
For instance, the proportional approval voting (PAV) method \cite{thiele1895om, janson2016phragmen} proposed by Thiele satisfies EJR but is NP-hard to compute, while efficient heuristics based on it, such as reweighted approval voting, fail PJR \cite{aziz2014computational, skowron2016finding, aziz2017justified}. 
Only in the last five years have polynomial-time rules that achieve PJR or EJR finally been proposed \cite{brill2017phragmen, sanchez2016maximin, aziz2018complexity, peters2019proportionality}. 

Among these axioms, \textbf{we set to achieve PJR}, defined formally in Section~\ref{s:prel}, for two reasons. 
First, because it is more \emph{Sybil resistant}~\cite{douceur2002sybil} than JR. 
Concretely, under JR a strategic voter may be incentivized to spread her stake over multiple nominator identities in order to gain more representatives, but she has no such incentive under PJR.  
Second, because PJR seems to be most compatible with our security objective. Indeed, as claimed in~\cite{peters2019proportionality} and \cite{lackner2020approval}, the PJR and EJR axioms seem to correspond to different notions of proportionality: while EJR is primarily concerned with the voters' satisfaction, PJR considers proportionality of the voters' decision power, and our security objective aligns best with the latter notion. 
We establish the incompatibility between EJR and our security objective further below.

\textbf{Security:} 
As is the case in any PoS-based blockchain network, under NPoS the basic security assumption is that most of the stake is held by actors who behave honestly or rationally. Under this assumption, we consider an adversary that attempts to carry out an attack on the network, and has the power to create any number of identities including both nominators and candidates (via Sybil behavior), yet has a bounded stake budget. 
Depending on the type of attack, in order to succeed he will require that a minimum number of candidates under his control get elected in the committee, and he may recur to strategic voting to achieve this. Therefore, the security level corresponds to how difficult it is for a voter or group of voters with limited aggregate voting strength to gain \emph{overrepresentation} in the elected committee. 

Further formalizing our problem, we consider finite sets $N$ and $C$ of voters and candidates respectively, where every voter $n\in N$ provides a list $C_n\subseteq C$ of approved candidates and has a vote strength $s_n$. 
%There is also a target number $1\leq k< |C|$ of candidates to elect.
Suppose we want to make it as difficult as possible for an adversary to gain a certain threshold $1\leq r\leq k$ of representatives within the $k$-validator committee. 
Then, our goal would be to elect a committee $A\subseteq C$ that maximizes 
$$\min_{A'\subseteq A, \  |A'|=r} \quad \sum_{n\in N: \ C_n\cap A'\neq \emptyset} s_n.$$ 
For any subset $A'\subseteq A$ of $r$ seats in committee $A$, the quantity above is the aggregate vote strength that is backing any seat in $A'$. In our application, this quantity also corresponds to the total collateral susceptible to being lost if $A'$ carries out an attack; hence, maximizing this amount not only makes it difficult for the adversary to gain enough representatives, but also costly to attack if he does. Of course, on top of the potential loss of collateral, the adversary must also consider the potential loss of representation in future elections, which translates to loss of future payouts. %

We thus obtain a different optimization objective for each value of threshold $r$. 
If we are only concerned about a particular threshold, we can fix the corresponding objective. 
For example, for $r=1$, the objective is equivalent to the classical multiwinner approval voting rule: selecting the $k$ candidates $c\in C$ with highest total approval $\sum_{n\in N: \ c\in C_n} s_n$. 
Or, we could set $r$ to $\lceil k/3\rceil$ or to $\lceil k/2\rceil$, which are respectively the thresholds required to carry out a successful attack in classical Byzantine fault tolerant consensus~\cite{pease1980reaching} and in Nakamoto consensus~\cite{stifter2018agreement}. 
Yet, there are several types of attack vectors requiring different thresholds, and some attack attempts may succeed with higher probability when there are more attacking validators. %
For example, in a blockchain network that features a sharded architecture, validators may split into small, randomly generated \emph{commissions} in order to process multiple blocks in parallel, such that each block is subjected to several sequential approval rounds by different commissions. In particular, a threshold of, say, $5\%$ of validators may be enough to make an invalid block pass the first approval round, if all validators in the first commission are adversarial. 
While such an attack attempt still has a very low success probability thanks to the additional approval rounds by other randomly selected commissions, the security level is considerably stronger if one has a guarantee that any such attempt will be costly for the adversary, so that it makes no economic sense for a rational agent to ever try it. %
%\footnote{We remark that in this example, we begin with the assumption that active validators have equal power in consensus (or at least in the approval of shard blocks), and conclude that they need to have a somewhat uniform distribution of stake backings. It may be then natural to ask why not just give validators a power in consensus proportional to their stake backings. The problem with that solution is that a single adversarial validator controlling $5\%$ of the total stake could form a commission on its own, making it much easier for the adversary to attempt attacks.} 
Hence, a more pragmatic approach is to incorporate threshold $r$ into the objective and maximize \emph{the least possible cost per seat over all thresholds}, i.e.,  
\begin{align}\label{eq:security}
    \max_{A\subseteq C, \ |A|=k} \quad \min_{A'\subseteq A, \ A'\neq \emptyset} \quad \frac{1}{|A'|} \sum_{n\in N: \ C_n\cap A' \neq \emptyset} s_n.
\end{align}

We establish in Theorem~\ref{thm:equivalence} that this objective is equivalent to the \textbf{maximin support objective}, recently introduced by Sánchez-Fernández et al.~\cite{sanchez2016maximin}, which we thus set to optimize. 
We define it formally in Section~\ref{s:prel}.
%To define this last objective, which we do formally in Section~\ref{s:prel}, one needs the election rule to establish not only a winning committee $A\subseteq C$, but also a \emph{vote distribution}; that is, a fractional distribution of each voter $n$'s vote strength $s_n$ among her approved committee members in $C_n\cap A$.%
%\footnote{This is called a \emph{support distribution function} in~\cite{sanchez2016maximin}, and is related to the notion of a \emph{price system} in~\cite{peters2019proportionality}.} 
%For instance, for voter $n$ the election rule may assign a third of $s_n$ to $c_1$ and two thirds of $s_n$ to $c_2$, where $c_1, c_2\in C_n\cap A$. 
%The objective is then to maximize, over all possible committees and distributions, the least amount of vote assigned to any committee member. 
%We observe here that unlike most other applications of multiwinner elections, in NPoS there is practical utility in computing a vote distribution from nominators to the elected validators: by reversing its sense, it establishes the exact way in which the validators' payouts or penalties must be distributed back to the nominators.
The authors in~\cite{sanchez2016maximin} remark that in its exact version, maximin support is equivalent to another objective, $\maxphragmen$, devised by Phragm\'{e}n and recently analyzed in~\cite{brill2017phragmen}, and in this last paper it is shown that $\maxphragmen$ is NP-hard and incompatible with EJR. 
Thus, the same hardness and incompatibility with EJR holds true for our security objective. 
%To the best of our knowledge, the approximability of maximin support has not previously been studied.

\paragraph{Our contribution.}
Our security analysis for the selection of validators leads us to pursue the maximin support objective, which prevents overrepresentation. Conversely, we equate our proportionality goal to the PJR property, which prevents underrepresentation. 
We show that these goals are compatible and complement each other well, and prove the existence of efficient election rules that achieve guarantees on both of them. 

\begin{theorem}\label{thm:intro1}
There is an efficient rule $\phragmms$ for approval-based committee elections that simultaneously achieves the PJR property and a 3.15-approximation guarantee for maximin support.
\end{theorem}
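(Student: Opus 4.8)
The election rule will be a Phragm\'{e}n-style sequential algorithm, which I will call $\balanced$. It builds the committee over $k$ rounds while maintaining a current partial committee $A$ together with a \emph{support distribution}: a fractional assignment of every voter $n$'s strength $s_n$ among the members of $A\cap C_n$, inducing a support $\mathrm{supp}(c)=\sum_n f_{n,c}$ on each elected $c$. At each round, for every not-yet-elected candidate $c$ the algorithm computes a \emph{score} via the subroutines $\interval$, $\maxprescore$ and $\maxscore$ --- the largest $t$ for which $c$ can be added and the distribution rebalanced so that every member of $A\cup\{c\}$ keeps support at least $t$; this is a one-dimensional water-filling computation over the approvers of $c$, sorted by how much strength each can release, and runs in near-linear time. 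The algorithm elects the highest-scoring candidate, updates $(A,f)$ with $\ins$, and --- this is the refinement of $\balanced$ over the plain $\phragmms$ iteration --- finishes by replacing $f$ with a \emph{balanced} (maximin) support distribution for the final committee, computed by the balancing subroutine $\bal$ via parametric max-flow. The whole rule runs in polynomial time.

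\emph{The approximation guarantee.} Let $A^{\ast}$ be an optimal committee with maximin support value $t^{\ast}$, witnessed by a distribution $f^{\ast}$ giving every $c^{\ast}\in A^{\ast}$ support at least $t^{\ast}$; let $A$ be the committee returned by $\balanced$ and $t=\min_{c\in A}\mathrm{supp}(c)$ its value. I must show $t\ge t^{\ast}/3.15$. The core is a charging argument: at the round where $\balanced$ committed the candidate that ends up worst-supported (one may in fact run it at every round), the score it selected was at least the then-current score of each still-available $c^{\ast}\in A^{\ast}\setminus A$. I would lower-bound $\sum_{c^{\ast}\in A^{\ast}\setminus A}(\text{then-current score of }c^{\ast})$ using $f^{\ast}$: the strength that $f^{\ast}$ routes into $A^{\ast}\setminus A$ is ``fresh'' capacity still available to the algorithm, whereas the strength $f^{\ast}$ routes into $A^{\ast}\cap A$ is already witnessed by the supports the algorithm has accumulated on those common candidates. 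Averaging over the $|A^{\ast}\setminus A|$ relevant rounds, and using convexity of the water-filling function together with the fact that supports only grow along the execution, yields an inequality of the shape $t\,\bigl(\alpha-\beta\,t/t^{\ast}\bigr)\ge \gamma\,t^{\ast}$; optimizing the free parameters of the argument produces the constant $3.15$.

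\emph{The PJR property.} For this I would isolate a structural characterization: a committee $A$ that admits a balanced support distribution $f$ under which \emph{no} unelected candidate has pre-score reaching the quota $q=s(N)/k$ necessarily satisfies PJR, where the pre-score of $c\notin A$ is (informally) the amount of strength its approvers could direct to $c$ on top of sustaining the relevant supports. Concretely, were PJR violated by a cohesive group $N'$ with $s(N')\ge \ell q$ that has at least $\ell$ commonly-approved candidates but fewer than $\ell$ representatives in $A$, one of those commonly-approved candidates $c$ would be unelected, and --- since at most $\ell-1$ of $N'$'s representatives are in $A$ --- reclaiming the excess strength those few representatives hold over level $q$ and adding the unassigned strength of $N'$ leaves $c$ with pre-score at least $s(N')-(\ell-1)q\ge q$, contradicting the hypothesis. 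It then remains to check that the committee returned by $\balanced$ meets this hypothesis: electing the maximum-score candidate each round, followed by the concluding $\bal$ step (which only raises supports and never creates a high-pre-score candidate), keeps every unelected candidate's pre-score below $q$; this is where one leans on the exchange/optimality properties of balanced distributions. Hence the output is PJR.

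\emph{Main obstacle.} The delicate part is the approximation analysis --- separating ``fresh'' from ``already-counted'' capacity cleanly in the charging step, and then squeezing the resulting recurrence down to a constant as small as $3.15$ rather than merely ``some constant''. Once the pre-score characterization of PJR is in place, the PJR half is comparatively routine; combining the two halves gives the theorem, and by Lemma~\ref{lem:equivalence} the same $3.15$-approximation holds for the security objective~\eqref{eq:security}.
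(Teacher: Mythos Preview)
Your high-level architecture matches the paper's: the rule is $\balanced$, the PJR half goes through a ``no unelected candidate has large (pre-)score'' criterion proved by the same contrapositive you sketch, and the approximation half comes from showing that at every round some $c^{\ast}\in A^{\ast}\setminus A$ has score at least $t^{\ast}/3.15$, which is then threaded through by induction. So the skeleton is right. But several load-bearing details are either wrong or missing.

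\textbf{Rebalancing must happen every round.} Your description (``finishes by replacing $f$ with a balanced \ldots distribution for the final committee'', and later ``the concluding $\bal$ step'') reads as a single terminal balancing. The paper rebalances after \emph{each} insertion, and this is essential: the lower bound on the best score in $A^{\ast}\setminus A$ (Lemma~\ref{lem:N_a} and Lemma~\ref{lem:candidate315}) requires the current partial solution to be balanced. With only a final balancing, the per-round score bound you want to invoke is simply not available.

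\textbf{Monotonicity claims are false as stated.} Supports do \emph{not} ``only grow along the execution'': by Lemma~\ref{lem:2balanced}, when the committee grows, individual supports can only \emph{decrease}. Likewise, the balancing step does not ``only raise supports''; it raises the minimum support but may lower others. What the paper actually uses is: (i) after rebalancing, the least support is at least as large as before; and (ii) scores are monotone non-increasing as the committee grows (again Lemma~\ref{lem:2balanced}). These give the inductive inequality $supp_{w_i}(A_i)\ge \max_{c'\notin A_i} score(c')$ (Lemma~\ref{lem:315localoptimality}), which is what drives both PJR (via Lemma~\ref{lem:localopt}) and the approximation.

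\textbf{The $3.15$ derivation is not a simple charging/averaging.} Your sketched inequality $t(\alpha-\beta t/t^{\ast})\ge \gamma t^{\ast}$ is not how the constant arises. The paper's key technical step (Lemma~\ref{lem:N_a}) uses a flow-decomposition argument on $w^{\ast}-w$ to exhibit, for each parameter $a\in[0,1]$, a voter set $N(a)$ adjacent to $A^{\ast}\setminus A$ with $\sum_{n\in N(a)} s_n\ge |A^{\ast}\setminus A|(1-a)t^{\ast}$ and with all neighbouring supports at least $a t^{\ast}$. Plugging this into the pre-score sum and integrating over $a$ (Lemma~\ref{lem:Lebesgue}) yields $\sum_{c'\in A^{\ast}\setminus A} prescore(c',a t^{\ast})\ge |A^{\ast}\setminus A|\cdot a t^{\ast}(1/a-1+\ln a)$; choosing $a=1/3.15$ makes the bracket $\ge 1$, and averaging gives a candidate with $score\ge t^{\ast}/3.15$. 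The ``fresh vs.\ already-counted capacity'' intuition is in the right spirit, but the mechanism is a parametric flow bound plus an integral, not a recurrence in $t$.

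In short: tighten the algorithm description to rebalance every round, replace the monotonicity claims by the correct ones (least support non-decreasing, scores non-increasing), and for the constant you will need the flow-based voter-set lemma and the integral optimisation rather than the inequality shape you propose.
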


To complement this result, we also provide a new hardness result for maximin support, which implies that the problem does not admit a PTAS%
\footnote{A \emph{polynomial time approximation scheme} (PTAS) for an optimization problem is an algorithm that, for any constant $\eps>0$ and any given instance, returns a $(1+\eps)$-factor approximation in polynomial time.}  unless P=NP.  
To the best of our knowledge, ours constitutes the first analysis of approximability not only for maximin support, but for any Phragm\'{e}n objective. 
In contrast, several approximation algorithms for Thiele objectives have been proposed; see~\cite{lackner2020approval} for a survey. 
Next comes the question of applicability: as mentioned previously, the blockchain architecture adds stringent constraints to computations. However, if the output can be \emph{verified} much faster than it can be computed from scratch, then the task can be implemented as a verifiable computing scheme. This is the case for our new election rule.

\begin{theorem}\label{thm:intro2}
There is a verification process that takes as input the election instance and an arbitrary solution to it, such that if the solution passes it then it is guaranteed to satisfy the PJR property and a 3.15-factor approximation for maximin support. 
Further, the output of the $\phragmms$ rule always passes this test.
Finally, the test has a runtime linear in the size of the input, and can be parallelized into multiple computing units each with a runtime linear in the number of candidates and independent of the number of voters. 
\end{theorem}

We remark that testing whether an arbitrary solution satisfies PJR is coNP-complete~\cite{aziz2018complexity}, and that passing the test above is a sufficient but not a necessary condition for a solution to have the two aforementioned properties; hence, the fact that the output of our proposed rule passes the test is not evident.
This result enables the first blockchain implementation of a validator selection protocol with strong theoretical guarantees on security and proportionality. 
%We provide further details for such a protocol in Section~\ref{s:implement}.
%
Finally, we derive from the new rule a post-computation which, when paired with any approximation algorithm for maximin support, makes it also satisfy PJR in a black-box manner.

\begin{theorem}\label{thm:intro3}
There is an efficient computation that takes as input an election instance and an arbitrary solution to it, and outputs a new solution which a) is no worse than the input solution in terms of the maximin support objective, b) satisfies the PJR property, and in particular c) can be verified to satisfy PJR in linear time.
\end{theorem}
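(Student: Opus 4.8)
The plan is to give an efficient local-search procedure, which I will call $\local$, that begins from the input solution and repeatedly applies a single Phragm\'{e}n-style $\ins$ step whenever such a step can neutralize a possible PJR violation, halting once the current solution carries an easily-checkable certificate of PJR.

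First I would normalize the input. Given the input committee $A$ --- together with a support assignment, or after computing a balanced one (cf.\ Section~\ref{s:prel}) --- let $w$ be a balanced support assignment for $A$, so that $\min_{c \in A}\mathrm{supp}_w(c)$ equals the maximin support value of $A$; passing to $w$ never decreases the objective. Write $q := \tfrac{1}{k}\sum_{n \in N} s_n$ for the PJR quota. For $c \notin A$, define its \emph{prescore} (relative to $w$) as the largest support $c$ could be given by reallocating stake only from voters $n$ with $c \in C_n$, under the constraints that no candidate currently below $q$ has its support reduced and no candidate currently at or above $q$ is reduced below $q$; this value is computed efficiently by the same water-filling primitives ($\interval$, $\maxprescore$) that drive our main algorithm. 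The procedure $\local$ then iterates: while some $c \notin A$ has prescore $\geq q$, perform $\ins(c)$ raising $c$'s support to exactly $q$ (funded by unused stake together with the over-quota candidates shaved down to $q$), delete from the committee the member of least support, and update $w$; halt when no such $c$ exists.

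Part (a) is built in: each step inserts a candidate with support $q$, deletes one with support $<q$ (when some $c$ with prescore $\geq q$ exists the minimum support is necessarily $<q$, since otherwise all supports equal $q$ and every prescore is $0$), and only shaves over-quota candidates towards but never below $q$, so the minimum support --- hence the maximin support value --- never drops. For part (b), suppose the halted committee $A$ violated PJR, witnessed by a level $\ell$ and a cohesive group $N'$ with $s(N') \geq \ell q$, $\bigl|\bigcap_{n \in N'} C_n\bigr| \geq \ell$ and $\bigl|A \cap \bigcup_{n \in N'} C_n\bigr| \leq \ell-1$. Choose a commonly-approved non-elected candidate $c^* \in \bigl(\bigcap_{n \in N'} C_n\bigr)\setminus A$, and split the $\leq \ell-1$ candidates in $A \cap \bigcup_{n \in N'} C_n$ according to whether their support is $\geq q$ or $<q$. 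Using that $c^*$ can appropriate from each over-quota candidate $c'$ at least $\min\bigl\{\,\text{($N'$-stake on $c'$)},\ \mathrm{supp}_w(c')-q\,\bigr\} \geq \text{($N'$-stake on $c'$)} - q$, and that the $N'$-stake resting on the under-quota candidates is bounded by their total support, one finds that the prescore of $c^*$ is at least $s(N') - (\ell-1)q \geq q$, contradicting the halting condition; hence the output satisfies PJR.

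Part (c), and with it the PJR half of the test in Theorem~\ref{thm:intro2}, comes for free: the prescore of $c$ is bounded above by the additive quantity $\sum_{n\in N:\,c\in C_n}\bigl(s_n - \sum_{c'\in C_n\cap A:\,\mathrm{supp}_w(c')\leq q} w_{n,c'}\bigr)$, computable in a single linear pass once all supports are known, and the pigeonhole argument above already goes through verbatim with this upper bound in place of the true prescore --- so ``this quantity is $<q$ for every $c \notin A$'' is itself a linear-time certificate of PJR, and $\local$'s output satisfies it. The main obstacle I foresee is the running time of $\local$, i.e.\ bounding the number of $\ins$ steps. Here I would argue that each step strictly increases $\bigl|\{c\in A:\mathrm{supp}_w(c)\geq q\}\bigr|$: the newcomer reaches support $q$, the evicted candidate was below $q$, and by construction no other candidate is pushed below $q$ --- so at most $k$ steps occur, each running in near-linear time. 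The delicate point the prescore definition is designed to address is exactly this last guarantee, namely that $\ins$ can always lift the new candidate all the way to the quota level without collateral damage to candidates already at or above it.
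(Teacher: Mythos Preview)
Your local-search outline is essentially the paper's $\local$ procedure (Algorithm~\ref{alg:localpjr}) specialized to $\eps\to\infty$, and your termination argument (that $|\{c\in A:\ supp_w(c)\geq q\}|$ strictly increases, hence at most $k$ iterations) is correct and parallels the paper's point~\ref{item:infinity} of Theorem~\ref{thm:enabler}. But part (c) has a genuine gap.

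You are implicitly juggling three different quantities: your ``true'' prescore (by your verbal definition, a max-flow value), the paper's multiplicative prescore $prescore(c',q)=\sum_{n\in N_{c'}} slack(n,q)$ that $\maxprescore$ actually computes, and your additive upper bound. These satisfy (paper's prescore) $\leq$ (true prescore) $\leq$ (upper bound). Your halting condition is on the true prescore, but your certificate is the upper bound, and since the latter can strictly exceed the former the output need not pass your test. Concretely: take $k=2$, voters $n_1$ with $s_{n_1}=4$ approving $\{c_1,c'\}$ and $n_2$ with $s_{n_2}=2$ approving $\{c_2\}$, so $q=3$; let $A=\{c_1,c_2\}$ with the balanced weights $w_{n_1c_1}=4$, $w_{n_2c_2}=2$. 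The true prescore of $c'$ is $1$ (you may shave $c_1$ only down to $q=3$), so your $\local$ halts immediately; yet your additive bound for $c'$ is $s_{n_1}=4\geq q$, and the certificate fails. (This $A$ does satisfy PJR, so it is the \emph{test} that fails, not the solution.) A second, related imprecision: your ``true'' prescore is a flow computation and is not what $\maxprescore$ returns, so appealing to that primitive does not justify linear time per step.

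The clean fix---and what the paper does---is to use the paper's multiplicative prescore at threshold $q$ for \emph{both} the loop condition and the certificate. It is exactly the support that $\ins$ delivers to the newcomer (so your counting termination argument survives verbatim), and by Lemma~\ref{lem:locality} the condition ``$prescore(c',q)<q$ for all $c'\notin A$'' is itself a linear-time PJR certificate that the halted solution satisfies by construction. Your over/under-quota pigeonhole in part (b) is then replaced by the uniform bound $\min\{1,q/supp_w(c)\}\leq q/supp_w(c)$ used in the proof of Lemma~\ref{lem:locality}.
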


This result shows that PJR is strongly compatible with maximin support (unlike EJR) and can be easily added to future approximation algorithms that may be developed for this objective.

\paragraph{Organization of the paper and technical overview.}
In the next section we formalize the objectives of our multiwinner election problem and provide required technical definitions, and in Section~\ref{s:complexity} we present a thorough complexity analysis for maximin support, including both new approximability and hardness results. 
We also compare the performance, relative to this objective, of the most relevant election rules in the literature of proportional representation. 
Our comparison provides new tools to discern between these rules. For instance, the survey paper~\cite{lackner2020approval} mentions $\phragmen$~\cite{brill2017phragmen} and $\MMS$~\cite{sanchez2016maximin} as two efficient rules that achieve the PJR property and leaves as an open question which of the two is preferable, whereas we show that out of the two only the latter provides a constant-factor approximation guarantee for maximin support. 

In Section~\ref{s:heuristic} we prove Theorem~\ref{thm:intro1} and present the new rule $\phragmms$, that takes inspiration from $\phragmen$ but has a more involved candidate selection heuristic that allows for better guarantees for both of our objectives. 
In Section~\ref{s:local} we prove Theorem~\ref{thm:intro2} and explore how guarantees for these objectives can be efficiently verified on the output solution. 
To do so, we define a parametric version of PJR, and link it to a notion of local optimality for our new rule, which is easy to test. 
In Section~\ref{s:LS} we transform our rule into a local search algorithm, and use it to prove Theorem~\ref{thm:intro3}. 
Finally, in Section~\ref{s:implement} we provide further details about NPoS and sketch a proposal for a validator selection protocol that implements $\phragmms$ as a verifiable computing scheme. 
We conclude in Section~\ref{s:conc}. 
%The proof of Theorem~\ref{thm:intro3} has been omitted from the present version of this paper, and appears in the full version~\cite{cevallos2020verifiably}; in that proof, we transform our rule into a local search algorithm.

In our analyses, we build upon the notion of \emph{load balancing} used in $\phragmen$, and consider distributions of votes from voters to committee members as a network flow over the bipartite approval graph.  
We define what an (ideally) \emph{balanced distribution} is, and in Appendix~\ref{s:balanced} we provide an algorithm to compute one efficiently for a fixed committee, using notions of parametric flow. 
Then, we synthesize the strategies of several election rules in the literature according to how well they balance vote distributions, and apply the flow decomposition theorem to derive approximation guarantees. Most of the proofs involving network flow theory are delayed to Appendix~\ref{s:flow}. 
For the sake of completeness, in Appendix~\ref{s:algorithms} we present algorithmic considerations to speed up the computation of the new $\phragmms$ rule, and in Appendix~\ref{s:lazymms} we show how one can shave off a factor $\Theta(k)$ from the runtime of the $\MMS$ rule~\cite{sanchez2016maximin} by using the theoretical tool set developed in this paper. Some delayed proofs are presented in Appendix~\ref{s:proofs}.

%In our analyses, we build upon the notion of \emph{load balancing} used in $\phragmen$, and consider distributions of votes from voters to candidates as a network flow over the bipartite approval graph.  
%We define what an (ideally) \emph{balanced distribution} is, and use it to synthesize the strategies of several heuristics in the literature according to how well they balance distributions. Finally, we apply the flow decomposition theorem to derive approximation guarantees. 

%The following results have been delayed to the full version of the paper~\cite{cevallos2020verifiably}: a) the proof of Theorem~\ref{thm:intro3}, where we transform our rule into a local search algorithm; b) an efficient algorithm to compute a balanced distribution for a fixed winning committee, using notions of parametric flow; c) most proofs involving network flow theory; d) optimizations to speed up the new $\phragmms$ rule; e) a variant of the $\MMS$ rule~\cite{sanchez2016maximin} that shaves off a factor $\Theta(k)$ from its runtime by using the theoretical tool set developed in this paper; and f) proofs of some minor lemmas.

\section{Preliminaries}\label{s:prel}

Throughout the paper we consider the following approval-based multiwinner election instance. 
We are given a bipartite approval graph $G=(N\cup C, E)$ where $N$ is a finite set of voters and $C$ is a finite set of candidates. 
We are additionally given a vector $s\in\R^N$ of vote strengths, where $s_n$ is the strength of $n$'s vote, and a target number $k$ of candidates to elect, where $0< k<|C|$.
For each voter $n\in N$, $C_n:=\{c\in C: \ nc\in E\}$ represents her approval ballot, i.e., the subset of candidates that $n$ approves of, and for each candidate $c\in C$ we denote by $N_c:=\{n\in N: \ nc\in E\}$ the set of voters approving $c$, where $nc$ is shorthand for edge $\{n,c\}$. 
To avoid trivialities, we assume that the input graph $G$ has no isolated vertices. 
For any $c\in C\setminus A$, we write $A+c$ and $A-c$ as shorthands for $A\cup\{c\}$ and $A\setminus \{c\}$ respectively. 

\paragraph{Proportional justified representation.} 
The PJR property was introduced in~\cite{sanchez2017proportional} for voters with unit vote strength. We present its natural generalization to positive real valued strengths. 
A committee $A\subseteq C$ of $k$ members satisfies PJR if, for any group $N'\subseteq N$ of voters and any integer $0<r\leq k$, we have that 
\begin{itemize}
\item[a)] if $|\cap_{n\in N'} C_n|\geq r$
\item[b)] and $\sum_{n\in N'} s_n \geq r\cdot \hat{t}$, 
\item[c)] then $|A\cap (\cup_{n\in N'} C_n)|\geq r$,
\end{itemize}
where $\hat{t}:=\sum_{n\in N} s_n / k$. 
In words, if there is a group $N'$ of voters with at least $r$ commonly approved candidates, and enough aggregate vote strength to provide each of these candidates with a vote support of at least $\hat{t}$, then this group has a justified right to be represented by $r$ members in committee $A$, though not necessarily commonly approved. 
This right is justified because $\hat{t}$ is an upper bound on the average vote support that the full voter set $N$ can provide to any committee of $k$ members. 

\paragraph{Maximin support objective.} 
For the given instance, we consider a solution consisting of a tuple $(A,w)$, where $A\subseteq C$ is a committee of $k$ candidates, and $w\in\R^E$ is a vector of non-negative edge weights that represents a fractional distribution of each voter's vote among her approved candidates.%
\footnote{This edge weight vector is related to the notions of \emph{support distribution function} in~\cite{sanchez2016maximin} and \emph{price system} in~\cite{peters2019proportionality}. In particular, all election rules considered in this paper are \emph{priceable}, as defined in~\cite{peters2019proportionality}.} 
For instance, for a voter $n$ this distribution may assign a third of $s_n$ to $c_1$ and two thirds of $s_n$ to $c_2$, where $c_1, c_2\in C_n$.
Vector $w$ is \emph{feasible}%
\footnote{Intuitively, a feasible solution $(A,w)$ should also observe $w_{nc}=0$ for each edge $nc$ with $c\not\in A$. 
However, as this constraint can easily be enforced in post-computation, we ignore it so that the feasibility of a vector $w$ is independent of any committee.} 
if  % 
\begin{equation}
    \sum_{c\in C_n} w_{nc}\leq s_n \quad \text{ for each voter } n\in N. \label{eq:feasible}
\end{equation}

In our analyses, we will also consider \emph{partial} committees, with $|A|\leq k$. If $|A|=k$, we call it \emph{full}. 
All solutions $(A,w)$ in this paper are assumed to be feasible and full unless stated otherwise. 
Given a (possibly partial, unfeasible) solution $(A,w)$, we define the \emph{support} over the committee members as 
\begin{align}
\supp_w(c) &:=\sum_{n\in N_c} w_{nc} \quad \text{for each $c\in A, \quad$ and } \nonumber \\
\supp_w(A)& :=\min_{c\in A} \supp_w(c), \label{eq:support}
\end{align}
with the convention that $\supp_w(\emptyset)=\infty$ for any weight vector $w\in\R^E$. 
The maximin support objective, introduced in~\cite{sanchez2016maximin}, asks to maximize the least member support $\supp_w(A)$ over all feasible full solutions $(A,w)$. 

\paragraph{Balanced solutions.}
For a fixed committee $A$, a feasible weight vector $w\in\R^E$ that maximizes $\supp_w(A)$ can be found efficiently. 
However, we seek additional desirable properties for a weight vector which can still be achieved efficiently. We say that a feasible $w\in\R^E$ is \emph{balanced for $A$}, or that $(A,w)$ is a balanced solution, if
\begin{enumerate}
    \item it maximizes the sum of member supports, $\sum_{c\in A} \supp_w(c)$, over all feasible weight vectors, and 
    \item it minimizes the sum of supports squared, $\sum_{c\in A} \supp_w^2(c)$, over vectors that observe the property above. 
\end{enumerate}
In other words, a balanced weight vector maximizes the sum of supports and then minimizes their variance. 
%In the next lemma, whose proof is delayed to the full version~\cite{cevallos2020verifiably}, we establish some key properties that we exploit in our analyses.
In the next lemma, whose proof is delayed to Appendix~\ref{s:proofs}, we establish some key properties that we exploit in our analyses. 

\begin{lemma}\label{lem:balanced}
Let $(A,w)$ be a balanced, possibly partial solution. Then,
\begin{enumerate}
    \item vector $w$ simultaneously maximizes, for each $1\leq r \leq |A|$, the quantity $\min_{A'\subseteq A, |A'|=r} \ \sum_{c\in A'} \supp_{w'}(c)$ over all feasible weight vectors $w'\in\R^E$; 
		\item for each $n\in N$, $\sum_{c\in A\cap C_n} w_{nc}=s_n$ if $A\cap C_n\neq \emptyset$; and
    \item for each $n\in N$ and each candidate $c\in A\cap C_n$, if $w_{nc} > 0$ then $\supp_w(c)=\supp_w(A\cap C_n)$. 
\end{enumerate}
Furthermore, a feasible solution $(A,w)$ is balanced if and only if it observes properties 2 and 3 above, which can be tested in $O(|E|)$ time.
\end{lemma}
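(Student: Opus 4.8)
The plan is to reduce the whole lemma to one elementary inequality, two local exchange arguments, and one majorization argument, without invoking any heavy machinery. The elementary inequality is that for \emph{any} feasible $w$ and \emph{any} $S\subseteq A$,
\[ \sum_{c\in S} supp_w(c) \;=\; \sum_{c\in S}\sum_{n\in N_c} w_{nc} \;\le\; \sum_{n\in N:\; C_n\cap S\neq\emptyset} s_n \;=:\; f(S), \]
since every voter $n$ who approves a candidate in $S$ contributes a total of at most $s_n$ to $S$ by feasibility~\eqref{eq:feasible}; in particular $\sum_{c\in A} supp_w(c)\le f(A)$ always. (One may recognise $f$ as the rank function of a polymatroid and the lemma as a version of the ``minimum-norm base'' theorem, but a direct argument is cleaner here.)

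Next I would establish properties 2 and 3 by local exchange. For property 2: if some voter $n$ with $A\cap C_n\neq\emptyset$ had $\sum_{c\in A\cap C_n} w_{nc}<s_n$, then either $w$ has feasibility slack at $n$ or $n$ places positive weight on a candidate outside $A$; in either case we may move a positive amount of weight onto some $c\in A\cap C_n$, keeping $w$ feasible and strictly increasing $\sum_{c\in A} supp_w(c)$, contradicting the first balancedness requirement. Given property 2 the total $\sum_{c\in A} supp_w(c)$ is pinned at $f(A)$, and property 3 follows from a second exchange: if $w_{nc}>0$ but $supp_w(c)>supp_w(c')$ for some $c'\in A\cap C_n$, then moving a small weight $\eps>0$ from edge $nc$ to edge $nc'$ keeps $w$ feasible, leaves $\sum_{c\in A} supp_w(c)$ unchanged, and changes $\sum_{c\in A}(supp_w(c))^2$ by $-2\eps\,(supp_w(c)-supp_w(c'))+O(\eps^2)<0$ for small $\eps$, contradicting the second requirement.

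For property 1 I would exploit the level structure that properties 2 and 3 force. Let $t_1<\cdots<t_p$ be the distinct values of $supp_w(\cdot)$ on $A$, $L_j=\{c\in A: supp_w(c)=t_j\}$, $A_{\le j}=L_1\cup\cdots\cup L_j$, $m_j=|A_{\le j}|$, $S_j=\sum_{c\in A_{\le j}} supp_w(c)$. First claim: each prefix is \emph{tight}, $S_j=f(A_{\le j})$. Indeed any voter $n$ with $C_n\cap A_{\le j}\neq\emptyset$ has $\min_{c'\in A\cap C_n} supp_w(c')\le t_j$, so by property 3 all her weight lands in $A_{\le j}$ and by property 2 she spends all of $s_n$; summing over such voters gives $f(A_{\le j})$. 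Now fix $r$ and pick $j$ with $m_{j-1}<r\le m_j$, so $\min_{A'\subseteq A,|A'|=r}\sum_{c\in A'} supp_w(c)=S_{j-1}+(r-m_{j-1})t_j$. For an arbitrary feasible $w'$ write $y_c=supp_{w'}(c)$ and take $A'$ to be $A_{\le j-1}$ together with the $q:=r-m_{j-1}$ members of $L_j$ of least $y$-value. Tightness and the elementary inequality give $\sum_{c\in A_{\le j-1}} y_c\le S_{j-1}$ and $\sum_{c\in L_j} y_c\le S_j-\sum_{c\in A_{\le j-1}} y_c=|L_j|t_j+\bigl(S_{j-1}-\sum_{c\in A_{\le j-1}} y_c\bigr)$, hence the $q$ smallest members of $L_j$ sum to at most $q\,t_j+\tfrac{q}{|L_j|}\bigl(S_{j-1}-\sum_{c\in A_{\le j-1}} y_c\bigr)$; adding the two bounds and using $q\le|L_j|$, the nonnegative slack term cancels and $\sum_{c\in A'} y_c\le S_{j-1}+q\,t_j$. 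Thus $\min_{A'\subseteq A,|A'|=r}\sum_{c\in A'} supp_{w'}(c)\le S_{j-1}+q\,t_j=\min_{A'\subseteq A,|A'|=r}\sum_{c\in A'} supp_w(c)$, which is property 1.

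Finally, the ``if and only if''. The forward direction is precisely properties 2 and 3. For the converse, suppose $(A,w)$ is feasible and satisfies 2 and 3. Property 2 gives $\sum_{c\in A} supp_w(c)=f(A)$, the maximum, so the first balancedness requirement holds; and the previous paragraph used only properties 2 and 3, so it applies to $w$ itself and shows that for every $r$ the sum of the $r$ smallest values of $supp_w$ is at least that of any feasible weight vector. Equivalently, the support vector of $w$ is majorised by that of every feasible weight vector of the same (maximal) total; since $t\mapsto t^2$ is convex, Karamata's inequality yields $\sum_{c\in A}(supp_w(c))^2\le\sum_{c\in A}(supp_{w'}(c))^2$ for every such $w'$, so $w$ also meets the second requirement and is balanced. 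I expect property 1 — specifically the bookkeeping with the partial level $L_j$ and the cancellation of the slack term — to be the only delicate point; everything else is a one-line exchange or a standard inequality.
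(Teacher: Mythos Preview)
Your proof is correct, and the bookkeeping in property~1 does go through cleanly (the slack term $\delta:=S_{j-1}-\sum_{c\in A_{\le j-1}} y_c\ge 0$ enters with coefficient $-(1-q/|L_j|)\le 0$, so it cancels as you say). The route, however, is genuinely different from the paper's.

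The paper works entirely with network-flow arguments. For property~1 it assumes $F_r(w')>F_r(w)$ for some feasible $w'$, considers the flow $f=w'-w$, and uses the flow decomposition theorem to extract a simple path carrying positive flow from a high-support member to a low-support member; a separate auxiliary lemma then shows that the existence of such an augmenting path contradicts balancedness. The same path lemma handles property~3, and the converse of the ``if and only if'' is proved by showing, again via flow decomposition, that \emph{any} two feasible vectors satisfying 2 and~3 induce the \emph{identical} list of member supports---a uniqueness statement strictly stronger than balancedness. Your approach instead establishes that the level prefixes $A_{\le j}$ are tight against the set function $f(S)=\sum_{n:C_n\cap S\neq\emptyset} s_n$, then uses this tightness directly to bound $\min_{|A'|=r}\sum_{c\in A'} y_c$, and finally closes the converse with majorization and Karamata. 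This is more elementary (no flow decomposition, just two local swaps and an averaging bound) and makes the underlying polymatroid structure explicit; the paper's approach buys consistency with the flow machinery used throughout the rest of the paper and delivers the uniqueness of the support profile as a by-product rather than an implicit corollary.
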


Notice that by setting $r=1$ in the first property, we obtain that balanced vector $w$ indeed maximizes the least member support $\supp_w(A)$ over all feasible weight vectors. 
More generally, for each threshold $r$ the quantity defined in the first property defines a lower bound on the cost for an adversary to get $r$ representatives in the validator committee in NPoS, so maximizing these objectives simultaneously for all thresholds $r$ aligns with our security objective as it makes any attack as costly as possible. 
The second point follows from the fact that the sum of member supports is maximal, so all the available vote strength must be distributed among candidates in $A$. 
The third point is a consequence of having the supports as evenly distributed as possible within $A$: for each voter $n$, all of her vote strength should be assigned exclusively to the candidates in $A\cap C_n$ with least support $\supp_w(A\cap C_n)$. 

In Appendix~\ref{s:balanced} we present algorithms for computing a balanced weight vector for a given committee $A$. 
%In the full version we present algorithms for computing a balanced weight vector for a given committee $A$.
In particular, we prove there that one can be found in time $O(|E|\cdot k + k^3)$ using parametric flow techniques, which to the best of our knowledge is the fastest algorithm in the literature even for the simpler problem of maximizing $\supp_w(A)$.

\begin{remark}\label{rem:bal}
In the remainder of the paper, we denote by $\bal$ the time complexity of finding a balanced weight vector for a given (possibly partial) committee, depending on the precise algorithm used.
\end{remark}

\begin{remark}
In all algorithms analyzed, we assume that all numerical operations take constant time.
\end{remark}

\section{Analysis of maximin support}\label{s:complexity}

Consider a multiwinner election instance $(G=(N\cup C, E), s, k)$ as defined in Section~\ref{s:prel}. 
In this section we present a computational complexity analysis of the maximin support problem, including new results both on approximability and on hardness. 
We start by establishing that our security objective \eqref{eq:security} is indeed equivalent to maximin support. 
%The proof of the next theorem is delayed to the full version of the paper~\cite{cevallos2020verifiably}.
%The proof of the next theorem is delayed to Appendix~\ref{s:proofs}.

\begin{theorem} \label{thm:equivalence} 
For a fixed committee $A$, 
$$\max_{\text{feasible } w} \ \supp_w(A) = \min_{A' \subseteq A, \ A'\neq \emptyset} \ \frac{1}{|A'|} \sum_{n\in \cup_{c\in A'} N_c} s_n.$$
Hence maximin support, the problem of maximizing the left-hand side over all full committees $A$,  
is equivalent to that of maximizing the right-hand side over all full committees $A$. 
Furthermore, this equivalence preserves approximations, as any committee provides the same objective value to both problems.
\end{theorem}

%\begin{proof} %[Proof of Theorem~\ref{thm:equivalence}]
\begin{proof}
For a fixed committee $A$, let $w$ be a feasible edge weight vector that maximizes $\supp_w(A)$, and let $A'\subseteq A$ be the non-empty subset that minimizes the expression $\frac{1}{|A'|} \sum_{n\in \cup_{c\in A'} N_c} s_n$. Then, 
\begin{align*}
    \supp_w(A) &\leq \supp_w(A') \\
		& \leq \frac{1}{|A'|} \sum_{c\in A'} \supp_w(c) = \frac{1}{|A'|} \sum_{c\in A'} \sum_{n\in N_c} w_{nc} \\ 
    & = \frac{1}{|A'|}  \sum_{n\in \cup_{c\in A'} N_c} \quad \sum_{c\in C_n\cap A'} w_{nc} \\
    & \leq \frac{1}{|A'|} \sum_{n\in \cup_{c\in A'} N_c} s_n,
\end{align*}
where the second inequality follows by an averaging argument, and the last inequality follows by feasibility. 
This proves one inequality of the claim. 

To prove the opposite inequality, we assume for convenience and without loss of generality that $w$ is balanced for $A$. 
Let $A''\subseteq A$ be the set of committee members with least support, i.e., those $c\in A$ with $\supp_w(c)=\supp_w(A)$. Then,
\begin{align*}
    \supp_w(A) &= \supp_w(A'') = \frac{1}{|A''|} \sum_{c\in A''} \supp_w(c) \\
		& = \frac{1}{|A''|} \sum_{c\in A''} \sum_{n\in N_c} w_{nc} \\
    &= \frac{1}{|A''|} \sum_{n\in \cup_{c\in A''} N_c} \ \sum_{c\in C_n\cap A''} w_{nc} \\
    &= \frac{1}{|A''|} \sum_{n\in \cup_{c\in A''} N_c} \bigg( \sum_{c\in C_n\cap A} w_{nc} 
		- \sum_{c\in C_n \cap (A\setminus A'')} w_{nc}\bigg)\\
		&= \frac{1}{|A''|}\sum_{n\in \cup_{c\in A''} N_c} s_n 
		\geq \frac{1}{|A'|}\sum_{n\in \cup_{c\in A'}N_c} s_n,
\end{align*}
where in the last line we used the fact that for each voter $n\in \cup_{c\in A''} N_c$, the term $\sum_{c\in C_n\cap A} w_{nc}$ equals $s_n$ by property 2 of Lemma~\ref{lem:balanced}, while the term $\sum_{c\in C_n \cap (A\setminus A'')} w_{nc}$ vanishes by property 3 of Lemma~\ref{lem:balanced} and the definition of set $A''$. 
This proves the second inequality and completes the proof.
\end{proof}

\begin{figure}[htb]
  \centering
	\includegraphics[width=0.33\linewidth]{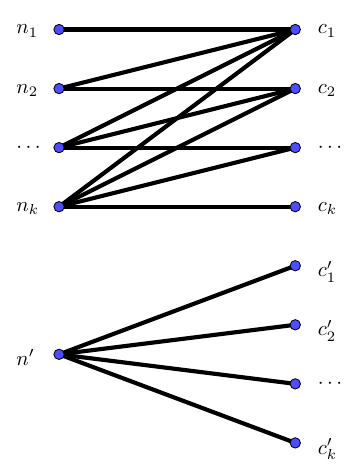}
  \caption{In this example, the last voter and last $k$ candidates are considered adversarial.}
  \label{fig:example}
\end{figure}

Next, we compare how some of the most relevant efficient election rules perform relative to a particular example, and highlight the effectiveness of the maximin support objective for preventing overrepresentation. 
Fix a committee size $k$ and consider an unweighted instance with $k+1$ voters and $2k$ candidates. 
For $1\leq i\leq k$, voter $n_i$ supports the candidate set $\{c_1, \cdots, c_i\}$, and the last voter $n'$ supports the candidate set $\{c_1', \cdots, c_k'\}$. See Figure~\ref{fig:example}.
We assume that an adversary controls the last voter and last $k$ candidates, and will use any elected representatives to disrupt the duties of the committee. 
How many representatives will he get? 

\begin{lemma}\label{lem:badexamples}
For any $\alpha \geq 1$, in the example above the number of adversarial candidates elected by a rule with an $\alpha$-approximation guarantee for maximin support is at most $\lfloor \alpha \rfloor$. 
On the other hand, this number is $\Omega(\sqrt{k})$ for proportional approval voting (PAV), and $\Omega(\log k)$ for both $\phragmen$ and Rule X. 
Hence, none of these three rules guarantees a constant-factor approximation. 
\end{lemma}

The proof is delayed to Appendix~\ref{s:proofs}. 
%The proof is delayed to the full version. 
For definitions of these rules, we direct the reader to the survey paper~\cite{lackner2020approval}.
We only remark here that Rule X~\cite{peters2019proportionality} is a recently proposed rule inspired in $\phragmen$ -- much as our own rule presented in the next section -- that achieves EJR. 
Thus, even election rules that achieve PJR and EJR may fare poorly in terms of our security objective.

The maximin support problem was introduced in~\cite{sanchez2016maximin}, where it was observed to be NP-hard. We show now a stronger hardness result for it, which in particular rules out the existence of a PTAS. %
%\footnote{A \emph{polynomial time approximation scheme} (PTAS) for an optimization problem is an algorithm that, for any constant $\eps>0$ and any given instance, returns a $(1+\eps)$-factor approximation in polynomial time.} 

\begin{theorem}
For any constant $\eps>0$, it is NP-hard to approximate the unweighted maximin support problem within a factor $\alpha=1.2-\eps$.
\end{theorem}

\begin{figure}[htb]
  \centering
  \includegraphics[width=0.7\linewidth]{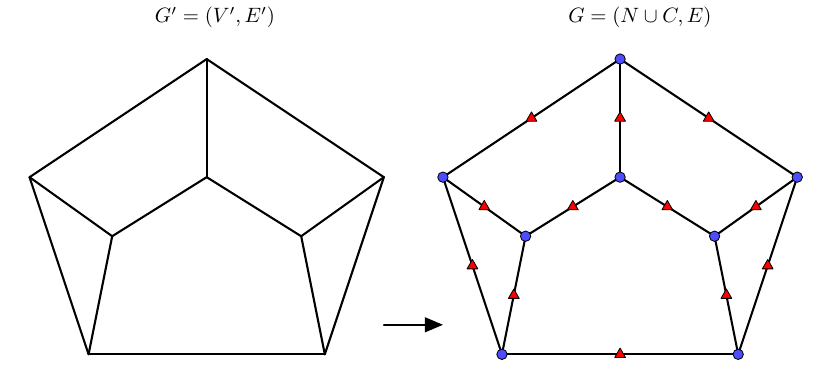}
  \caption{Reducing an instance of the $k$-independent set problem on cubic graphs to one of the maximin support problem. Set $N$ is represented by triangles and $C$ by circles.}
  \label{fig:hardness}
\end{figure}

\begin{proof}
We present a reduction from the $k$-independent set problem on cubic graphs, which is known to be NP-hard~\cite{johnson1979computers}. In this problem, one is given a graph $G'=(V',E')$ where every vertex has degree exactly 3, and a parameter $k'$, and one must decide whether there is a vertex subset $I\subseteq V'$ of size $k'$ such that no two vertices in $I$ are adjacent, i.e., $I$ is an independent set. 
Given such an instance, we define an instance $(G=(N\cup C, E), s, k)$ of maximin support where $k=k'$, $C=V'$ (each vertex in $V'$ corresponds to a candidate), and $N=E'$ with $s_n=1$ and $C_n=n$ for each $n\in N$ (each edge in $E'$ corresponds to a voter with unit vote that approves of the two candidates on its endpoints); see Figure~\ref{fig:hardness}.
Notice that in this instance, each candidate is approved by exactly 3 voters, and two candidates $c, c'$ have an approving voter in common if and only if $c$ and $c'$ are adjacent in $V'$.

Hence, if there is an independent set $I$ of size $k$ in $G'$, the same committee in $G$ can be assigned a vote distribution so that each member in it receives a support of 3 units, which is clearly maximal. 
On the other hand, if there is no independent set of size $k$ in $G'$, then for any solution $(A,w)$ of the maximin support instance there must be two committee members $c,c'\in A$ who have an approving voter in common. 
These two members have at most five voters approving either of them, so one of them must have a support of at most $5/2$. This shows that $\supp_w(A)\leq 5/2$ for any feasible solution $(A,w)$. 
Finally, the ratio between the objective values $3$ and $5/2$ is $6/5=1.2>\alpha$, so the assumed $\alpha$-approximation algorithm for maximin support would allow us to distinguish between these two cases and decide whether such an independent set $I$ exists. This completes the proof.
\end{proof}

In contrast, we show that the recently proposed $\MMS$ rule~\cite{sanchez2016maximin}, known to achieve the PJR property, also provides a 2-approximation for maximin support. 
In simple terms, $\MMS$ (Algorithm~\ref{alg:mms}) starts with an empty committee $A$ and iteratively adds candidates to it; in each iteration, it computes a balanced weight vector for each possible augmented committee that can be obtained by adding a candidate, and then inserts the candidate whose corresponding augmented committee has the highest least member support.

\begin{algorithm}[htb]
\SetAlgoLined
\KwData{Bipartite approval graph $G=(N\cup C, E)$, vector $s$ of vote strengths, target committee size $k$.}
Initialize $A=\emptyset$\ and $w=0\in \R^E$\;
\For{$i$ from $1$ to $k$}{
	\For{each candidate $c\in C\setminus A$}{
		Compute a balanced\footnotemark ~weight vector $w_c$ for $A+c$\;
		}
Find $c_i\in \arg\max_{c\in C\setminus A} \supp_{w_c}(A+c)$\;
Update $A\leftarrow A+c_i$ and $w\leftarrow w_{c_i}$\;
}
\Return $(A,w)$\;
\caption{$\MMS$, proposed in~\cite{sanchez2016maximin}}
\label{alg:mms}
\end{algorithm}
\footnotetext{The original algorithm in~\cite{sanchez2016maximin} does not compute a balanced weight vector, but any vector $w$ that maximizes $\supp_w(A)$, which is sufficient for our analysis. 
We consider balanced vectors here for ease of comparison with other algorithms in the paper and because this requirement does not seem to cause any increase in complexity.}

We will need the following key technical result, whose proof uses the flow decomposition theorem and is delayed to Appendix~\ref{s:flow}. 
%We will need the following key technical result, whose proof uses the flow decomposition theorem and is delayed to the full version.  
In simple terms, the lemma guarantees that for any partial solution, it is always possible to add a new candidate with a support of at least half the optimal maximin support value.

\begin{lemma}\label{lem:2sols}
If $(A^*, w^*)$ is an optimal solution to maximin support, and $(A,w)$ is a partial solution with $|A|\leq k$ and $A\neq A^*$, there is a candidate $c'\in A^*\setminus A$ and feasible solution $(A+c', w')$ such that 
$$\supp_{w'}(A+c')\geq \min\Big\{\supp_w(A), \frac{1}{2} \supp_{w^*}(A^*)\Big\}.$$
\end{lemma}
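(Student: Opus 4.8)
The plan is to reduce the statement to a combinatorial claim about voter neighborhoods — which a balanced weight vector can be made to witness via a max-flow/min-cut duality — and then to establish that claim by an uncrossing argument on a submodular set function. Write $t:=\min\{supp_w(A),\tfrac12 supp_{w^*}(A^*)\}$, and for a vertex set $S$ let $N(S):=\bigcup_{c\in S\cap C}N_c$ and $s(N(S)):=\sum_{n\in N(S)}s_n$. The map $S\mapsto s(N(S))$ is a weighted coverage function, hence submodular, so $g(S):=s(N(S))-t|S|$ is submodular as well, with $g(\emptyset)=0$. I would first record two elementary estimates coming from feasibility of $w$ and $w^*$: since $\sum_{c\in S}supp_w(c)\le s(N(S))$ for every $S$ and $supp_w(c)\ge supp_w(A)\ge t$ for each $c\in A$, one gets $g(S)\ge 0$ for all $S\subseteq A$; the identical computation with $w^*$ in place of $w$ yields $s(N(S))\ge 2t|S|$, i.e. $g(S)\ge t|S|$, for all $S\subseteq A^*$.

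Next I would observe (by Lemma~\ref{lem:equivalence}, or directly by max-flow/min-cut on the bipartite transportation problem with supplies $s_n$ and demands $t$ on $A+c'$) that a feasible weight vector $w'$ with $supp_{w'}(A+c')\ge t$ exists if and only if $s(N(S))\ge t|S|$, i.e. $g(S)\ge 0$, for every $S\subseteq A+c'$. Since $g(S)\ge 0$ already holds for $S\subseteq A$, it therefore suffices to find a candidate $c'\in A^*\setminus A$ — note $A^*\setminus A\neq\emptyset$ because $|A|\le k=|A^*|$ and $A\neq A^*$ — such that $g(S_0\cup\{c'\})\ge 0$ for all $S_0\subseteq A$; such a $c'$ and the corresponding $w'$ prove the lemma, since $t\ge\min\{supp_w(A),\tfrac12 supp_{w^*}(A^*)\}$.

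Assume for contradiction that no such $c'$ exists, so for each $c'\in A^*\setminus A$ there is $S_0^{c'}\subseteq A$ with $g(T_{c'})<0$, where $T_{c'}:=S_0^{c'}\cup\{c'\}$. The crucial point is that $c'$ is the only vertex of $T_{c'}$ lying outside $A$, so for an enumeration $A^*\setminus A=\{c'_1,\dots,c'_m\}$ every set $T_{c'_i}\cap\bigl(T_{c'_1}\cup\dots\cup T_{c'_{i-1}}\bigr)$ is contained in $A$ and hence has nonnegative $g$-value. Combining this with submodularity ($g(X\cup Y)\le g(X)+g(Y)-g(X\cap Y)$) and telescoping gives $g(U)\le\sum_i g(T_{c'_i})<0$ for $U:=\bigcup_i T_{c'_i}=(A^*\setminus A)\cup R$, where $R:=\bigcup_i S_0^{c'_i}\subseteq A$. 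Now split $R=R_*\sqcup R_0$ with $R_*:=R\cap A^*$ and $R_0:=R\setminus A^*\subseteq A\setminus A^*$. Then $U\cap A^*=(A^*\setminus A)\cup R_*\subseteq A^*$, so the second estimate gives $s(N(U))\ge s(N(U\cap A^*))\ge 2t\,(m+|R_*|)$; moreover $|U|=m+|R_*|+|R_0|$, and $|R_0|\le|A\setminus A^*|\le|A^*\setminus A|=m$ because $|A|\le|A^*|$. Hence
\[
g(U)=s(N(U))-t|U|\ \ge\ 2t(m+|R_*|)-t(m+|R_*|+|R_0|)\ =\ t\,(m+|R_*|-|R_0|)\ \ge\ t\,|R_*|\ \ge\ 0,
\]
contradicting $g(U)<0$. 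So some $c'\in A^*\setminus A$ is good, completing the proof.

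The step I expect to be the main obstacle is choosing the right combinatorial object to aggregate over: a naive union bound over the bad sets $T_{c'}$ is far too lossy, and the argument works only because (a) the $T_{c'}$ pairwise intersect only inside $A$, which makes the submodular telescoping lose nothing, and (b) the part $R_0$ of $R$ lying outside $A^*$ is small, which is exactly where $|A|\le k$ enters. By contrast, the reduction in the second paragraph is a routine max-flow/min-cut duality already implicit in Lemma~\ref{lem:equivalence}, and the two feasibility estimates in the first paragraph are one-liners.
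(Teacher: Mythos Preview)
Your proof is correct and takes a genuinely different route from the paper's. The paper works directly with flows: it normalizes $w$ and $w^*$ so that all members of $A$ (resp.\ $A^*$) have support exactly $t$ (resp.\ $t^*$), defines $f:=w^*-w$, decomposes $f$ into paths, and isolates the sub-flow $f'$ consisting of those paths that start in $N$ and end in $A^*\setminus A$. A short accounting shows this sub-flow carries total value at least $|A^*\setminus A|\cdot t$, so by averaging some $c'\in A^*\setminus A$ receives at least $t$; then $w':=w+f'$ is the desired feasible vector by Lemma~\ref{lem:subflow}. Your argument instead reduces, via Lemma~\ref{lem:equivalence}, to the Hall-type condition $g(S)\ge 0$ on all $S\subseteq A+c'$, and then finds the good $c'$ by a submodular uncrossing over the hypothetical ``bad'' sets $T_{c'}$.

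The two approaches trade off constructiveness against modularity. The paper's proof is explicit: it actually builds $w'$ out of $w$ and $w^*$, which is closer in spirit to the algorithmic use of the lemma in Theorem~\ref{thm:mms}. Your proof is existential---it certifies that a balanced $w'$ for $A+c'$ will work without exhibiting it---but it cleanly separates the combinatorics of \emph{which} candidate is good from the flow question of \emph{how} to support it, delegating the latter entirely to the already-proved Lemma~\ref{lem:equivalence}. The uncrossing step (pairwise intersections of the $T_{c'}$ lie in $A$, so the telescoping loses nothing) and the final counting ($|R_0|\le|A\setminus A^*|\le m$) are both tight and correctly identified as the places where the hypotheses $S_0^{c'}\subseteq A$ and $|A|\le k$ enter.
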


\begin{theorem}\label{thm:mms}
The $\MMS$ rule provides a 2-approximation guarantee for maximin support.
\end{theorem}

\begin{proof}
Let $(A_i, w_i)$ be the partial solution at the end of the $i$-th round of MMS, and let $(A^*, w^*)$ be an optimal solution. 
We prove by induction on $i$ that $\supp_{w_i}(A_i)\geq \frac{1}{2}\supp_{w^*}(A^*)$, where the case $i=0$ holds trivially as we use the convention that $\supp_w(\emptyset)=\infty$.
If the inequality holds for $i$, an application of Lemma~\ref{lem:2sols} for $(A_i, w_i)$ and $(A^*, w^*)$ implies that there is a candidate $c'\in A^*\setminus A_i$ and a feasible solution $(A_i+c', w')$ such that 
%\begin{align*}
%\supp_{w'}(A_i+c') &\geq \min\Big\{\supp_{w_i}(A_i), \frac{1}{2} \supp_{w^*}(A^*)\Big\} \
% &= \frac{1}{2} \supp_{w^*}(A^*).
%\end{align*}
%
$$\supp_{w'}(A_i+c') \geq \min\Big\{\supp_{w_i}(A_i), \frac{1}{2} \supp_{w^*}(A^*)\Big\} = \frac{1}{2} \supp_{w^*}(A^*).$$
As the algorithm is bound to inspect candidate $c'$ in round $i+1$, and compute for it a balanced weight vector $w_{c'}$ which maximizes the support of $A_i+c'$ (by Lemma~\ref{lem:balanced}), the solution $(A_{i+1}, w_{r+1})$ at the end of round $i+1$ must have an even higher support, i.e., %
%\begin{align*}
%\supp_{w_{i+1}}(A_{i+1}) &\geq \supp_{w_{c'}}(A_i+c) \\ 
%  &\geq \supp_{w'}(A_i+c) \geq \frac{1}{2} \supp_{w^*}(A^*).
%\end{align*}
%
$$\supp_{w_{i+1}}(A_{i+1}) \geq \supp_{w_{c'}}(A_i+c) \geq \supp_{w'}(A_i+c) \geq \frac{1}{2} \supp_{w^*}(A^*).$$
This completes the proof.
\end{proof}

$\MMS$ is a greedy algorithm with a runtime of $O(\bal \cdot |C|\cdot k)$, where we recall that $\bal$ is the time complexity of computing a balanced weight vector; see Remark~\ref{rem:bal}. 
To conclude the section we mention that a "lazy greedy" version of it can save a factor $\Theta(k)$ in the runtime while keeping the approximation guarantee virtually unchanged. 
%In the full version we prove the following result.
In Appendix~\ref{s:lazymms} we prove the following result.

\begin{theorem}\label{thm:2eps}
There is an algorithm $\lazy$ that, for any $\eps>0$, offers a $(2+\eps)$-approximation for the maximin support problem, satisfies the PJR property, and executes in time $O(\bal\cdot |C|\cdot \log(1/\eps))$.
\end{theorem}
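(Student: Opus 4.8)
The idea behind $\lazy$ is to trade the ``choose the best candidate'' rule of $\MMS$, which costs $\Theta(|C|)$ balanced-vector computations in each of its $k$ rounds, for a cheaper ``choose any candidate clearing a target threshold $t$'' rule, and to move a candidate onto a \emph{permanent discard list} as soon as it is found to be below threshold. The first step is a monotonicity lemma. For a partial committee $A$ and $c\notin A$, write $g(A,c):=\max_w supp_w(A+c)$ over feasible weight vectors $w$; by point~1 of Lemma~\ref{lem:balanced} this maximum is attained by any vector that is balanced for $A+c$, so a single balanced-vector computation yields both $g(A,c)$ and a balanced solution for $A+c$. I would prove that $g(A,c)$ is non-increasing in $A$: given $A^\circ\subseteq A$ and an optimal $w$ for $A+c$, restrict $w$ to the edges incident to $A^\circ+c$; the result is still feasible and leaves $supp_w(x)$ unchanged for every $x\in A^\circ+c$, so $g(A^\circ,c)\ge\min_{x\in A^\circ+c}supp_w(x)=g(A,c)$.

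Next I set up the thresholded subroutine $P(t)$: initialise $A=\emptyset$, $w=0$ and a worklist $L=C$; while $|A|<k$ and $L\neq\emptyset$, pop a candidate $c$ from $L$, compute a balanced vector $w_c$ for $A+c$, and either \emph{commit} $c$ (set $A\leftarrow A+c$, $w\leftarrow w_c$) if $supp_{w_c}(A+c)\ge t$, or \emph{discard} $c$ permanently otherwise. Since each candidate is handled at most once, $P(t)$ runs in $O(\bal\cdot|C|)$ time, and whenever it returns with $|A|=k$ the output $(A,w)$ is balanced with $supp_w(A)\ge t$. The correctness statement to prove is: if $t\le\tfrac12 supp_{w^*}(A^*)$ for an optimal solution $(A^*,w^*)$, then $P(t)$ returns a full committee. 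The argument keeps the invariant $supp_w(A)\ge t$ (true initially by the convention $supp_w(\emptyset)=\infty$ and maintained by the commit rule) and shows $P(t)$ cannot halt with $|A|<k$: at such a moment $t=\min\{supp_w(A),\tfrac12 supp_{w^*}(A^*)\}$, so Lemma~\ref{lem:2sols} applied to $(A,w)$ and $(A^*,w^*)$ produces a candidate $c'\in A^*\setminus A$ with $g(A,c')\ge t$; by the monotonicity of $g$, $c'$ can never have been discarded (its augmented support at any earlier, smaller committee is at least $g(A,c')\ge t$), and since $c'\notin A$ it has not been committed either, so $c'\in L$, contradicting $L=\emptyset$.

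The full algorithm $\lazy$ then runs a geometric search over the threshold. Using $\hat t=\tfrac1k\sum_{n\in N}s_n$ as a known upper bound on $supp_{w^*}(A^*)$, a descent $t=\hat t,\hat t/2,\hat t/4,\dots$ reaches a constant-ratio enclosure of the optimum, after which one binary-searches that enclosure with $O(\log(1/\eps))$ further calls to $P(\cdot)$, returning the committee produced by the largest tested threshold $T$ for which $P(T)$ succeeds. Since $P(t)$ succeeds whenever $t\le\tfrac12 supp_{w^*}(A^*)$ and necessarily fails when $t>supp_{w^*}(A^*)$ (its output would have support exceeding the optimum), this $T$ satisfies $T\ge\tfrac1{2+\eps}supp_{w^*}(A^*)$ after the refinement, so the returned balanced full solution is a $(2+\eps)$-approximation; the running time is $O(\bal\cdot|C|\cdot\log(1/\eps))$. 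For the PJR claim I would adapt the known proof that $\MMS$ satisfies PJR~\cite{sanchez2016maximin}: if the returned committee $A$ violated PJR through a group $N'$ and integer $r$, one picks $c^\circ\in(\cap_{n\in N'}C_n)\setminus A$, which exists because fewer than $r\le|\cap_{n\in N'}C_n|$ members of $A$ lie in $\cup_{n\in N'}C_n$; at every stage of $P(T)$ the current committee meets at most $r-1$ candidates approved by $N'$, so rerouting the vote of $N'$ (at most $(r-1)T$ of it suffices to keep those members at support $\ge T$, the remaining $\ge r\hat t-(r-1)T\ge\hat t\ge T$ being sent to $c^\circ$) gives $g(\cdot,c^\circ)\ge T$ throughout, so $c^\circ$ would be committed whenever examined, contradicting $c^\circ\notin A$.

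I expect the monotonicity-plus-permanent-discard argument of the second paragraph to be the crux: it is what licenses touching each candidate only once (hence the $\Theta(k)$ speedup) while preserving the approximation factor, and it must interlock cleanly with Lemma~\ref{lem:2sols}. The geometric search is routine; the delicate point in the PJR part is the bookkeeping for candidates still on the worklist when $P(T)$ halts with a full committee, which I would handle by fixing the processing order (or by a supplementary swap argument) so that a commonly-approved candidate of an under-represented cohesive group is necessarily committed.
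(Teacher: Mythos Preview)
Your approximation argument is essentially the paper's: the thresholded subroutine $P(t)$, the permanent-discard rule, and the proof that $P(t)$ succeeds whenever $t\le t^*/2$ via Lemma~\ref{lem:2sols} plus monotonicity of $g(\cdot,c)$ all match Lemma~\ref{lem:success}. Two points deserve attention, however.

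\emph{Running time.} Your halving descent from $\hat t$ takes $\Theta(\log(\hat t/t^*))$ calls to $P$, and $\hat t/t^*$ is not bounded in terms of the input (take $k+1$ candidates where $k$ of them are each approved by a single unit-strength voter and one is approved by a voter of strength $M$; then $t^*=1$ while $\hat t\approx M/k$). So the total number of trials is not $O(\log(1/\eps))$. The paper closes this gap by first running a constant-factor approximation (namely $\balanced$, which takes time $O(\bal\cdot k)$ and gives a $3.15$-factor estimate of $t^*$) to obtain an initial enclosure $[t',t'']$ with $t''/t'=O(1)$, and only then binary-searches with the geometric mean; this brings the trial count down to $O(\log(1/\eps))$ as claimed.

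\emph{PJR.} The rerouting argument you sketch is the right calculation when $c^\circ$ is actually examined, but the ``delicate point'' you flag is real and is not resolved by ``fixing the processing order'' alone: with an arbitrary order $P(T)$ can fill the committee before ever inspecting $c^\circ$, and one can construct instances where the resulting committee violates PJR (e.g.\ one voter of unit strength approving only $c_0$, and $k$ voters of tiny strength each approving a distinct $c_i$; process $c_1,\dots,c_k$ first). The paper does fix the order---always pick the uninspected candidate with highest \emph{score} in the sense of Section~\ref{s:inserting}---but then proves PJR by a different route: it shows that with this order the final balanced solution $(A,w)$ satisfies $supp_w(A)\ge\max_{c'\in C\setminus A}score_{(A,w)}(c')$ (because at the moment of the first rejection the rejected candidate has the highest score yet cannot be inserted with support $\ge t$, hence all scores are below $t\le supp_w(A)$, and by Lemma~\ref{lem:2balanced} scores only decrease thereafter), and then invokes Lemma~\ref{lem:localopt}. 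This argument uses the $\phragmms$ machinery rather than the $\MMS$-style rerouting, and is what makes the PJR claim go through; your proposal does not supply an equivalent.
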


\section{A new election rule}\label{s:heuristic}

The $\phragmen$ rule~\cite{brill2017phragmen} is highly efficient, with a runtime of $O(|E|\cdot k)$; see Algorithm~\ref{alg:phragmen} in Appendix~\ref{s:algorithms}. 
%The $\phragmen$ rule~\cite{brill2017phragmen} is highly efficient, with a runtime of $O(|E|\cdot k)$. 
However, as proved in the previous section, it fails to provide a good guarantee for the maximin support objective. 
On the other hand, $\MMS$~\cite{sanchez2016maximin} gives a constant-factor guarantee albeit with a slow running time that makes it unsuitable for implementation over a blockchain network.
In this section we introduce $\phragmms$, a new election rule inspired in $\phragmen$ that maintains a comparable runtime to it, yet lends itself to more robust analyses both for the maximin support objective and for the PJR property. 

\subsection{Inserting a candidate to a partial solution}\label{s:inserting}

We start with a brief analysis of the approaches taken in $\MMS$ and $\phragmen$. 
Both are iterative greedy algorithms that start with an empty committee and add to it a new candidate over $k$ iterations, following some specific heuristic for candidate selection.
For a given partial solution, $\MMS$ computes a balanced edge weight vector for each possible augmented committee resulting from adding one candidate, and keeps the one whose least support is largest. 
Naturally, such heuristic offers robust guarantees for maximin support but is slow as computing balanced vectors is costly. 
A similar approach is followed by $\phragmen$, except that it forgoes balancing vectors exactly. Instead, starting from the weight vector of the current committee, it rebalances it only approximately when a candidate is inserted, by performing local modifications in the neighborhood of the new candidate. 
Finally, $\phragmms$ follows the strategy of $\phragmen$ but uses a more involved heuristic for solution rebalancing, with a corresponding increase in runtime. 

In the algorithms described in this section we assume that there is a known background instance $(G=(N\cup C, E), s, k)$ that does not need to be passed as input. Rather, the input is a partial solution $(A,w)$ with $|A|\leq k$. We also assume that the current list of committee member supports $(\supp_w(c))_{c\in A}$ is implicitly passed by reference and updated in every algorithm.

Let $c'\in C\setminus A$ be a candidate that we consider adding to $(A,w)$. To do so, we modify weight vector $w$ into a new feasible vector $w'$ that redirects towards $c'$ some of the vote strength of the approving voters in $N_{c'}$, in turn decreasing the support of the current committee members that are also approved by these voters. Now, for a given threshold $t\geq 0$, we want to make sure not to reduce the support of any member $c$ below $t$, assuming it starts above $t$, and not to reduce it at all otherwise. A simple rule to ensure this is as follows: for each voter $n$ in $N_{c'}$ and each member $c\in A\cap C_n$, reduce the weight on edge $nc$ from $w_{nc}$ to $w_{nc}\cdot \min\{1, t/\supp_w(c)\}$, and assign the difference to edge $nc'$. That way, even if all edges incident to $c$ are so reduced in weight, the support of $c$ is scaled by a factor no smaller than $\min\{1, t/\supp_w(c)\}$ and hence its support does not fall below $t$.
Therefore, if for each voter $n\in N$ and threshold $t\geq 0$ we define that voter's \emph{slack} as

\begin{align}
    \slack_{(A,w)}(n,t):= s_n - \sum_{c\in A\cap C_n} w_{nc} \cdot\min \Big\{ 1, t/\supp_w(c)\Big\} \label{eq:slack}
\end{align}
and for each unelected candidate $c'\in C\setminus A$ and threshold $t\geq 0$ we define that candidate's \emph{parameterized score} as
\begin{equation}\label{eq:prescore}
    \prescore_{(A,w)}(c',t) := \sum_{n\in N_{c'}} \slack_{(A,w)}(n,t),
\end{equation}
then we can add $c'$ to the current partial solution with a support of $\prescore_{(A,w)}(c',t)$, while not making any other member's support decrease below threshold $t$. The resulting weight modification rule is formalized in Algorithm~\ref{alg:ins}. The next lemma easily follows from the previous exposition and its proof is skipped.

\begin{algorithm}[htb]
\SetAlgoLined
\KwData{Partial feasible solution $(A,w)$, candidate to insert $c'\in C\setminus A$, threshold $t\geq 0$.}
Initialize the new weight vector $w'\leftarrow w$\;
\For{each approving voter $n\in N_{c'}$}{
Set $w'_{nc'} \leftarrow s_n$\;
\For{each current member $c\in A\cap C_n$}{
\If{$\supp_w(c)>t$}{
	Update $w'_{nc} \leftarrow w'_{nc}\cdot\frac{t}{\supp_w(c)}$\;
}
Update $w'_{nc'}\leftarrow w'_{nc'} - w'_{nc}$\;
}
}
\Return $(A+c',w')$\;
 \caption{$\ins(A,w,c',t)$}
\label{alg:ins}
\end{algorithm}

\begin{lemma}\label{lem:insert}
For a feasible partial solution $(A,w)$, candidate $c'\in C\setminus A$ and threshold $t\geq 0$, 
Algorithm $\ins(A,w,c',t)$ executes in time $O(|E|)$ and returns a feasible partial solution $(A+c',w')$ 
such that $\supp_{w'}(c)\geq \min\{\supp_w(c),t\}$ for each member $c\in A$, and $\supp_{w'}(c')=\prescore_{(A,w)}(c',t)$. 
%In particular, if $\prescore_{(A,w)}(c',t)\geq t$ then $\supp_{w'}(A+c')\geq \min\{\supp_w(A),t\}$.
\end{lemma}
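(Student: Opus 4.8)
The plan is to prove the three assertions — running time, feasibility of the output, and the support bounds — by directly tracing Algorithm~\ref{alg:ins}, since the statement is essentially a bookkeeping consequence of the definitions \eqref{eq:slack} and \eqref{eq:prescore}.

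First, for the running time I would observe that every line of $\ins$ modifies only edges incident to voters in $N_{c'}$: for each such voter $n$ the algorithm does one assignment to the edge $nc'$ and then one constant-time update per member of $A\cap C_n$. Hence the total number of edge operations is $O\big(\sum_{n\in N_{c'}}(1+|A\cap C_n|)\big) = O(|E|)$, and since each of these operations (including reading $supp_w(c)$, which is maintained by reference) takes constant time under our numerical-operations assumption, the whole procedure runs in $O(|E|)$ time.

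Next I would extract the closed form of the output of $\ins(A,w,c',t)$: for $n\in N_{c'}$ and $c\in A\cap C_n$ the final value is $w'_{nc}=w_{nc}\cdot\min\{1,\,t/supp_w(c)\}$; for $n\in N_{c'}$ the final value is $w'_{nc'}=s_n-\sum_{c\in A\cap C_n}w_{nc}\min\{1,\,t/supp_w(c)\}=slack_{(A,w)}(n,t)$; and every other component equals $w_{nc}$. Non-negativity of $w'_{nc'}$ is immediate since $\sum_{c\in A\cap C_n}w_{nc}\min\{1,\,t/supp_w(c)\}\le\sum_{c\in C_n}w_{nc}\le s_n$ by feasibility of $w$, and all remaining components are non-negative multiples of non-negative numbers. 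For feasibility of $w'$, the only voters whose incident weights change are those in $N_{c'}$, and for such a voter the new total is $\sum_{c\in C_n}w'_{nc}=w'_{nc'}+\sum_{c\in A\cap C_n}w'_{nc}+\sum_{c\in C_n\setminus(A+c')}w_{nc}=s_n+\sum_{c\in C_n\setminus(A+c')}w_{nc}$; this equals $s_n$ because in all our applications $w$ places weight only on current committee members (the ``intuitive'' feasibility notion of the footnote in Section~\ref{s:prel}, which can also be enforced by zeroing such edges as preprocessing). Voters outside $N_{c'}$ are untouched, so $(A+c',w')$ is feasible.

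Finally I would read off the support bounds. By the closed form, $supp_{w'}(c')=\sum_{n\in N_{c'}}w'_{nc'}=\sum_{n\in N_{c'}}slack_{(A,w)}(n,t)=prescore_{(A,w)}(c',t)$ by \eqref{eq:prescore}. For a member $c\in A$, voters in $N_c\setminus N_{c'}$ still contribute $w_{nc}$, while voters in $N_c\cap N_{c'}$ contribute $w_{nc}\min\{1,\,t/supp_w(c)\}$; if $supp_w(c)\le t$ the factor is $1$ and $supp_{w'}(c)=supp_w(c)$, whereas if $supp_w(c)>t$ then even in the worst case $N_c\subseteq N_{c'}$ we get $supp_{w'}(c)\ge\sum_{n\in N_c}w_{nc}\cdot t/supp_w(c)=t$, so $supp_{w'}(c)\ge\min\{supp_w(c),t\}$ in either case. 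The ``in particular'' clause then follows: if $prescore_{(A,w)}(c',t)\ge t$, then $supp_{w'}(c')\ge t$ and $supp_{w'}(c)\ge\min\{supp_w(c),t\}\ge\min\{supp_w(A),t\}$ for every $c\in A$, hence $supp_{w'}(A+c')=\min_{c\in A+c'}supp_{w'}(c)\ge\min\{supp_w(A),t\}$. I expect no genuine difficulty here; the only mildly delicate points are the feasibility bookkeeping just described (where one must invoke that $w$ is supported on the committee) and bounding $supp_{w'}(c)$ when $supp_w(c)>t$ via the worst-case assumption $N_c\subseteq N_{c'}$ — everything else is immediate from the definitions.
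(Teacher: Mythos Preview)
Your proof is correct and follows the same approach as the paper, which in fact skips the formal proof entirely (``its proof is skipped'') and relies on the preceding exposition. Your write-up is a careful unpacking of that exposition: the running-time count, the closed form for $w'$, the support computations for $c'$ and for members $c\in A$, and the final ``in particular'' clause all match what the paper sketches informally. The one point you flag that the paper glosses over is the feasibility bookkeeping when $w$ carries weight on edges to $C\setminus A$; your handling (invoking the footnote in Section~\ref{s:prel} and noting that such weight can be zeroed as preprocessing) is the right resolution, and is implicitly what the paper assumes throughout since all weight vectors it constructs are supported on the current committee.
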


Whenever partial solution $(A,w)$ is clear from context, we drop the subscript from our notation of slack and parameterized score. %
Parameter $t$ provides a trade-off between the amount of support we direct to the new candidate $c'$ and the support we leave for the current members. We balance this trade-off by selecting the largest possible $t$ for which the inequality $\prescore(c',t)\geq t$ holds.
Thus, for each unelected candidate $c'\in C\setminus A$ we define its \emph{score} as 
\begin{align}
    \score_{(A,w)}(c'):=\max\{t\geq 0: \ \prescore_{(A,w)}(c',t)\geq t\},
\end{align}
where once again we drop the subscript if $(A,w)$ is clear from context. Our heuristic now becomes apparent.

\begin{heuristic}
For a partial solution $(A,w)$, find a candidate $c_{\max}\in C\setminus A$ with highest score $t_{\max}=\max_{c'\in C\setminus A} \score(c')$, and execute $\ins(A,w,c_{\max},t_{\max})$ so that for the new solution $(A+c_{\max},w')$: 
\begin{align*}
\forall c\in A, \ \supp_{w'}(c) &\geq \min\{\supp_w(c), t_{\max}\}, \quad \text{ and } \\
 \supp_{w'}(A+c_{\max}) &\geq \min \Big\{ \supp_w(A), t_{\max}\Big\}.
\end{align*}
\end{heuristic}

In Appendix~\ref{s:algorithms} we describe efficient algorithms to find the candidate with highest parameterized score for a given threshold $t$, as well as the candidate with overall highest score.
%In the full version~\cite{cevallos2020verifiably} we describe efficient algorithms to find the candidate with highest parameterized score for a given threshold $t$, as well as the candidate with overall highest score.

\begin{theorem}\label{thm:runtimes}
For a partial solution $(A,w)$ and threshold $t\geq 0$, there are algorithms $\maxprescore(A,w,t)$, that runs in time $O(|E|)$ and returns a tuple $(c_t,p_t)$ with $c_t\in C\setminus A$ and $p_t=\prescore(c_t,t)=\max_{c'\in C\setminus A} \prescore(c',t)$, 
and $\maxscore(A,w)$, that executes in time $O(|E|\cdot \log k)$ and returns a tuple $(c_{\max}, t_{\max})$ such that $c_{\max}\in C\setminus A$ and $t_{\max}=\score(c_{\max})=\max_{c'\in C\setminus A} \score(c')$.
\end{theorem}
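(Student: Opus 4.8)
The plan is to exhibit the two subroutines separately.

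For $\maxprescore(A,w,t)$, recall that the committee supports $(supp_w(c))_{c\in A}$ are available by reference. First I would make a single pass over the edges incident to committee members to evaluate every voter slack $slack_{(A,w)}(n,t)$ of~\eqref{eq:slack} and store it; this costs $O(|E|)$ in total. Then a second pass over the edges incident to the unelected candidates accumulates $prescore_{(A,w)}(c',t)=\sum_{n\in N_{c'}}slack_{(A,w)}(n,t)$ for every $c'\in C\setminus A$, again in $O(|E|)$, after which I return the maximizing pair $(c_t,p_t)$. (Computing slacks \emph{before} pre-scores is what keeps this linear: the naive nested iteration over $c'$, then $n\in N_{c'}$, then $c\in A\cap C_n$ can cost $\sum_n|C_n\cap A|\cdot|C_n\setminus A|$, which is superlinear.)

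For $\maxscore(A,w)$, the plan rests on three structural facts about the fixed point of $P(t):=\max_{c'\in C\setminus A} prescore_{(A,w)}(c',t)$. (i) For fixed $(A,w)$, each map $t\mapsto prescore_{(A,w)}(c',t)$ is continuous, non-increasing, and piecewise linear, with all breakpoints contained in the set $\{supp_w(c):c\in A\}$ of at most $k$ values, since the only nonlinearity comes from the terms $\min\{1,t/supp_w(c)\}$, each having a single kink at $t=supp_w(c)$. (ii) Hence $score_{(A,w)}(c')$ is the unique fixed point of this map, which is positive because $prescore(c',0)=\sum_{n\in N_{c'}}s_n>0$ while $prescore(c',t)-t$ is strictly decreasing and tends to $-\infty$. (iii) The desired value $t_{\max}=\max_{c'}score(c')$ equals the unique fixed point $t^*$ of the continuous non-increasing function $P$: the bound $t_{\max}\le t^*$ follows from $P(score(c'))\ge prescore(c',score(c'))=score(c')$ together with the strict monotonicity of $P(t)-t$, and $t_{\max}\ge t^*$ follows because any $c^*\in\argmax_{c'}prescore(c',t^*)$ satisfies $prescore(c^*,t^*)=P(t^*)=t^*$, hence $score(c^*)\ge t^*$.

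Given (i)--(iii), the algorithm is: sort the at most $k$ distinct member supports as $0=\tau_0<\tau_1<\cdots<\tau_m$ in time $O(k\log k)=O(|E|\log k)$ (using $k<|C|\le|E|$); binary-search for the largest index $i$ with $P(\tau_i)\ge\tau_i$, which localizes $t^*$ to $[\tau_i,\tau_{i+1})$ (with $\tau_{m+1}:=\infty$), using $O(\log k)$ calls to $\maxprescore(A,w,\tau_j)$, each $O(|E|)$; the search is correct since $P(t)-t$ is strictly decreasing so the predicate is monotone in $j$. On $[\tau_i,\tau_{i+1}]$ no member support is interior, so each $prescore(c',\cdot)$ is affine there, say $prescore(c',t)=a_{c'}-b_{c'}t$ with $b_{c'}\ge0$; I compute all coefficients with one further slack pass and one candidate pass in $O(|E|)$. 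Since $t^*$ lies in this interval, where $P$ agrees with $\max_{c'}(a_{c'}-b_{c'}t)$, a short argument about the zero of a maximum of strictly decreasing affine functions gives $t^*=\max_{c'}a_{c'}/(1+b_{c'})$, and any maximizer $c_{\max}$ of $a_{c'}/(1+b_{c'})$ satisfies $score(c_{\max})=t^*$; the algorithm returns $(c_{\max},t^*)$. The total running time is $O(|E|\log k)$.

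The main obstacle I expect is fact (iii) together with the observation that the global fixed point $t^*$ of $P$ is recoverable from a single interval between consecutive member supports, so that an $O(\log k)$-step localization plus one linear pass suffices instead of solving the full piecewise-linear crossing problem for each candidate. The monotonicity, continuity, and piecewise-linearity ingredients are elementary but need to be assembled with some care, in particular checking that the affine-extension fixed point used in the last step really coincides with the true $t^*$ because the binary search has guaranteed $t^*$ lies in the chosen interval.
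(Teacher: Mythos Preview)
Your proposal is correct and takes essentially the paper's approach: compute slacks first and then pre-scores for $\maxprescore$, and for $\maxscore$ binary-search over the sorted member supports to localize the crossing interval, then solve the affine fixed-point problem on that interval in one $O(|E|)$ pass. The only cosmetic difference is that you argue via the fixed point of $P(t)=\max_{c'}prescore(c',t)$, whereas the paper phrases the same step through a convexity/tangent-line inequality for its ``linearized score'' $linscore(c',t')=a_{c'}/(1+b_{c'})$; the resulting algorithm and runtime analysis are identical.
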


Our heuristic for candidate selection, which finds a candidate with highest score and adds it to the current partial solution (Algorithm $\maxscore$ followed by $\ins$) executes in time $O(|E|\cdot \log k)$. 
It thus matches up to a logarithmic term the running time of the $\phragmen$ heuristic which is $O(|E|)$ per iteration. 
%In the full version we draw further parallels between $\phragmen$ and the new heuristic, and explain how the latter can be seen as a natural complication of the former that always grants higher scores to candidates and thus inserts them with higher supports.
In Appendix~\ref{s:algorithms} we draw further parallels between $\phragmen$ and the new heuristic, and explain how the latter can be seen as a natural complication of the former that always grants higher scores to candidates and thus inserts them with higher supports.

\subsection{Inserting and rebalancing iteratively}\label{s:315}

We proved in Section~\ref{s:complexity} the existence of a 2-approximation algorithm for maximin support that runs in time $O(\bal\cdot |C|\cdot k)$ or $O(\bal\cdot |C|)$ (Theorems \ref{thm:mms} and \ref{thm:2eps} respectively). 
We use now our heuristic to develop $\phragmms$, a $3.15$-approximation algorithm that runs in time $O(\bal\cdot k)$ and satisfies PJR as well. 
We highlight that this is the fastest known election rule to achieve a constant-factor guarantee for maximin support, and that gains in speed are of paramount importance for our blockchain application, where there are hundreds of candidates and a large number of voters.

$\phragmms$ (Algorithm~\ref{alg:balanced}) is an iterative greedy algorithm that starts with an empty committee and alternates between inserting a new candidate with the new heuristic, and fully rebalancing the weight vector, i.e., replacing it with a balanced one. This constitutes a middle ground between the approach in $\phragmen$ where a balanced vector is never computed, and the approach in $\MMS$ where $O(|C|)$ balanced vectors are computed per iteration. 
We formalize the procedure below. Notice that running the Insert procedure (Algorithm~\ref{alg:ins}) before rebalancing is optional, but the step simplifies the analysis and may provide a good starting point to the balancing algorithm.

\begin{algorithm}[htb]
\SetAlgoLined
\KwData{Approval graph $G=(N\cup C, E)$, vector $s$ of vote strengths, committee size $k$.}
Initialize $A=\emptyset$\ and $w=0\in\R^E$\;
\For{$i$ from $1$ to $k$}{
Let $(c_{\max},t_{\max})\leftarrow \maxscore(A,w)$ \; 
\tcp{candidate w.~highest score, and its score} 
Update $(A,w)\leftarrow \ins(A,w,c_{\max},t_{\max})$ \; 
\tcp{or optionally just update $A\leftarrow A+c_{\max}$} 
Replace $w$ with a balanced weight vector for $A$\;
}
\Return $(A,w)$\;
\caption{$\phragmms$}
\label{alg:balanced}
\end{algorithm}

\begin{theorem}\label{thm:315}
$\phragmms$ offers a $3.15$-approximation guarantee for the maximin support problem, satisfies the PJR property, and runs in time $O(\bal\cdot k)$, assuming that $\bal= \Omega(|E|\cdot \log k)$.
\end{theorem}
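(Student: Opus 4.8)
The plan is to treat the three claims in turn after isolating a common reduction. Write $(A_0,w_0)=(\emptyset,0),(A_1,w_1),\dots,(A_k,w_k)$ for the successive solutions produced by $\balanced$ (Algorithm~\ref{alg:balanced}) and, for $i\ge 1$, let $t_i$ be the score $t_{\max}$ returned by $\maxscore$ in round $i$. First I would record that $supp_{w_k}(A_k)\ge\min_{1\le i\le k}t_i$: by Lemma~\ref{lem:insert} the call to $\ins$ in round $i$ yields a solution of least support at least $\min\{supp_{w_{i-1}}(A_{i-1}),t_i\}$ (using $prescore(c_{\max},t_i)=t_i$, which holds because $prescore(c_{\max},\cdot)-t$ is continuous and strictly decreasing, so the defining supremum for $score$ is attained with equality), and the subsequent rebalancing only increases the least support by Lemma~\ref{lem:balanced}(1) with $r=1$; the bound follows by induction from $supp_{w_0}(\emptyset)=\infty$. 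The running-time claim is then immediate: each of the $k$ rounds costs $O(|E|\log k)$ for $\maxscore$ (Theorem~\ref{thm:runtimes}), $O(|E|)$ for $\ins$ (Lemma~\ref{lem:insert}) and $\bal$ for rebalancing (Remark~\ref{rem:bal}), for a total of $O(k(\bal+|E|\log k))=O(\bal\cdot k)$ under the hypothesis $\bal=\Omega(|E|\log k)$. It remains to lower-bound the scores $t_i$.

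For PJR I would argue by contradiction. Suppose $A_k$ fails PJR, witnessed by a group $N'$ and an integer $0<r\le k$; set $\hat t=\sum_{n\in N}s_n/k$ and pick $c^*\in\bigcap_{n\in N'}C_n$ with $c^*\notin A_k$ (one exists since $|A_k\cap\bigcap_{n\in N'}C_n|\le|A_k\cap\bigcup_{n\in N'}C_n|<r\le|\bigcap_{n\in N'}C_n|$), so $c^*\notin A_{i-1}$ for every $i$. The key claim is $score_{(A_{i-1},w_{i-1})}(c^*)\ge\hat t$ in every round: bounding $prescore(c^*,\hat t)\ge\sum_{n\in N'}slack_{(A_{i-1},w_{i-1})}(n,\hat t)$ and using that $w_{i-1}$ is balanced (so $\sum_{c\in A_{i-1}\cap C_n}w_{i-1,nc}\le s_n$), we get after summing over $n\in N'$ that $\sum_{n\in N'}\sum_{c\in A_{i-1}\cap C_n}w_{i-1,nc}\min\{1,\hat t/supp_{w_{i-1}}(c)\}\le\sum_{c\in A_{i-1}\cap\bigcup_{n\in N'}C_n}\min\{supp_{w_{i-1}}(c),\hat t\}\le(r-1)\hat t$, whence $prescore(c^*,\hat t)\ge\sum_{n\in N'}s_n-(r-1)\hat t\ge r\hat t-(r-1)\hat t=\hat t$. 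Thus $t_i\ge\hat t$ for all $i$, so $supp_{w_k}(A_k)\ge\hat t$ by the reduction above. Since $\hat t$ is an upper bound on the average member support, this forces $supp_{w_k}(c)=\hat t$ for every $c\in A_k$ and, by comparing $\sum_{c}supp_{w_k}(c)=\sum_n s_n$ with Lemma~\ref{lem:balanced}(2), forces every voter's entire vote to be placed on approved members of $A_k$; then $\sum_{n\in N'}s_n\le|A_k\cap\bigcup_{n\in N'}C_n|\cdot\hat t<r\hat t$, contradicting condition (a) of the violation.

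For the $3.15$-approximation, let $t^*$ be the maximin optimum, fix an optimal committee $A^*$ and a feasible $w^*$ with (after reducing weights) $supp_{w^*}(c)=t^*$ for all $c\in A^*$. By the reduction it suffices to show that for every balanced partial solution $(A,w)$ with $|A|<k$ there is a candidate $c'\in C\setminus A$ with $score_{(A,w)}(c')\ge t^*/3.15$, i.e.\ a threshold $t\ge t^*/3.15$ and a candidate $c'$ with $prescore_{(A,w)}(c',t)\ge t$. I would prove this by a flow/averaging argument paralleling the proof of Lemma~\ref{lem:2sols}: for a threshold $t$ to be optimized later, cap $w$ at level $t$, i.e.\ put $\tilde w_{nc}=w_{nc}\min\{1,t/supp_w(c)\}$, which is feasible with $supp_{\tilde w}(c)=\min\{supp_w(c),t\}$; form $f:=w^*-\tilde w$, decompose it into circulations and simple paths, and retain the sub-flow of paths ending in $A^*\setminus A$; comparing the demand $|A^*\setminus A|\cdot t^*$ of $A^*\setminus A$ against the excess $\le|A\setminus A^*|\cdot t$ available on $A\setminus A^*$ and using $|A^*|=k>|A|$, an averaging argument extracts $c'\in A^*\setminus A$ whose slack-sum $prescore_{(A,w)}(c',t)$ is large. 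The step I expect to be the main obstacle is that some slack is \emph{trapped} in voters approving only already-elected candidates: such slack feeds no unelected candidate's pre-score, and a naive accounting then degrades as the current total support grows. Controlling it (using Lemma~\ref{lem:subflow} to validate the sub-flows, and charging trapped slack against $w^*$'s support on $A\cap A^*$) forces the lower bound to take the form $t_i\ge\Psi(supp_{w_{i-1}}(A_{i-1}),t^*)$; the constant $3.15$ then emerges by choosing the threshold $t$ as the optimal function of $supp_w(A)$ and $t^*$ and propagating the resulting bound through the $k$ rounds from $supp_{w_0}(\emptyset)=\infty$. Together with the running-time and PJR arguments above, this completes the proof.
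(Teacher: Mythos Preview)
Your running-time argument and your common reduction $supp_{w_k}(A_k)\ge\min_i t_i$ are correct and match the paper. Your PJR argument is also correct, though it takes a slightly different route: the paper proves the local-optimality inequality $supp_w(A)\ge\max_{c'\in C\setminus A}score(c')$ at the end of every round (Lemma~\ref{lem:315localoptimality}) and then invokes Lemma~\ref{lem:localopt}, whereas you directly exhibit a candidate $c^*$ with $score(c^*)\ge\hat t$ in every round and derive a contradiction from $supp_{w_k}(A_k)\ge\hat t$. Your argument only needs feasibility of $w_{i-1}$, not balancedness, so the parenthetical ``using that $w_{i-1}$ is balanced'' is misleading but harmless.

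The $3.15$ part has a genuine gap. Your sketch caps $w$ at level $t$, forms $f=w^*-\tilde w$, and extracts a sub-flow into $A^*\setminus A$; this is essentially the mechanism of Lemma~\ref{lem:2sols}, which yields the $2$-approximation for $\MMS$ but does not by itself give $3.15$ here. The difficulty you flag (slack trapped in voters with no unelected neighbor) is real, but your proposed resolution---a recursion $t_i\ge\Psi(supp_{w_{i-1}}(A_{i-1}),t^*)$ with a threshold tuned to the current support---is not how the constant arises, and it is unclear that such a recursion closes at $3.15$. The paper instead proves a \emph{uniform} lower bound $\max_{c'\in C\setminus A}score_{(A,w)}(c')\ge t^*/3.15$ for \emph{every} balanced partial $(A,w)$, independent of $supp_w(A)$. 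The key ingredients you are missing are: (i) the parametric family $N(a)$ of Lemma~\ref{lem:N_a}, which exploits balancedness (specifically property~3 of Lemma~\ref{lem:balanced}) to guarantee that every voter in $N(a)$ has $supp_w(A\cap C_n)\ge at^*$, together with the flow-based lower bound $\sum_{n\in N(a)}s_n\ge|A^*\setminus A|(1-a)t^*$; and (ii) the integration trick of Lemma~\ref{lem:Lebesgue}, which aggregates the slack bounds across all $a\in[0,1]$ and produces the inequality $\sum_{c'\in A^*\setminus A}prescore(c',at^*)\ge|A^*\setminus A|\cdot at^*(1/a-1+\ln a)$, whence $a=1/3.15$ gives the constant. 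Without these two pieces, your flow decomposition does not control which voters contribute to the pre-score of candidates in $A^*\setminus A$, and the argument stalls.
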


This in turn proves Theorem~\ref{thm:intro1}. 
The claim on runtime is straightforward: we established in Theorem~\ref{thm:runtimes} that $\maxscore$ runs in time $O(|E|\cdot \log k)$, so each iteration of $\phragmms$ has a runtime of $O(|E|\cdot \log k + \bal)=O(\bal)$, assuming that $\bal= \Omega(|E|\cdot \log k)$. 
In fact, in Appendix~\ref{s:algorithms} we improve upon this analysis and show how each iteration can run in time $O(|E| + \bal)$.
%In fact, in the full version we improve upon this analysis and show how each iteration can run in time $O(|E| + \bal)$.
Next, in order to prove the PJR property we need the following technical lemmas.

\begin{lemma}\label{lem:2balanced}
If $(A,w)$ and $(A',w')$ are two balanced partial solutions with $A\subseteq A'$, then $\supp_w(c) \geq \supp_{w'}(c)$ for each $c\in A$, and $\score_{(A,w)}(c')\geq \score_{(A',w')}(c')$ for each $c'\in C\setminus A'$.
\end{lemma}

\begin{lemma}\label{lem:localopt}
If the inequality 
\begin{equation} \label{eq:localopt}
\supp_w(A)\geq \max_{c'\in C\setminus A} \score(c')
\end{equation}  
holds for a full solution $(A,w)$, then $A$ satisfies PJR.
\end{lemma}

Lemma \ref{lem:2balanced} formalizes the intuition that as more candidates are added to a partial solution that is kept balanced, the scores of unelected candidates may only decrease, never increase; its proof is delayed to Appendix~\ref{s:proofs}. 
%Lemma \ref{lem:2balanced} formalizes the intuition that as more candidates are added to a partial solution that is kept balanced, the scores of unelected candidates may only decrease, never increase; its proof is delayed to the full version.
Lemma~\ref{lem:localopt} establishes the key connection that exists between our definition of score -- and by extension our heuristic -- and the PJR property, and its proof is delayed to the end of Section~\ref{s:local}. 
We prove now that the output of $\phragmms$ satisfies inequality~\ref{eq:localopt}, and hence satisfies PJR.

\begin{lemma}\label{lem:315localoptimality}
At the end of each one of the $k$ iterations of Algorithm $\phragmms$, if $(A,w)$ is the current partial balanced solution, we have that $\supp_{w}(A)\geq \max_{c'\in C\setminus A} \score_{(A,w)}(c')$.
\end{lemma}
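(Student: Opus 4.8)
The plan is to prove the inequality by induction on the iteration count $i$, tracking the partial balanced solution $(A_i, w_i)$ at the end of iteration $i$. The base case $i=0$ is the empty solution, for which $supp_{w_0}(\emptyset)=\infty$ by convention, so the inequality holds vacuously. For the inductive step, suppose at the end of iteration $i$ we have $supp_{w_i}(A_i)\geq \max_{c'\in C\setminus A_i} score_{(A_i,w_i)}(c')$, and we must show the analogous statement for $(A_{i+1}, w_{i+1})$, where iteration $i+1$ selects $c_{\max}\in C\setminus A_i$ with $t_{\max}=score_{(A_i,w_i)}(c_{\max})=\max_{c'\in C\setminus A_i} score_{(A_i,w_i)}(c')$, runs $\ins$, and then rebalances to obtain $w_{i+1}$.

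First I would control the support of the new partial committee. By Lemma~\ref{lem:insert}, immediately after $\ins(A_i,w_i,c_{\max},t_{\max})$ the solution has least support at least $\min\{supp_{w_i}(A_i), t_{\max}\}$, and since $score$ is nonnegative and the $\ins$ step inserts $c_{\max}$ with support exactly $prescore(c_{\max},t_{\max})\geq t_{\max}$, we get that $A_{i+1}$ has least support at least $t_{\max}$ before rebalancing. Rebalancing only increases the least support (by point~1 of Lemma~\ref{lem:balanced} with $r=1$), so $supp_{w_{i+1}}(A_{i+1})\geq t_{\max}$. Combined with the inductive hypothesis $supp_{w_i}(A_i)\geq t_{\max}$, this gives $supp_{w_{i+1}}(A_{i+1})\geq t_{\max}$ cleanly.

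Next I would bound the scores of the remaining unelected candidates after iteration $i+1$. For any $c'\in C\setminus A_{i+1}\subseteq C\setminus A_i$, I want to show $score_{(A_{i+1},w_{i+1})}(c')\leq t_{\max}$. The key tool is Lemma~\ref{lem:2balanced}: both $(A_i,w_i)$ and $(A_{i+1},w_{i+1})$ are balanced partial solutions — this is why the rebalancing step is essential — and $A_i\subseteq A_{i+1}$, so $score_{(A_{i+1},w_{i+1})}(c')\leq score_{(A_i,w_i)}(c')\leq t_{\max}$, where the last inequality is the definition of $t_{\max}$ as the maximum score at step $i$. Taking the maximum over $c'\in C\setminus A_{i+1}$ yields $\max_{c'\in C\setminus A_{i+1}} score_{(A_{i+1},w_{i+1})}(c')\leq t_{\max}\leq supp_{w_{i+1}}(A_{i+1})$, which is exactly the claim for iteration $i+1$.

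The main obstacle I anticipate is making sure the chain of inequalities is airtight at the boundary between the $\ins$ step and the rebalancing step: one must verify that $\ins$ genuinely leaves every old member with support $\geq t_{\max}$ (not merely $\geq\min\{supp_{w_i}(c),t_{\max}\}$, which could in principle be below $t_{\max}$ if some old member already had low support — but here the inductive hypothesis guarantees $supp_{w_i}(A_i)\geq t_{\max}$, so $\min\{supp_{w_i}(c),t_{\max}\}=t_{\max}$ for all $c$), and that the new member $c_{\max}$ is inserted with support $\geq t_{\max}$, which is precisely the content of the inequality $prescore(c_{\max},t_{\max})\geq t_{\max}$ built into the definition of $score$. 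Once these bookkeeping points are nailed down, the rest follows from Lemmas~\ref{lem:insert}, \ref{lem:balanced}, and~\ref{lem:2balanced} essentially mechanically.
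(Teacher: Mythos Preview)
Your proposal is correct and follows essentially the same approach as the paper: induction on the iteration index, with the inductive step combining Lemma~\ref{lem:insert} (to lower-bound the new least support by $t_{\max}$, using the inductive hypothesis that $supp_{w_i}(A_i)\geq t_{\max}$), the balancedness of $w_{i+1}$ (to ensure rebalancing cannot decrease least support), and Lemma~\ref{lem:2balanced} (to upper-bound the new maximum score by $t_{\max}$). The paper presents these ingredients as a single chain of inequalities rather than splitting into ``support goes up'' and ``scores go down'' parts, but the logic and the lemmas invoked are identical.
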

\begin{proof}
Let $(A_i,w_i)$ be the partial solution at the end of the $i$-th iteration. We prove the claim by induction on $i$, with the base case $i=0$ being trivial as we use the convention that $\supp_{w_0}(\emptyset)=\infty$ for any $w_0$. For $i\geq 1$, suppose that on iteration $i$ we insert a candidate $c_i$ with highest score, and let $w'$ be the vector that is output by $\ins(A_{i-1}, w_{i-1}, c_i, \score_{(A_{i-1}, w_{i-1})}(c_i))$ (Algorithm~\ref{alg:ins}). Then

\begin{align*}
\supp_{w_i}(A_i) &\geq \supp_{w'}(A_i) \\
&\geq \min\{ \supp_{w_{i-1}}(A_{i-1}), \score_{(A_{i-1}, w_{i-1})}(c_i) \} \\
&\geq \max_{c'\in C\setminus A_{i-1}} \score_{(A_{i-1}, w_{i-1})}(c') \\
&\geq \max_{c'\in C\setminus A_{i}} \score_{(A_{i}, w_{i})}(c'), 
\end{align*}
where the first inequality holds as $w_i$ is balanced for $A_i$, the second one is a property of our heuristic, the third one holds by induction hypothesis and the choice of candidate $c_i$, and the last one follows from Lemma~\ref{lem:2balanced}. 
\end{proof}

It remains to prove the claimed approximation guarantee for $\phragmms$. 
To do that, we use the following key technical result, whose proof is based on the flow decomposition theorem and is delayed to Apendix~\ref{s:flow}.
%To do that, we use the following key technical result, whose proof uses the flow decomposition theorem and is delayed to the full version. 
This result says that for a balanced partial solution, not only are there unelected candidates that can be appended with high support (as is the statement of Lemma~\ref{lem:2sols}), but they also have large scores, so we can find them efficiently with our heuristic. 
To this end, we show that there is a subset of voters with large aggregate vote strength and few representatives, so all their representatives all have large supports, and in turn the voters have large available slack.   

\begin{lemma}\label{lem:N_a}
If $(A^*, w^*)$ is an optimal solution to the maximin support instance with $t^*=\supp_{w^*}(A^*)$, and $(A,w)$ is balanced with $|A|\leq k$ and $A\neq A^*$, then for each $0\leq a\leq 1$ there is a subset $N(a)\subseteq N$ of voters such that 
\begin{enumerate}
	\item each voter $n\in N(a)$ approves of a candidate in $A^*\setminus A$;
	\item for each voter $n\in N(a)$, we have $\supp_w(A\cap C_n)\geq at^*$;
	\item $\sum_{n\in N(a)} s_n \geq |A^* \setminus A|\cdot (1-a) t^*$; and
	\item for any $b$ with $a\leq b\leq 1$ we have that $N(b)\subseteq N(a)$, and a voter $n\in N(a)$ belongs to $N(b)$ if and only if $n$ observes property 2 above with parameter $a$ replaced by $b$.
\end{enumerate}
\end{lemma}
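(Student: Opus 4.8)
\textbf{Proof proposal for Lemma~\ref{lem:N_a}.}

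The plan is to build $N(a)$ directly from a flow comparison between the balanced solution $(A,w)$ and the optimal solution $(A^*,w^*)$, in the same spirit as the proof of Lemma~\ref{lem:2sols}. First I would normalize: by decreasing components of $w^*$ we may assume $supp_{w^*}(c)=t^*$ for every $c\in A^*$, and since $(A,w)$ is balanced, point 2 of Lemma~\ref{lem:balanced} tells us each voter $n\in\cup_{c\in A}N_c$ fully spends its vote on $A$, while point 3 tells us that if $w_{nc}>0$ for $c\in A\cap C_n$ then $supp_w(c)=supp_w(A\cap C_n)$. Then I would define, for each $0\le a\le 1$,
\[
N(a):=\Big\{ n\in N : \ C_n\cap(A^*\setminus A)\neq\emptyset, \ \ supp_w(A\cap C_n)\geq at^*\Big\},
\]
with the convention $supp_w(A\cap C_n)=\infty$ when $A\cap C_n=\emptyset$, so that voters having no elected approved candidate are automatically in $N(a)$ for every $a$. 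With this definition, properties 1 and 2 hold by construction, and property 4 is immediate: the first condition defining membership does not depend on $a$, and the second is monotone in $a$, so $N(b)\subseteq N(a)$ for $b\ge a$ and a voter of $N(a)$ lies in $N(b)$ exactly when the support bound holds with parameter $b$.

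The substantive step is property 3, the lower bound $\sum_{n\in N(a)}s_n\ge |A^*\setminus A|\cdot(1-a)t^*$. Here I would consider the flow $f:=w^*-w$ and look only at how much vote arrives at $A^*\setminus A$. Each candidate $c\in A^*\setminus A$ has $supp_{w^*}(c)=t^*$ coming in along $w^*$, and $supp_w(c)=0$ along $w$ in the sense that $c\notin A$ so (after enforcing $w_{nc}=0$ for $c\notin A$) no $w$-weight sits on its edges; hence the net demand of $A^*\setminus A$ with respect to $f$ is exactly $|A^*\setminus A|\cdot t^*$. The vote feeding this demand comes from the voters $n$ with $C_n\cap(A^*\setminus A)\ne\emptyset$, i.e.\ from voters satisfying the first membership condition. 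For such a voter $n$, the amount of $w^*$-weight it can route to $A^*\setminus A$ is at most $s_n$; but if $n\notin N(a)$, then $supp_w(A\cap C_n)<at^*$, meaning $n$ already has an approved elected candidate whose support is small, and by balancedness ($w$ spends all of $n$'s vote, concentrated on minimum-support members) this forces... — and this is the delicate quantitative point — that the ``extra'' vote of $n$ available to be shifted onto $A^*\setminus A$, beyond what is already needed to keep low-support members of $A$ at their current level, is bounded. More concretely, I expect to argue that a voter outside $N(a)$ contributes at most (roughly) $a t^*$ worth of usable slack toward the demand at $A^*\setminus A$, because any surplus it sent there would have to be pulled away from a member $c\in A\cap C_n$ with $supp_w(c)<at^*$, contradicting optimality/balancedness of $w$ in the appropriate sense (swapping that vote back would raise a support that is already below the threshold that the balanced vector equalizes to). Summing: the total demand $|A^*\setminus A|t^*$ is met by at most $\sum_{n\in N(a)}s_n$ from inside $N(a)$ plus at most $|A^*\setminus A|\cdot a t^*$-worth from outside (the ``outside'' contribution being bounded by $a t^*$ times the number of sink candidates, via an averaging/decomposition argument on the flow paths), which rearranges to $\sum_{n\in N(a)}s_n\ge |A^*\setminus A|(1-a)t^*$.

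To make the ``outside contribution is at most $a t^*$ per sink candidate'' claim rigorous I would use the flow decomposition theorem on $f=w^*-w$, keeping only simple paths that end in $A^*\setminus A$, and classify their starting voters by membership in $N(a)$; Lemma~\ref{lem:subflow} guarantees the sub-flows I extract remain feasible, and Lemma~\ref{lem:balanced}(3) is what caps how much an outside voter can send without violating balancedness. The main obstacle, I expect, is exactly this bookkeeping: precisely defining ``usable slack toward $A^*\setminus A$'' for a voter and proving the per-voter or per-candidate cap of $at^*$ for voters outside $N(a)$, since a voter may approve several candidates in $A^*\setminus A$ as well as several in $A$ with differing supports, and one must be careful that reassigning flow along a path does not secretly rely on reducing some member of $A$ below its balanced level. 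Once that cap is in hand, properties 1, 2 and 4 are essentially definitional and property 3 follows by a clean averaging over $A^*\setminus A$.
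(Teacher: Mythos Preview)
Your definition of $N(a)$ is exactly the right one, and you are correct that properties 1, 2 and 4 are then immediate. The gap is in your argument for property 3.

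Your proposed bound ``outside voters contribute at most $|A^*\setminus A|\cdot a t^*$ to the demand at $A^*\setminus A$'' is not justified, and in fact the per-voter or per-sink intuition you sketch does not work. A voter $n\notin N(a)$ with $supp_w(A\cap C_n)<at^*$ may still have an arbitrarily large $s_n$, and under $w^*$ nothing prevents all of it from being routed to $A^*\setminus A$; the balancedness of $w$ says nothing about how $w^*$ distributes $n$'s vote. So there is no cap of ``$at^*$ per sink candidate'' coming from an individual voter.

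The missing idea is to introduce the auxiliary set $A':=\{c\in A:\ supp_w(c)\geq at^*\}$ and argue \emph{globally}: by point~3 of Lemma~\ref{lem:balanced}, every voter $n$ with $supp_w(A\cap C_n)<at^*$ puts all of its $w$-weight on members of $A\setminus A'$, so the entire stake of $N\setminus N'$ (where $N':=\{n: supp_w(A\cap C_n)\geq at^*\}$) is absorbed by $A\setminus A'$, giving $\sum_{n\in N\setminus N'} s_n \leq \sum_{c\in A\setminus A'} supp_w(c) < |A\setminus A'|\cdot at^* \leq |A^*\setminus A'|\cdot at^*$, where the last step uses $|A|\leq |A^*|$. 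This is a bound in terms of $|A^*\setminus A'|$, not $|A^*\setminus A|$, and that distinction matters. The paper then runs the flow decomposition on $f=w^*-w$ targeting $A^*\setminus A'$ (not $A^*\setminus A$), lower-bounds that flow by $|A^*\setminus A'|(1-a)t^*\geq |A^*\setminus A|(1-a)t^*$, and finally needs a second structural observation you did not anticipate: the paths in that sub-flow cannot pass through $N\setminus N'$ nor through $A^*\cap A\setminus A'$, which forces their terminal edges to land in $A^*\setminus A$ from a voter in $N'$, hence in $N(a)$. Without the set $A'$ and this routing restriction, the accounting does not close.
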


As a warm-up, we show how this last result easily implies a $4$-approximation guarantee for $\phragmms$.

\begin{lemma}
If $(A,w)$, $(A^*,w^*)$ and $t^*$ are as in Lemma~\ref{lem:N_a}, there is a candidate $c'\in A^*\setminus A$ with $\score(c')\geq t^*/4$. Hence, $\phragmms$ provides a $4$-approximation for the maximin support problem.
\end{lemma}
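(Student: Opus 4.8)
The plan is to feed Lemma~\ref{lem:N_a} the single parameter choice $a=1/2$ and then show that, at the threshold $t=t^*/4$, the candidates in $A^*\setminus A$ collectively carry enough pre-score that one of them must individually clear the threshold $t$, which is exactly what is needed to certify $score(c')\ge t^*/4$.

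First I would apply Lemma~\ref{lem:N_a} with $a=1/2$ to obtain a voter set $N(1/2)$ whose voters each have a neighbor in $A^*\setminus A$, each satisfy $supp_w(A\cap C_n)\ge t^*/2$, and whose total weight is $\sum_{n\in N(1/2)}s_n\ge |A^*\setminus A|\cdot t^*/2$. Next, fixing $t:=t^*/4$, I would bound the slack of each such voter from below: since every member $c\in A\cap C_n$ has $supp_w(c)\ge t^*/2=2t$, each factor $\min\{1,t/supp_w(c)\}$ appearing in~\eqref{eq:slack} is at most $1/2$, so feasibility of $w$ gives $slack_{(A,w)}(n,t)\ge s_n-\tfrac12\sum_{c\in A\cap C_n}w_{nc}\ge\tfrac12 s_n$ for every $n\in N(1/2)$.

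Then, because all slacks are non-negative and each voter of $N(1/2)$ contributes to the pre-score of at least one candidate in $A^*\setminus A$, summing~\eqref{eq:prescore} over $A^*\setminus A$ yields $\sum_{c'\in A^*\setminus A}prescore(c',t)\ge\sum_{n\in N(1/2)}slack(n,t)\ge\tfrac12\sum_{n\in N(1/2)}s_n\ge\tfrac14\,|A^*\setminus A|\,t^*$. An averaging argument then produces a candidate $c'\in A^*\setminus A$ with $prescore(c',t)\ge t^*/4=t$, and by the definition of $score$ this gives $score(c')\ge t^*/4$, which is the first claim. For the approximation guarantee I would take the output $(A,w)$ of $\balanced$, which is balanced and full; if $A=A^*$ the solution is already optimal, and otherwise the first claim furnishes $c'\in A^*\setminus A\subseteq C\setminus A$ with $score_{(A,w)}(c')\ge t^*/4$, so Lemma~\ref{lem:315localoptimality} gives $supp_w(A)\ge\max_{c'\in C\setminus A}score(c')\ge t^*/4$, i.e.\ a $4$-approximation since $t^*$ is the optimum.

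The argument is short once Lemma~\ref{lem:N_a} is available; the only thing that needs care is the interplay between the two estimates --- shrinking $t$ relative to $at^*$ improves the per-voter slack lower bound, while shrinking $a$ improves the total weight $\sum_{n\in N(a)}s_n$ --- and balancing them to land a constant is the only mildly delicate point, which $a=1/2$, $t=t^*/4$ resolves cleanly. Pushing the constant below $4$ toward $3.15$ will instead require integrating over $a$ rather than committing to one value, which I would defer to the sharper analysis later in the section.
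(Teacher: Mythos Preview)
Your proof of the score bound is correct and essentially identical to the paper's: apply Lemma~\ref{lem:N_a} with $a=1/2$, lower-bound each voter's slack at threshold $t^*/4$ by $s_n/2$ using property~2, sum and invoke property~3, then average. For the $4$-approximation, the paper instead argues by induction over the $k$ iterations (applying the score bound to each partial balanced solution together with Lemma~\ref{lem:insert}), whereas you apply the score bound once to the final output and close with Lemma~\ref{lem:315localoptimality}; both routes are valid and of comparable length, with yours arguably cleaner since the inductive step is already packaged into Lemma~\ref{lem:315localoptimality}.
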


\begin{proof}
We apply Lemma~\ref{lem:N_a} with $a=1/2$, and we refer to the four properties stated in that lemma. We have that

\begin{align*}
    \sum_{c'\in A^*\setminus A} \prescore(c',t^*/4) &=\sum_{c'\in A^*\setminus A} \sum_{n\in N_{c'}} \slack(n,t^*/4) \\
		&\geq \sum_{n\in N(a)} \slack(n,t^*/4) \\
    &\geq \sum_{n\in N(a)} \Big[ s_n - \frac{t^*}{4}\sum_{c\in A\cap C_n} \frac{w_{nc}}{\supp_w(c)} \Big] \\
    &\geq \sum_{n\in N(a)} \Big[ s_n - \frac{1}{2}\sum_{c\in A\cap C_n} w_{nc} \Big] \\
    &\geq \frac{1}{2}\sum_{n\in N(a)} s_n \\
		&\geq \frac{1}{2} (|A^*\setminus A|\cdot t^*/2) = |A^*\setminus A|\cdot t^*/4, 
\end{align*}
where the five inequalities hold respectively by property 1 (which implies $N(a)\subseteq \cup_{c'\in A^*\setminus A} N_{c'}$), by definition of slack (equation~\ref{eq:slack}), by property 2, by feasibility (inequality~\ref{eq:feasible}), and by property 3.
Therefore, by an averaging argument, there must be a candidate $c'\in A^*\setminus A$ with $\prescore(c',t^*/4)\geq t^*/4$, which in turn implies that $\score(c')\geq t^*/4$ by definition of score. 
The $4$-approximation guarantee for Algorithm $\phragmms$ easily follows by induction on the $k$ iterations, using Lemma~\ref{lem:insert} and the fact that rebalancing a partial solution never decreases its least member support.
\end{proof}

To get a better approximation guarantee for the $\phragmms$ rule and finish the proof of Theorem~\ref{thm:315}, we apply Lemma~\ref{lem:N_a} with a more carefully selected parameter $a$, 
and use the following technical result whose proof is delayed to Appendix~\ref{s:proofs}.
%and use the following technical result whose proof is delayed to the full version.

\begin{lemma}\label{lem:Lebesgue}
Consider a strictly increasing and differentiable function $f:\mathbb{R}\rightarrow \mathbb{R}$, with a unique root $\chi$. For a finite sum $\sum_{i\in I} \alpha_i f(x_i)$ where $\alpha_i\in\mathbb{R}$ and $ x_i\geq \chi$ for each $i\in I$, we have that
$$\sum_{i\in I} \alpha_i f(x_i) = \int_{\chi}^{\infty} f'(x) \big(\sum_{i\in I: \ x_i\geq x} \alpha_i\big)dx.$$
\end{lemma}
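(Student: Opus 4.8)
\textbf{Proof plan for Lemma~\ref{lem:Lebesgue}.}

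The plan is to recognize this as an instance of summation by parts phrased in continuous/integral language, and to verify it by a direct computation starting from the right-hand side. First I would note that since $f$ is strictly increasing with unique root $\chi$, we have $f(\chi)=0$, and for every $x_i\geq\chi$ the value $f(x_i)$ can be written as an integral of its derivative: $f(x_i)=f(x_i)-f(\chi)=\int_{\chi}^{x_i} f'(x)\,dx=\int_{\chi}^{\infty} f'(x)\cdot\mathbbm{1}[x\leq x_i]\,dx$. This is the key identity that converts each term $f(x_i)$ into an integral over a fixed domain $[\chi,\infty)$ with an indicator cutting it off at $x_i$.

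Next I would substitute this into the left-hand side and interchange the (finite) sum with the integral, which is justified because the sum is finite:
\begin{align*}
\sum_{i\in I} \alpha_i f(x_i) &= \sum_{i\in I} \alpha_i \int_{\chi}^{\infty} f'(x)\,\mathbbm{1}[x_i\geq x]\,dx = \int_{\chi}^{\infty} f'(x) \Big(\sum_{i\in I} \alpha_i \,\mathbbm{1}[x_i\geq x]\Big)dx = \int_{\chi}^{\infty} f'(x) \Big(\sum_{i\in I:\ x_i\geq x} \alpha_i\Big)dx,
\end{align*}
which is exactly the claimed formula. The differentiability of $f$ ensures $f'$ exists and the fundamental theorem of calculus applies; the finiteness of $I$ makes the Fubini-type interchange elementary (it is just a finite linear combination of integrals), so no measure-theoretic subtleties arise.

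The only mild subtlety — and the one place I would be slightly careful — is the behavior at the upper endpoint: the integrand $f'(x)\big(\sum_{i:\,x_i\geq x}\alpha_i\big)$ vanishes identically for $x > \max_{i\in I} x_i$ since the inner sum is then empty, so the integral over $[\chi,\infty)$ is really an integral over the bounded interval $[\chi,\max_i x_i]$ and is finite and well-defined regardless of the growth of $f'$ at infinity. I would also observe that the hypothesis $x_i\geq\chi$ for all $i$ is what guarantees every $f(x_i)\geq 0$ can be represented as $\int_\chi^{x_i}f'$ with the lower limit $\chi$; if some $x_i$ were below $\chi$ the representation would pick up a sign, which is why the hypothesis is stated. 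There is no real obstacle here — the lemma is a bookkeeping device (a layer-cake / Abel summation identity) that will be applied later with a specific choice of $f$ to rearrange the pre-score sums in the $3.15$-approximation analysis.
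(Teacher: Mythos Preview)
Your proof is correct and rests on the same layer-cake/Fubini idea as the paper's, but your execution is slightly more direct. The paper first writes $f(x_i)=\int_0^{f(x_i)}dt$, extends this with an indicator to an integral over the full range of $f$, sums, and then performs the substitution $t=f(x)$, $dt=f'(x)\,dx$ (invoking the inverse $f^{-1}$) to land in the $x$-domain. You instead apply the fundamental theorem of calculus immediately, writing $f(x_i)=\int_\chi^{x_i}f'(x)\,dx$ in the $x$-domain from the start, which bypasses the change of variables and the need for $f^{-1}$ altogether. Both arguments are at the same level of rigor (neither worries about integrability of $f'$ beyond what differentiability gives, which is fine for the smooth $f(x)=1-a/x$ used in the application), and your observation that the integrand vanishes beyond $\max_i x_i$ cleanly handles the improper upper limit.
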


\begin{lemma}\label{lem:candidate315}
If $(A,w)$, $(A^*,w^*)$ and $t^*$ are as in Lemma~\ref{lem:N_a}, there is a candidate $c'\in A^*\setminus A$ with $\score(c')\geq t^*/3.15$. Hence, $\phragmms$ provides a $3.15$-approximation for the maximin support problem.
\end{lemma}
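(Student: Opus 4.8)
The plan is to follow the template of the $4$-approximation proof above, but to replace the single-parameter estimate (which used $a=1/2$) by an integration over the whole nested family $\{N(a)\}_{a\in[1/\gamma,1]}$ supplied by Lemma~\ref{lem:N_a}, combined with the Lebesgue-type identity of Lemma~\ref{lem:Lebesgue}. Concretely I would fix a constant $\gamma$ satisfying $\gamma-\ln\gamma\ge 2$ — note $\gamma=3.15$ is admissible since $3.15-\ln 3.15\approx 2.003$ — set $t:=t^*/\gamma$, and aim to show $\sum_{c'\in A^*\setminus A}prescore(c',t)\ge |A^*\setminus A|\cdot t$. An averaging argument then produces some $c'\in A^*\setminus A$ with $prescore(c',t)\ge t$, hence $score(c')\ge t=t^*/\gamma$, which is the first claim. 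As in the warm-up, the $3.15$-approximation for $\balanced$ then follows by induction over the $k$ iterations: at the start of iteration $i$ the partial solution $(A,w)$ is balanced with $|A|=i-1<k$, so $A\ne A^*$, and the first claim yields an unelected candidate of score at least $t^*/\gamma$; since $\phragmms$ inserts a candidate of maximum score and rebalancing never decreases the least member support (Lemma~\ref{lem:balanced} part 1 with $r=1$, together with Lemma~\ref{lem:insert}), the least member support never drops below $t^*/\gamma$.

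For the main inequality I would first rewrite the left-hand side as $\sum_{n\in N}|(A^*\setminus A)\cap C_n|\cdot slack(n,t)\ge\sum_{n\in N(0)}slack(n,t)=:\Phi$, using that slacks are non-negative (feasibility, inequality~\eqref{eq:feasible}) and that every $n\in N(0)$ has a neighbor in $A^*\setminus A$ (property~1 of Lemma~\ref{lem:N_a}). The crucial simplification is that for a \emph{balanced} $(A,w)$ the slack has a closed form: by parts 2 and 3 of Lemma~\ref{lem:balanced}, for a voter $n$ with $A\cap C_n\ne\emptyset$ we have $\sum_{c\in A\cap C_n}w_{nc}=s_n$ and every edge $nc$ with $w_{nc}>0$ lands on a member whose support equals the common value $\sigma_n:=supp_w(A\cap C_n)$, so by~\eqref{eq:slack}, $slack(n,t)=s_n\max\{0,1-t/\sigma_n\}$; this also holds (with $\sigma_n=\infty$) when $A\cap C_n=\emptyset$. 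Writing $x_n:=\sigma_n/t^*$ so that $1-t/\sigma_n=1-\tfrac{1}{\gamma x_n}$, only voters with $x_n>1/\gamma$ contribute, and for those $\max\{0,1-\tfrac{1}{\gamma x_n}\}=\int_{1/\gamma}^{x_n}\tfrac{du}{\gamma u^2}$ (also correct in the limit $x_n=\infty$). Applying Lemma~\ref{lem:Lebesgue} to $f(u)=1-\tfrac{1}{\gamma u}$ on $[1/\gamma,\infty)$ gives $\Phi=\int_{1/\gamma}^{\infty}\tfrac{1}{\gamma u^2}\,S(u)\,du$, where $S(u):=\sum_{n\in N(0):\,\sigma_n\ge u t^*}s_n$. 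For $u\in[1/\gamma,1]$ this set is exactly $N(u)$ of Lemma~\ref{lem:N_a}, so $S(u)\ge|A^*\setminus A|(1-u)t^*$ by property~3; discarding the non-negative tail $u>1$ yields $\Phi\ge\tfrac{|A^*\setminus A|\,t^*}{\gamma}\int_{1/\gamma}^{1}\tfrac{1-u}{u^2}\,du=\tfrac{|A^*\setminus A|\,t^*}{\gamma}\big(\gamma-1-\ln\gamma\big)$, and since $\gamma-1-\ln\gamma\ge1$ by the choice of $\gamma$, we conclude $\Phi\ge|A^*\setminus A|\cdot t$, as required.

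I expect the only genuine obstacle to be the bookkeeping of the layer-cake step: checking that the closed form for $slack(n,t)$ is valid for \emph{all} voters in $N(0)$ (those with $A\cap C_n=\emptyset$ via $\sigma_n=\infty$, and those with $\sigma_n\le t$, which simply contribute $0$), that the passage to the integral correctly absorbs the $x_n=\infty$ terms and the boundary at $u=1/\gamma$, and that restricting the integral to $u\in[1/\gamma,1]$ — rather than also attempting to lower-bound $S(u)$ for $u>1$ — still leaves enough room, which it does precisely because the resulting requirement $\gamma-\ln\gamma\ge 2$ is met already at $\gamma\approx 3.146<3.15$. Everything else (the averaging step and the inductive lift to $\balanced$) is identical to the $4$-approximation argument given above.
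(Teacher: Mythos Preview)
Your proposal is correct and follows essentially the same approach as the paper: both fix $a=1/\gamma$ with $\gamma=3.15$, lower-bound $\sum_{c'\in A^*\setminus A}prescore(c',at^*)$ by a sum of slacks over the voters of Lemma~\ref{lem:N_a}, apply the layer-cake identity of Lemma~\ref{lem:Lebesgue} to the function $1-a/x$, use property~3 to bound the tail sums by $|A^*\setminus A|(1-x)t^*$ on $[a,1]$, and arrive at the same integral $\int_a^1(x^{-2}-x^{-1})\,dx=1/a-1+\ln a=\gamma-1-\ln\gamma\ge 1$. The only cosmetic differences are that you start from $N(0)$ and use the exact balanced-slack identity $slack(n,t)=s_n\max\{0,1-t/\sigma_n\}$ (via parts~2 and~3 of Lemma~\ref{lem:balanced}) before restricting to $x_n\ge 1/\gamma$, whereas the paper starts directly from $N(a)$ and reaches the same expression through a short chain of inequalities.
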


\begin{proof}
We refer to Lemma~\ref{lem:N_a} and its properties, with a parameter $0\leq a\leq 1$ to be defined later. We have
\begin{align*}
    \sum_{c'\in A^*\setminus A} \prescore(c',at^*)  
		&= \sum_{c'\in A^*\setminus A} \ \sum_{n\in N_c} \slack(n, at^*) \\
		&\geq \sum_{n\in N(a)} \slack(n, at^*) \\
    &\geq \sum_{n\in N(a)} \Big[ s_n - at^* \sum_{c\in A\cap C_n} \frac{w_{nc}}{\supp_w(c)} \Big] \\
    &\geq \sum_{n\in N(a)} \Big[ s_n - \frac{at^*}{\supp_w(A\cap C_n)} \sum_{c\in A\cap C_n} w_{nc} \Big] \\
    &\geq \sum_{n\in N(a)} s_n\Big[ 1- \frac{at^*}{\supp_w(A\cap C_n)} \Big], 
\end{align*}
where the four inequalities hold respectively by property 1, equation~\ref{eq:slack}, property 2 and inequality~\ref{eq:feasible}, and where $\supp_w(\emptyset)=\infty$ by convention. 
At this point, we apply Lemma~\ref{lem:Lebesgue} over function $f(x):=1-a/x$, which has the unique root $\chi=a$, and index set $I=N(a)$ with $\alpha_n=s_n$ and $x_n=\supp_w(A\cap C_n)/t^*$. We obtain
\begin{align*}
    \sum_{c'\in A^*\setminus A} \prescore(c',at^*) 
		&\geq \int_{a}^{\infty} f'(x) \Big( \sum_{n\in N(a): \ \supp_w(A\cap C_n)\geq xt^*} s_n \Big)dx\\
    &=\int_{a}^{\infty} \frac{a}{x^2}\Big( \sum_{n\in N(x)} s_n \Big)dx \\
    &\geq \int_{a}^1 \frac{a}{x^2} \Big( |A^*\setminus A|\cdot (1-x)t^* \Big)dx \\
    & = |A^*\setminus A|\cdot at^* \int_{a}^1 \Big( \frac{1}{x^2} - \frac{1}{x} \Big)dx \\
		&= |A^*\setminus A|\cdot at^*\Big(\frac{1}{a} - 1 + \ln  a\Big),
\end{align*}
where we exploited properties 4 and 3. 
If we now set $a=1/3.15$, we have that $1/a - 1 + \ln a\geq 1$, so by an averaging argument there is a candidate $c'\in A^*\setminus A$ for which $\prescore(c',at^*)\geq at^*$, and hence $\score(c')\geq at^*$. The approximation guarantee for the $\phragmms$ rule follows by induction on the $k$ iterations, as before.
\end{proof}

\section{Verifying the solution}\label{s:local}

We start the section with a key property of algorithm $\phragmms$ as motivation.

\begin{theorem}\label{thm:315guarantee}
If a balanced solution $(A,w)$ observes inequality~\ref{eq:localopt}, then it simultaneously satisfies PJR and a 3.15-approximation for maximin support. 
Testing all conditions (feasibility, balancedness and the previous inequality) can be done in time $O(|E|)$, and the output solution of $\phragmms$ is guaranteed to satisfy these conditions.
\end{theorem}

\begin{proof}
The first statement follows from Lemmas \ref{lem:localopt} and \ref{lem:candidate315}, and the third one from Lemma~\ref{lem:315localoptimality}. 
Feasibility (inequality~\ref{eq:feasible}) can clearly be checked in time $O(|E|)$, as can balancedness by Lemma~\ref{lem:balanced}. 
If $t:=\supp_w(A)$, inequality~\ref{eq:localopt} is equivalent to $t\geq \max_{c'\in C\setminus A} \prescore(c',t)$, which is tested with algorithm $\maxprescore(A,w,t)$ in time $O(|E|)$ by Theorem~\ref{thm:runtimes}.
\end{proof}

As we argued in the introduction, the result above is one of the most relevant features of our proposed election rule, and is essential for its implementation over a blockchain network as it enables its adaptation into a verifiable computing scheme. As such, the rule may be executed by off-chain workers, leaving only the linear-time tests mentioned in the previous theorem to be performed on-chain, to ensure the quality of the solution found. 
   
To finish the proof of Theorem~\ref{thm:intro2}, it remains to prove Lemma~\ref{lem:localopt} -- which we do at the end of this section -- and show that the verification process above admits a parallel execution -- which we do next. 
In particular, for a parameter $p$, we consider the distribution of this process over $p$ computing units that execute in sequence, such as $p$ consecutive blocks in a blockchain network. 
We remark however that our description below may be easily adapted to concurrent execution if desired.

\begin{lemma}\label{lem:parallel}
For any integer $p\geq 1$, both the input election instance as well as any solution $(A,w)$ to it can be distributed into $p$ data sets, such that each data set is of size $O(|E|/p + |C|)$. 
Moreover, all the tests mentioned in Theorem~\ref{thm:315guarantee} can be executed by $p$ sequential computing units such that each unit only requires access to one data set and runs in time $O(|E|/p + |C|)$. 
Therefore, for $p$ sufficiently large, each unit can be made to run in time $O(|C|)$.
\end{lemma}

\begin{proof}
Partition the voter set $N=\cup_{i=1}^p N^i$ into $p$ subsets of roughly equal size, and let $G^i$ be the subgraph of the input approval graph $G$ induced by $N^i\cup C$ (corresponding to the ballots of voters in $N^i$). 
Consider $p$ data sets where the $i$-th one stores subgraph $G^i$ along with the list of vote strengths for voters in $N^i$. 
Next, we assume that an untrusted party provides a solution $(A,w)$, and we assume that they also provide its corresponding vector $(\supp'_w(c))_{c\in A}$ of member supports, where the prime symbol indicates that these are claimed values to be verified. 
This solution is distributed so that the $i$-th data set stores the full committee $A$, the claimed supports, and the restriction $w^i$ of the edge weight vector $w$ over $G^i$. 
Clearly, each data set is of size $O(|E|/p + |C|)$.

Now consider $p$ computing units running in sequence, where the $i$-th unit has access to the $i$-th data set, and recall that the verification of solution $(A,w)$ consists of four tests: a) feasibility, b) balancedness, c) correctness of the claimed member supports, and d) the inequality $t\geq \max_{c'\in C\setminus A} \prescore(c',t)$, where we define $t:=\supp'_w(A)$. 
To avoid dependencies across these tests, our general strategy is to assume that the claimed supports are correct, except obviously for test c. 
For example, since both feasibility (inequality~\ref{eq:feasible}) and balancedness (properties 2 and 3 of Lemma~\ref{lem:balanced}) are checked on a per-voter basis, the $i$-th unit can perform these checks for its own subset of voters $N^i$, using the claimed supports to check property 3 of Lemma~\ref{lem:balanced}. 

Tests c and d, on the other hand, require the cooperation of all units. 
The $i$-th unit can compute a vector $(\supp_{w^i}(c))_{c\in A}$ of supports relative to the local voters in $N^i$, so it follows by induction that it can also compute the partial sum $\sum_{j\leq i} (\supp_{w^j}(c))_{c\in A}$, and communicate it to the $(i+1)$-st unit. The last unit can then compute the full vector $\sum_{j\leq p} (\supp_{w^j}(c))_{c\in A}=(\supp_{w}(c))_{c\in A}$, and check that it matches the claimed supports. 

Similarly, the $i$-th unit can compute the vector $(\slack(n,t))_{n\in N^i}$ of slacks for $N^i$ (equation~\ref{eq:slack}), and use it to find a vector of parameterized scores relative to the local voters, $(\prescore^i(c',t))_{c'\in C\setminus A}$, where $\prescore^i(c', t):=\sum_{n\in N_{c'}\cap N^i} \slack(n,t)$. 
Again by induction, this unit can also find the partial sum $\sum_{j\leq i} (\prescore^j(c',t))_{c'\in C\setminus A}$, and communicate it to the $(i+1)$-st unit. 
The last unit then retrieves the full vector $\sum_{j\leq p} (\prescore^j(c',t))_{c'\in C\setminus A}=(\prescore(c',t))_{c'\in C\setminus A}$, and verifies that all parameterized scores are bounded by $t$. 
Clearly, each unit has a time and memory complexity of $O(|E|/p + |C|)$.
\end{proof}

We begin our analysis of the PJR property by defining a \emph{parametric version} of it, that measures just how well represented the voters are by a given committee. This generalization turns the property from binary to quantitative.

\begin{definition}
For any $t\in\R$, a committee $A\subseteq C$ (of any size) satisfies PJR with parameter $t$ ($t$-PJR for short) if, for any group $N'\subseteq N$ of voters and any integer $0<r\leq |A|$, we have that
\begin{itemize}
\item[a)] if $|\cap_{n\in N'} C_n|\geq r$
\item[b)] and $\sum_{n\in N'} s_n \geq r\cdot t$, 
\item[c)] then $|A\cap (\cup_{n\in N'} C_n)|\geq r$.
\end{itemize}
\end{definition}

In words, if there is a group $N'$ of voters with at least $r$ commonly approved candidates, and enough aggregate vote strength to provide each of these candidates with a support of at least $t$, then this group must be represented by at least $r$ members in committee $A$, though not necessarily commonly approved. 
Notice that the standard version of PJR is equivalent to $\hat{t}$-PJR for $\hat{t}:=\sum_{n\in N} s_n / |A|$, and that if a committee satisfies $t$-PJR then it also satisfies $t'$-PJR for each $t'\geq t$, i.e., the property gets stronger as $t$ decreases. 
This is in contrast to the maximin support objective, which implies a stronger property as it increases.

We remark that the notion of \emph{average satisfaction} introduced in~\cite{sanchez2017proportional} also attempts to quantify the level of proportional representation achieved by a committee. 
Informally speaking, that notion measures the average number of representatives in the committee that each voter in a group has, for any group of voters with sufficiently high aggregate vote strength and cohesiveness. 
In contrast, with parametric PJR we focus on providing sufficient representatives to the group as a whole and not to each individual voter, and we measure the aggregate vote strength required to gain adequate representation.
Interestingly, the average satisfaction measure is closely linked to the EJR property, and in particular in~\cite{aziz2018complexity} this measure is used to prove that a local search algorithm achieves EJR; 
%similarly, in the full version we use parameteric PJR to prove that a local search version of $\phragmms$ achieves standard PJR.
similarly, in Appendix~\ref{s:LS} we use parametric PJR to prove that a local search version of $\phragmms$ achieves standard PJR.

Testing whether an arbitrary solution satisfies standard PJR is known to be coNP-complete~\cite{aziz2018complexity}, hence the same remains true for its parametric version.
We provide next a sufficient condition for a committee to satisfy $t$-PJR which is efficiently testable, based on our definitions of parameterized score and score.  

\begin{lemma} \label{lem:locality}
If for a feasible solution $(A,w)$ there is a parameter $t\in\R$ such that $\max_{c'\in C\setminus A} \prescore(c',t)<t$, or equivalently, $\max_{c'\in C\setminus A} \score(c') <t$, then committee $A$ satisfies $t$-PJR. 
This condition can be tested in $O(|E|)$ time.
\end{lemma}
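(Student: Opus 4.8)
The plan is to argue by contradiction: assuming committee $A$ violates $t$-PJR, I will exhibit an unelected candidate $c'\in C\setminus A$ with $prescore_{(A,w)}(c',t)\geq t$, contradicting the hypothesis $\max_{c'\in C\setminus A}prescore(c',t)<t$. Before that I would dispose of the claimed equivalence of the two forms of the hypothesis. For a fixed $c'\in C\setminus A$, the map $t\mapsto prescore_{(A,w)}(c',t)$ is continuous and non-increasing — raising $t$ only increases each factor $\min\{1,t/supp_w(c)\}$ appearing in \eqref{eq:slack}, hence decreases every slack — so $g(t):=prescore_{(A,w)}(c',t)-t$ is strictly decreasing; since the input graph has no isolated vertices, $N_{c'}\neq\emptyset$ and $g(0)=\sum_{n\in N_{c'}}s_n>0$, so $g$ has a unique root, which is precisely $score(c')$. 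Hence $prescore_{(A,w)}(c',t)<t$ if and only if $t>score(c')$, so the two displayed conditions coincide. (In particular, since $score(\cdot)\geq 0$, either condition forces $t>0$.)

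Suppose now that $A$ fails $t$-PJR, witnessed by a group $N'\subseteq N$ and an integer $0<r\leq|A|$ satisfying (a), (b), (c). The genuinely non-routine step is to extract a suitable candidate by playing (b) against (c): by (b) there are at least $r$ candidates commonly approved by every voter of $N'$, while by (c) strictly fewer than $r$ of the candidates approved by some voter of $N'$ lie in $A$; hence at least one commonly approved candidate is unelected. Fix such a candidate $c'\in C\setminus A$; then $c'\in C_n$ for every $n\in N'$, i.e.\ $N'\subseteq N_{c'}$. It now remains to show $prescore_{(A,w)}(c',t)\geq t$.

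For the estimate, note first that every slack is non-negative, directly from feasibility \eqref{eq:feasible}. Since $N'\subseteq N_{c'}$, definition \eqref{eq:prescore} gives $prescore_{(A,w)}(c',t)\geq\sum_{n\in N'}slack_{(A,w)}(n,t)$. Expanding via \eqref{eq:slack} and exchanging the order of summation, the subtracted double sum equals $\sum_{c\in B}\min\{1,t/supp_w(c)\}\sum_{n\in N'\cap N_c}w_{nc}$, where $B:=A\cap(\cup_{n\in N'}C_n)$ is the set of committee members approved by someone in $N'$. For each $c\in B$ I bound $\sum_{n\in N'\cap N_c}w_{nc}\leq\sum_{n\in N_c}w_{nc}=supp_w(c)$, so the $c$-summand is at most $\min\{supp_w(c),t\}\leq t$; summing over $B$ and invoking $|B|<r$ from (c) bounds the double sum by $(r-1)t$. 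Combining this with $\sum_{n\in N'}s_n\geq rt$ from (a) yields $\sum_{n\in N'}slack_{(A,w)}(n,t)\geq rt-(r-1)t=t$, hence $prescore_{(A,w)}(c',t)\geq t$, the desired contradiction.

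The main obstacle — in fact essentially the only step that is not mechanical — is the combinatorial argument in the second paragraph that produces the unelected, commonly approved candidate $c'$: this is precisely what turns the "the group is under-represented as a whole" hypothesis into a single-candidate pre-score inequality, and it hinges on comparing the cardinalities in (b) and (c) correctly. Everything after that is a direct manipulation of the definitions of $slack$ and $prescore$ together with the feasibility constraint \eqref{eq:feasible}.
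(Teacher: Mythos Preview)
Your proof is correct and follows essentially the same route as the paper: argue the contrapositive, use (b) and (c) to find a commonly approved unelected candidate $c'$ with $N'\subseteq N_{c'}$, and then lower-bound $prescore(c',t)$ by $rt-(r-1)t=t$ using (a) and (c). The only cosmetic difference is that the paper replaces $\min\{1,t/supp_w(c)\}$ by $t/supp_w(c)$ before summing, whereas you keep the $\min$ and bound each $c$-summand by $\min\{supp_w(c),t\}\leq t$; your version is arguably cleaner since it sidesteps the degenerate case $supp_w(c)=0$, and you also spell out the equivalence of the two hypotheses, which the paper leaves implicit.
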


\begin{proof} 
We prove the contrapositive of the claim. If $A$ does not satisfy $t$-PJR, there must be a subset $N'\subseteq N$ of voters and an integer $r>0$ that observe properties a) and b) above but fail property c). 
By property a) and the negation of c), set $(\cap_{n\in N'} C_n)\setminus A$ must be non-empty: let $c'$ be a candidate in it. 
We will prove that for any feasible weight vector $w\in \R^E$, it holds that $\prescore(c',t)\geq t$, and consequently $\score(c')\geq t$ by the definition of score. We have
\begin{align*} 
\prescore(c',t) &= \sum_{n\in N_{c'}}  \slack(n,t) 
\geq \sum_{n\in N'} \slack(n,t) \\
&\geq \sum_{n\in N'} \Big(s_n - t\cdot \sum_{c\in A\cap C_n} \frac{w_{nc}}{\supp_w(c)}\Big)  \\
&= \sum_{n\in N'} s_n - t \cdot \sum_{c\in A\cap (\cup_{n\in N'} C_n)} 
\frac{\sum_{n\in N'\cap N_c} w_{nc}}{\sum_{n\in N_c} w_{nc}} \\ 
& \geq t\cdot r - t\cdot |A\cap (\cup_{n\in N'} C_n)| \\
& \geq t\cdot r - t\cdot (r-1) = t, 
\end{align*}
where the first inequality holds as $N'\subseteq N_{c'}$ by our choice of candidate $c'$, the second one holds by definition of slack, the third one holds by property b) and because the fraction on the third line is at most $1$ for each candidate $c$, and the last inequality holds by negation of c). 
This proves that $\prescore(c',t) \geq t$. 

For a given solution $(A,w)$ and parameter $t$, one can verify the condition above in time $O(|E|)$ by computing $\maxprescore(A,w,t)$ and comparing the output to $t$; see Theorem~\ref{thm:runtimes}.
\end{proof}

The proof of Lemma~\ref{lem:localopt} now follows as a corollary.

\begin{proof}[Proof of Lemma~\ref{lem:localopt}]
Let $c_{\max}$ be a candidate with highest score $t_{\max}=\score(c_{\max})=\max_{c'\in C\setminus A} \score(c')$. 
If $\supp_w(A)\geq t_{\max}$, it follows from Lemma~\ref{lem:insert} that if we execute $\ins(A,w,c_{\max}, t_{\max})$, we obtain a solution $(A+c_{\max}, w')$ with $\supp_{w'}(A+c_{\max})=t_{\max}$. 
Now, by feasibility of vector $w'$, we have the inequality $\sum_{n\in N} s_n \geq \sum_{c\in A+c_{\max}} \supp_{w'}(c) \geq (k+1)\cdot t_{\max}$, which implies that $t_{\max}\leq \sum_{n\in N} s_n / (k+1) < \sum_{n\in N} s_n / k =: \hat{t}$. 
By Lemma~\ref{lem:locality} above, having $t_{\max} < \hat{t}$ implies that $A$ satisfies $\hat{t}$-PJR, which is standard PJR.
\end{proof}

We end the section with the observation that inequality~\ref{eq:localopt} corresponds to a notion of local optimality for solution $(A,w)$. 
Indeed, if the inequality did not hold we could improve upon the solution by iteratively swapping the member with least support for the unelected candidate with highest score, resulting in an increase of the least member support and/or a decrease of the highest score among unelected candidates, which by Lemma~\ref{lem:locality} strengthens the level of parametric PJR of the solution. 
Therefore, the fact that $\phragmms$ always returns a locally optimal solution (Lemma~\ref{lem:315localoptimality}) implies standard PJR but constitutes a strictly stronger property. 
%In the full version we formalize this local search algorithm and use it to prove Theorem~\ref{thm:intro3}.
%In Appendix~\ref{s:LS}, we formalize this local search algorithm and use it to prove Theorem~\ref{thm:intro3}.
In the next section, we formalize this local search algorithm and use it to prove Theorem~\ref{thm:intro3}.

\section{A local search algorithm}\label{s:LS}

Suppose that we know of an approximation algorithm for maximin support with no known guarantee to satisfy the PJR property, or we happen to know of a high quality solution in terms of maximin support but we ignore if it satisfies PJR. Can we use it to find a new solution of no lesser quality which also satisfies PJR? And can we efficiently convince a verifier of this fact? We answer these questions in the positive for the first time, and prove Theorem~\ref{thm:intro3}.

We present a local search algorithm that takes an arbitrary full solution as input, and iteratively drops a member of least support and inserts a new candidate with highest score, using the heuristic presented in Section~\ref{s:inserting}. 
The procedure always maintains or increases the value of the least member support, hence the quality of the solution is preserved. Furthermore, as this local search converges to a locally optimal solution, the solutions found along the way are guaranteed to satisfy PJR after a limited number of iterations. 
Therefore, this procedure can be used as an efficient post-computation of any algorithm for maximin support, in a black-box manner, to enable the PJR property.
 This local search variant of $\phragmms$ highlights the robustness of this heuristic; in particular, there is no evident way to build a similar variant from $\phragmen$~\cite{brill2017phragmen}, since its analysis of the PJR property makes assumptions on the structure of the current solution at the beginning of each iteration. 

As we did in Section~\ref{s:inserting}, in the following algorithm we assume that the background instance $(G=(N\cup C, E), s, k)$ is known and that does not need to be passed as input. 
Instead, the input is a feasible full solution $(A,w)$, and a parameter $\eps>0$. 
Our proposed algorithm $\local$ is presented in Algorithm~\ref{alg:localpjr}. 

\begin{algorithm}[htb]
\SetAlgoLined
\KwData{Full feasible solution $(A,w)$, parameter $\eps>0$.}
\tcp{parameter related to def.~of standard PJR} 
Let $\hat{t}\leftarrow \sum_{n\in N} s_n / |A|$ \;
\While{True}{
 \tcp{find member with least support} 
  Find tuple $(c_{\min}, t_{\min})$ so that $c_{\min}\in A$ and $t_{\min}=\supp_w(c_{\min})=\supp_w(A)$ \;
	\tcp{find candidate with highest score} 
  Let $(c_{\max}, t_{\max})\leftarrow \maxscore(A,w)$ \;
  \lIf {($t_{\max}< \min\{ (1+\eps)\cdot t_{\min}, \hat{t}\}$)} {\Return $(A,w)$}
  Update $(A,w)\leftarrow \ins(A-c_{\min},w,c_{\max},t_{\max})$\;
}
\caption{$\LSPJR(A,w,\eps)$}
\label{alg:localpjr}
\end{algorithm}

\begin{theorem}\label{thm:enabler}
For any $\eps>0$ and a feasible full solution $(A,w)$, let $(A',w')$ be the output solution to $\LSPJR(A,w,\eps)$. Then: 
\begin{enumerate}
    \item $(A',w')$ is feasible and full, and $\supp_{w'}(A')\geq \supp_w(A)$; \label{item:support}
		\item if $(A,w)$ has an $\alpha$-approximation guarantee for maximin support, for some $\alpha\geq 1$, then the algorithm performs at most $k\cdot \lfloor 1+\log_{1+\eps} \alpha\rfloor +1$ iterations, each in time $O(|E|\cdot \log k)$; \label{item:iterations}
		\item $(A', w')$ satisfies the condition on Lemma~\ref{lem:locality} for parameter $t=\min\{(1+\eps)\cdot \supp_{w'}(A'), \hat{t}\}$, hence $A'$ verifiably satisfies both standard PJR and $[(1+\eps)\cdot \supp_{w'}(A')]$-PJR; and
		\label{item:tPJR}
		\item by setting $\eps\rightarrow\infty$, the algorithm finds a solution satisfying standard PJR in at most $k+1$ iterations. \label{item:infinity}

\end{enumerate}
\end{theorem}

This proves Theorem~\ref{thm:intro3}. 
Notice that point~\ref{item:iterations} establishes that the algorithm above can be executed as a post-computation of any constant-factor approximation algorithm for maximin support in time $O(|E|\cdot k\log k)$. In particular, this complexity is lower than that of each constant-factor approximation presented in this paper.
By point~\ref{item:infinity}, the algorithm can be sped up if we only care about standard PJR, or can run further iterations to converge to a locally optimal solution and achieve a stronger parametric PJR guarantee. 

\begin{proof}
We start with the easier statements. 
As we update the current solution by calling the $\ins$ algorithm, Lemma~\ref{lem:insert} guarantees that the solution remains feasible and full. 
Next, whenever the algorithm terminates, point~\ref{item:tPJR} follows from Lemma~\ref{lem:locality} and the fact that the stopping condition is satisfied. 
Furthermore, point~\ref{item:infinity} is just a special case of point~\ref{item:iterations}, for a high enough value of $\eps$. 
Hence, it only remains to prove the inequality in point~\ref{item:support}, as well as point~\ref{item:iterations}. 

In what follows we use $i$ as a superscript to indicate the value of the different variables at the beginning of the $i$-th iteration. 
 Let $t^i_{\st}:=\min\{(1+\eps)\cdot t^i_{min}, \hat{t}\}$, so that the stopping condition reads $t^i_{\max} \stackrel{?}{<} t^i_{\st}$. 
Notice that $t^i_{\min}\leq \hat{t}$ always holds by definition of $\hat{t}$ and feasibility of $w^i$, so $t^i_{\min}\leq t^i_{\st}$ holds as well. 
 
We prove the inequality in point~\ref{item:support} by induction on $i$. 
Consider an iteration $i\geq 0$ in which the stopping condition does not hold, so $t^i_{\max} \geq t^i_{\st}\geq t^i_{\min}$. 
On one hand, it is evident that 
$$\supp_{w^i}(A^i-c^i_{\min})\geq \supp_{w^i}(A^i)=t^i_{\min}, $$ 
i.e., the least support can only increase when we drop member $c_{\min}^i$ from the committee. 
On the other hand, 
\begin{align*}
\prescore_{(A^i-c_{\min}^i, w^i)}(c^i_{\max}, t^i_{\max}) &\geq \prescore_{(A^i, w^i)}(c^i_{\max}, t^i_{\max}) 
	 = t^i_{\max}\geq t^i_{\min},
\end{align*}
or more generally, for any fixed candidate, threshold, and edge weight vector, a parameterized score can only increase when a committee member is dropped. Such fact follows from the definitions of slack and parameterized score. 
Thus, by Lemma~\ref{lem:insert}, 
\begin{align*}
\supp_{w^{i+1}} (A^{i+1}) \geq \min & \Big\{ \supp_{w^i}(A^i-c^i_{\min}), t^i_{\max}, 
	 \prescore_{(A^i-c_{\min}^i, w^i)}(c^i_{\max}, t^i_{\max})\Big\} \geq t^i_{\min}, 
\end{align*}
which is what we needed to show.

We continue to point \ref{item:iterations}. The complexity of an iteration is dominated by the call to $\maxscore$, which takes time $O(|E|\cdot \log k)$ by Theorem~\ref{thm:runtimes}. 
It remains to prove the bound on the number $T$ of total iterations. 
To do that, we analyze the evolution of the least member support $t^i_{\min}=\supp_{w^i}(A^i)$. 
\emph{Claim A:} If ever $t^i_{\min}=\hat{t}$, then the algorithm terminates immediately, i.e., $i=T$. This is because in this case all members in $A^i$ must have a support of exactly $\hat{t}$, all voters a zero slack for threshold $\hat{t}$, and all candidates in $C\setminus A^i$ a zero parameterized score for $\hat{t}$, and hence a score strictly below $\hat{t}$, and the stopping condition is fulfilled.
\emph{Claim B:} For any iteration $i$ with $1\leq i<T-k$, we have $t^{i+k}_{\min}\geq (1+\eps)\cdot t^{i}_{\min}$. This is because, by Lemma~\ref{lem:insert}, in each iteration $j\geq i$ we are removing a member of least support while not increasing the number of members with support below 
$$\prescore_{(A^j - c^j_{\min}, w^j)}(c^j_{\max}, t^j_{\max})\geq t^j_{\max}\geq t^j_{\st}\geq t^i_{\st},$$
where we used the obvious fact that $t^i_{\st}$ is monotone increasing throughout the iterations.  
As there are only $k$ committee members, it takes at most $k$ iterations to remove all members with support below $t^i_{\st}$, so we must have that $t^{i+k}_{\min}\geq t^i_{\st}=\min\{(1+\eps)\cdot t^i_{\min}, \hat{t}\}$. 
Since the $(i+k)$-th iteration is not the last one, and by Claim A,  $t^{i+k}$ cannot be higher than $\hat{t}$, so it must be higher than $(1+\eps)\cdot t^i_{\min}$, which proves Claim B.
Therefore, if the algorithm outputs a solution $(A^T,w^T)$ and has an input solution $(A^1, w^1)$ with an $\alpha$-approximation guarantee for maximin support, then 
$$\alpha\cdot \supp_{w^1}(A^1) \geq \supp_{w^T}(A^T) \geq (1+\eps)^{\lfloor (T-1)/k \rfloor} \cdot \supp_{w^1}(A^1).$$
Hence, $\alpha\geq (1+\eps)^{\lfloor (T-1)/k \rfloor}$, and $T\leq k\cdot \lfloor 1+\log_{1+\eps} \alpha \rfloor+1$. This completes the proof of point \ref{item:iterations}.
\end{proof}

\section{Validator selection for NPoS}\label{s:implement}

Recall that our motivating application is the selection of validators for a blockchain network that implements Nominated Proof-of-Stake (NPoS). 
In this section we provide further details about the NPoS mechanism, and sketch a proposal for an implementation of a validator selection protocol that uses the new election rule. We consider the Polkadot network as a specific example.

We have chosen to use approval-based voting because, as argued in~\cite{laslier2010handbook}, this electoral system is easy to understand for voters -- no need to rank or grade candidates and no limit on the number of votes -- while simultaneously benefiting from a rich literature -- which we have obviously exploited to derive theoretical guarantees for our protocol. 
We also mention the importance of allowing nominators to vote for more than one candidate: otherwise, they face the dilemma of having to choose between a popular validator and a validator that may represent them better but has a lower chance of being elected. A rational nominator seeking to maximize staking rewards will then prefer to vote for the popular validator, and this type of tactical voting will result in popular validators gaining more and more stake backing over time, to the detriment of the network's decentralization goal. 

We highlight the utility of having the validator election protocol output a stake distribution vector $w\in \R^E$ along with the winning committee $A$, beyond its role as a witness in the verification of guarantees. 
This vector effectively defines $k$ disjoint staking pools, of size $\sup_w(c)$ for each validator $c\in A$, to which nominators belong fractionally. 
These pools are created automatically by the election rule, with a distribution of sizes as close to uniform as possible. And they change dynamically from one election to the next, so that when a validator leaves the system its corresponding pool is automatically disintegrated and the affected nominators reassigned to other pools in a balanced way.

In terms of incentives, pools are paid in proportion to the validators' performance, but \emph{independently of the pool sizes}, while within each pool the nominators' rewards are proportional to their stakes. Hence, nominators belonging to smaller pools tend to get paid more per token unit. %
%This mechanism has two interesting consequences. First, 
As such, if the election rule outputs a balanced solution then it automatically maximizes the rewards of all nominators. 
More precisely, if we assume that the winning committee is fixed and all pools are paid the same in expectation, and consider a staking mechanism that allows each nominator to freely distribute her stake among the pools of the validators she trusts, it can be checked that \emph{a vector of stake distributions is a Nash equilibrium precisely when it is balanced}. 
In this sense, there is little motivation to vote strategically under NPoS. 
%The second consequence is that nominators are economically encouraged to shift their preferences over time towards less popular validators. As such, we can expect the distribution of stake pool sizes not to ever deviate greatly from uniform, which benefits the security goal of the protocol.

As of late May 2021, Polkadot selects a committee of $k\approx 300$ validators to participate in its consensus protocol, out of a set of $|C|\approx 900$ candidates, and has $|N|\approx 20000$ nominators. 
It runs a validator selection protocol once per era, i.e., roughly once per day. 
Towards the end of each era, the protocol enters an \emph{election phase}, where it takes a snapshot of the current nominators' preferences and stakes, and uses it as input to run a committee election rule to select the committee of $k$ validators for the following era. 

We propose running the $\phragmms$ rule as a verifiable computing scheme. 
The current validators -- or a large enough subset thereof -- may act themselves as off-chain workers and run $\phragmms$ as a background task logically separate from consensus and with a relaxed time frame of (say) up to an hour. 
Whichever off-chain worker finishes the task first can submit the (prospective) solution on-chain as a transaction, over which the protocol will execute the verification process described in Theorem~\ref{thm:315guarantee}.

Experimental results show that for the current number of nominators, the verification can be run in a single block. Hence, by applying the parallelization presented in Lemma~\ref{lem:parallel} and distributing this computation over (say) $p=10$ consecutive blocks, the system will be able to handle over 200000 nominators. As Polkadot produces roughly ten blocks per minute, such a solution would be verified within a minute.

Optionally, the system may also allow any user to submit a prospective solution as a transaction during the election phase: this way, it benefits from potentially better committees found with other election rules. 
To this end, notice that any user running a different election rule can also run the post-computation described in Theorem~\ref{thm:intro3}, in order to satisfy PJR verifiably. Then, validators can accept this solution only if it passes the corresponding PJR test on-chain and its maximin support objective is better than that of $\phragmms$.  
Yet, anti-spam measures may be required in this case, such as a very high transaction fee.

We claim that being able to verify that the winning solution provides strong guarantees on security and proportionality -- as opposed to, e.g., simply selecting among all submitted solutions the one with highest maximin support objective --  protects the network against a possible long-range attack, as we explain now. 
Consider a scenario where an adversary currently controlling a minority of validators creates a private fork (i.e., an alternative valid chain) right before the start of the election phase, and in this fork he censors all prospective solutions except for one, fabricated by the adversary himself, in which he becomes grossly overrepresented and possibly even gains control of a majority of validators. 
Then, after the end of the election phase he publishes this fork and attempts to make it canonical (for instance by making it longer than all other forks, if the consensus protocol follows the longest-chain rule). 
If this attack is successful, the adversary will have captured the network in the next era. 
To resist such an attack, we insist on discarding all submitted solutions that do not pass the verification test, and propose that the election phase may extend indefinitely until at least one solution passing the test has been submitted. 
%It is worth mentioning that Polkadot is also protected against long-range attacks by its use of a finality gadget~\cite{stewart2020grandpa}, which limits the possibility of long chain reorganizations.

%We just check feasibility.
%Check PJR test offchain.
%Proof of not PJR (PJR challenge): signal one candidate with high score, also give its supporters (could have hundreds). Computing slacks only requires output graph.
%Easy to prove that solution is not PJR, and polytime to find counterexample. 

%Right now we're running Phragmen, with random number of star balancing rounds. This is what validators should run.
%Solution is checked off-chain, solution is checked on chain for feasibility.
%Validators fight over submitting better solutions, they need to be epsilon better and then they overwrite each other's solution.

%Eventually the idea is to run Phragmms, but only once we have solutions from community + bot from W3F.
%Chain doesn't care what algorithm is run off chain.

%Idea: we need absolute notion of quality of committee
%What if there's governance-controlled least support? Solution is rejected if this check doesn't pass.
%That would be a good guarantee.
%What if: lots of stake needs to be used.
%Much safer! At least as temporary patch.

\section{Conclusions and open questions}\label{s:conc}

There is a recent surge of proof-of-stake (PoS) based blockchain projects that take a ``representative democracy'' approach, where token holders choose the validator candidates that they trust, and out of this process a committee with $k$ of the most trusted candidates emerges as the set of active validators. 
Although these \emph{validator selection protocols} play a critical role in preserving the security and decentralization levels of their networks, we observe a multitude of design choices with scarce formal justifications behind them. 
We present the first computational social choice analysis for the electoral system of a validator selection protocol, namely for Nominated Proof-of-Stake (NPoS). 
Starting from first principles and in the pursuit of security and decentralization, we formalize the problem in terms of proportional justified representation (PJR) and maximin support, two criteria recently introduced in the literature of proportional representation.
With the problem definition at hand, we show that current election rules either perform poorly in terms of security, or are too slow to be compatible with the blockchain architecture. We then propose $\phragmms$, the first rule to provide formal guarantees on both criteria and be implementable on a blockchain network. 

We propose the adaptation of committee election rules as verifiable computing schemes. Indeed, this adaptation proves to be key for the implementation of the $\phragmms$ rule in a blockchain architecture. 
We remark that our proposed verification process (Theorem~\ref{thm:315guarantee}) is linear in the size of the input, and leave it as an open challenge to find an election rule that achieves a constant-factor approximation guarantee for maximin support verifiably, whose verification process has a runtime \emph{sublinear in the number of voters}. For instance, succinct non-interactive arguments of knowledge (SNARKs) may be of relevance.

Contrary to the common understanding of proportional representation as a criterion that combats underrepresentation, we formalize the goal of preventing the overrepresentation of any faction within the electorate, and present new results for it. These results may be applicable beyond distributed networks, whenever a governance body needs to maintain a delicate balance of power among groups in contention, and may be at particular risk of capture or disruption if this balance is lost. 

We present the first approximability analysis for a Phragm\'{e}n objective: we show that the maximin support objective can be approximated within a factor of 2, but not within a factor of $1.2-\eps$ for any $\eps>0$ unless P=NP. This gap between approximability and hardness awaits to be closed in future work. 
Similarly, it remains open to establish whether the approximation guarantee factors we proved for $\MMS$ and $\phragmms$ ($2$ and $3.15$ respectively) are tight.

We highlight that our approximation analyses are based on network flow theory, a promising tool not widely used in the literature of committee election rules. 
Similarly, we leverage Phragm\'{e}n's notion of load balancing, formalize what it means for a vote distribution over the edges of the approval graph to be balanced, and show how to find a balanced distribution efficiently using new results related to parametric flow. We then synthesize the heuristics behind $\phragmen$, $\MMS$ and $\phragmms$ in terms of how well partial solutions are rebalanced in between iterations. 
Further election rules might be analyzed in the future using network flow theory and our definition of balancedness.

\bibliographystyle{abbrv}  
\bibliography{references} 

\appendix

\section{Computing a balanced solution} \label{s:balanced}

Recall from Section~\ref{s:prel} that for a background election instance $(G = (N \cup C, E), s, k)$ and a fixed committee $A\subseteq C$, a weight vector $w\in \R^E$ is balanced for $A$ if a) it maximizes the sum of member supports, $\sum_{c\in A} supp_w(c)$, over all feasible weight vectors, and b) it minimizes the sum of supports squared, $\sum_{c\in A} (supp_w(c))^2$, over all vectors that observe the previous property. 
In this section we provide algorithms to compute such a vector.

We start by noticing that a balanced weight vector can be computed with numerical methods for quadratic convex programs. 
Let $E_A\subseteq E$ be the restriction of the input edge set $E$ over edges incident to committee $A$, and let $D\in\{0,1\}^{A\times E_A}$ be the vertex-edge incidence matrix for $A$. 
For any weight vector $w\in\R^{E_A}$, the support that $w$ assigns to candidates in $A$ is given by vector $Dw$, so that $supp_w(c)=(Dw)_c$ for each $c\in A$. 
We can now write the problem of finding a balanced weight vector as a convex program:
\begin{align*}
    \text{Minimize} \quad & \|Dw\|^2 \\
    \text{Subject to } \quad & w\in\R^{E_A}, \\
    & \sum_{c\in C_n} w_{nc} \leq s_n \quad \text{for each } n\in N, \text{ and} \\
    & \mathbbm{1}^{\intercal} Dw = \sum_{n\in \cup_{c\in A} N_c} s_n,
\end{align*}
where the first line of constraints corresponds to non-negativity, the second one to feasibility (see inequality~\ref{eq:feasible}), and the last line ensures that the sum of supports is maximized (see property 2 in Lemma~\ref{lem:balanced}), where $\mathbbm{1}\in\mathbb{R}^A$ is the all-ones vector.

However, there is a more efficient method using techniques for parametric flow, which we sketch now. Hochbaum and Hong~\cite[Section 6]{hochbaum1995strongly} consider a network resource allocation problem which generalizes the problem of finding a balanced weight vector: given a network with a single source, single sink and edge capacities, minimize the sum of squared flows over the edges reaching the sink, over all maximum flows. 
They show that this is equivalent to a parametric flow problem called \emph{lexicographically optimal flow}, studied by Gallo, Gregoriadis and Tarjan~\cite{gallo1989fast}. 
In turn, in this last paper the authors show that, although a parametric flow problem usually requires solving several consecutive max-flow instances, this particular problem can be solved running a single execution of the FIFO preflow-push algorithm by Goldberg and Tarjan~\cite{goldberg1988new}.

Therefore, the complexity of finding a balanced weight vector is bounded by that of Goldberg and Tarjan’s algorithm, which is $O(n^3)$ for a general $n$-vertex network. 
However, Ahuja et al.~\cite{ahuja1994improved} showed how to optimize several popular network flow algorithms for the case of bipartite networks, where one of the partitions is considerably smaller than the other. Assuming the sizes of the bipartition are $n_1$ and $n_2$ with $n_1 \ll n_2$, they implement a two-edge push rule that allows one to "charge" most of the computation weight to the vertices on the small partition, and hence obtain algorithms whose running times depend on $n_1$ rather than $n$. 
In particular, they show how to adapt Goldberg and Tarjan’s algorithm to run in time $O(e\cdot n_1+n_1^3)$, where $e$ is the number of edges. 
For our particular problem, which can be defined on a bipartite graph $(N\cup A, E_A)$ where $|A|\leq k\ll |N|$, we obtain thus an algorithm that runs in time $O(|E_A|\cdot k + k^3)$.

\section{Network flow results}\label{s:flow}

In the section we present proofs to Lemmas \ref{lem:2sols} and \ref{lem:N_a}, the key results at the heart of the constant-factor approximation guarantees we derive for the $\MMS$ and the $\phragmms$ rules, respectively. 
We start by introducing some necessary definitions and results related to network flow theory.

Throughout this section we regard the input bipartite approval graph $G=(N\cup C,E)$ as a network, and regard an edge vector $f\in\mathbb{R}^{E}$ as a \emph{flow} over it. For each edge $nc\in E$, we consider the flow on that edge to be directed toward $c$ if $f_{nc}>0$, and directed toward $n$ if $f_{nc}<0$. 
Consequently, the \emph{excess} of a voter $n\in N$ is defined as $e_f(n):=\sum_{c\in C_n} f_{nc}$, and the excess of a candidate $c\in C$ is defined as $e_f(c):=-\sum_{n\in N_c} f_{nc}$. 
%A vertex $x\in N\cup C$ is an \emph{excess} vertex if $e_f(x)>0$, and a \emph{deficit} vertex if $e_f(x)<0$. 
Moreover, for a set of vertices $S\subseteq N\cup C$, we define its \emph{net excess} as $e_f(S):=\sum_{x\in S} e_f(x)$.
A vector $f'\in\mathbb{R}^E$ is a \emph{sub-flow of $f$} if 
\begin{itemize}
	\item for each edge $nc\in E$ with $f'_{nc}\neq 0$, flows $f'_{nc}$ and $f_{nc}$ have equal sign (i.e., direction) and $|f'_{nc}|\leq |f_{nc}|$, and
	\item for each vertex $x\in N\cup C$ with $e_{f'}(x)\neq 0$, excesses $e_{f'}(x)$ and $e_{f}(x)$ have equal sign and $|e_{f'}(x)|\leq |e_{f}(x)|$.
\end{itemize}

We now list two properties of flows and sub-flows. 
The proof of Theorem~\ref{thm:decomposition} can be found in \cite[Thm.~3.15]{ahuja1994improved}, while the proof of Lemma~\ref{lem:subflow} is delayed to Appendix~\ref{s:proofs}.

\begin{theorem}[Flow Decomposition Theorem]\label{thm:decomposition}
Any flow $f\in\mathbb{R}^E$ can be decomposed into a finite number of cycles and simple paths, such that every path $p$ is a non-zero sub-flow of $f$ that starts in a vertex with strictly positive excess and ends in a vertex with strictly negative excess. 
\end{theorem}

\begin{lemma}\label{lem:subflow}
If $w, w'\in\R^E$ are two non-negative and feasible edge weight vectors for the given election instance, and $f'\in\mathbb{R}^E$ is a sub-flow of $f:=w'-w$, then both $w+f'$ and $w'-f'$ are non-negative and feasible as well.
\end{lemma}

We prove now that for any partial solution, there is always an unelected candidate that can be added with large support. To show this, we assume we know the edge weight vector of an optimal solution, and combine it with the weight vector of the current solution. For convenience, we repeat the statement of Lemma~\ref{lem:2sols}.

\begin{lemma*}
If $(A^*, w^*)$ is an optimal solution to maximin support, and $(A,w)$ is a partial solution with $|A|\leq k$ and $A\neq A^*$, there is a candidate $c'\in A^*\setminus A$ and feasible solution $(A+c', w')$ such that 
$$\supp_{w'}(A+c')\geq \min\Big\{\supp_w(A), \frac{1}{2} \supp_{w^*}(A^*)\Big\}.$$
\end{lemma*}

\begin{proof}
Let $(A,w)$ and $(A^*, w^*)$ be as in the statement, with corresponding values $t^*:=\supp_{w^*}(A^*)$ and $t:=\min\{\supp_w(A), t^*/2\}$. To prove the lemma, it suffices to find a candidate $c'\in A^*\setminus A$ and a feasible weight vector $w'\in\R^E$ such that $\supp_{w'}(A+c)\geq t$.

By decreasing some components in $w$ and $w^*$, we can assume without loss of generality that $\supp_w(c)=t$ if $c\in A$, zero otherwise, and $\supp_{w^*}(c)=t^*$ if $c\in A^*$, zero otherwise. 
Consider flow $f:=w^* - w\in\mathbb{R}^E$ over the network induced by $N\cup A\cup A^*$. 
%We partition this network into four subsets: $N$, $A\setminus A^*$, $A^*\setminus A$ and $A^*\cap A$. 
It is easy to see that 
\begin{itemize}
\item $N$ has a net excess $e_f(N)=|A^*|\cdot t^* - |A|\cdot t$,
\item $A\setminus A^*$ has a net excess $e_f(A\setminus A^*)=|A\setminus A^*|\cdot t$,
\item $A^*\setminus A$ has a net excess $e_f(A^*\setminus A)=-|A^*\setminus A|\cdot t^*$, and
\item $A^*\cap A$ has a net excess $e_f(A^*\cap A)=-|A^*\cap A| \cdot (t^*-t)$.
\end{itemize}

By Theorem~\ref{thm:decomposition}, we can decompose flow $f$ into circulations and simple paths, where each path is a sub-flow of $f$ that starts (ends) in a vertex with positive (negative) excess. 
Let $f'$ is the sum of all paths that start inside subset $N\cup (A^*\cap A)$ and end outside of it. 
It follows that $f'$ is also a sub-flow of $f$, and that the amount of flow it extracts from this subset is at least
\begin{align*}
e_f(N\cup(A^*\cap A)) &= e_f(N) + e_f(A^*\cap A)\\
&= |A^*|\cdot t^* - |A|\cdot t - |A^*\cap A| \cdot (t^*-t)\\
&= |A^*\setminus A|\cdot t^* - |A\setminus A^*|\cdot t \\
&\geq |A^*\setminus A|\cdot (t^*-t) \geq |A^*\setminus A|\cdot t,
\end{align*}
where the last two inequalities follow from $|A^*|\geq |A|$ and $t\leq t^*/2$, respectively.

Next, we claim that each path in $f'$ actually must start in $N$ and end in $A^*\setminus A$. 
Indeed, none of these paths can start in $A^*\cap A$, because each vertex in that set has negative excess, and similarly none can end in $A\setminus A^*$, because each vertex in that set has positive excess. 
Therefore, $f'$ carries flow exclusively from $N$ to $A^*\setminus A$, and by the previous inequality and an averaging argument, there must be a vertex $c'$ in $A^*\setminus A$ that receives a flow from $f'$ of value at least $t$. 

Finally, if we define vector $w':=w+f'$, it is non-negative and feasible by Lemma~\ref{lem:subflow}, and it provides the same supports to the members of $A$ as $w$ does, namely $t$, and a support of at least $t$ to candidate $c'$. Hence, $\supp_{w'}(A+c')\geq t$, as claimed.   
\end{proof}

Next, we prove that if we start with a partial solution that is balanced, then not only are there unelected candidates that can be appended with high support, but they also have large scores, so we can find such a candidate efficiently with our heuristic. 
To show this, we prove that there must be a subset of voters with large aggregate vote strength who do not get as many representatives in the current solution as they do in the optimal solution, so they have large slacks and their unelected representatives have large scores. For convenience, we repeat the statement of Lemma~\ref{lem:N_a}.

\begin{lemma*}
If $(A^*, w^*)$ is an optimal solution with $t^*=\supp_{w^*}(A^*)$, and $(A,w)$ is a balanced solution with $|A|\leq k$ and $A\neq A^*$, then for each $0\leq a\leq 1$ there is a subset $N(a)\subseteq N$ of voters such that 
\begin{enumerate}
	\item each voter $n\in N(a)$ approves of a candidate in $A^*\setminus A$;
	\item for each voter $n\in N(a)$, we have $\supp_w(A\cap C_n)\geq at^*$;
	\item $\sum_{n\in N(a)} s_n \geq |A^* \setminus A|\cdot (1-a) t^*$; and
	\item for any $b$ with $a\leq b\leq 1$ we have that $N(b)\subseteq N(a)$, and a voter $n\in N(a)$ belongs to $N(b)$ if and only if $n$ observes property 2 above with parameter $a$ replaced by $b$.
\end{enumerate}
\end{lemma*}

\begin{proof}
Fix a parameter $a$, and partition committee $A$ into two sets by defining $A_{hi}:=\{c\in A: \ \supp_w(c)\geq at\}$ and $A_{lo}:=A\setminus A_{hi}$. 
Similarly, partition $N$ into two sets by defining $N_{hi}:=\{n\in N: \ C_n \cap A_{lo}=\emptyset\}$ and $N_{lo}:= N\setminus N_{hi}$. 
If we define $N(a)\subseteq N_{hi}$ as those voters in $N_{hi}$ that have adjacent candidates in $A^*\setminus A$, then properties 1, 2 and 4 become evident. Hence, it only remains to prove the third property.

\emph{Claim A.} There is no edge with non-zero weight in $w$ between $N_{lo}$ and $A_{hi}$. 
Indeed, if there was a pair $n\in N_{lo}$, $c\in A_{hi}\cap C_n$ with $w_{nc}>0$, then by property 3 of Lemma~\ref{lem:balanced} we would have $\supp_w(A\cap C_n)=\supp_w(c)\geq at^*$, contradicting the fact that $n$ is not in set $N_{hi}$. 
Thus, all of the vote strength from voters in $N_{lo}$ must be directed to members in $A_{lo}$, and we get the inequality
$$\sum_{n\in N_{lo}} s_n \leq \sum_{c\in A_{lo}} \supp_w(c) < |A_{lo}|\cdot at^*.$$

Next, by decreasing some components in vectors $w$ and $w^*$, we can assume without loss of generality that $\supp_{w^*}(c)=t^*$ if $c\in A^*$, zero otherwise, and $\supp_{w}(c)=a t^*$ if $c\in A^*\cap A_{hi}$, zero otherwise; we call this the ``wlog assumption'', and notice that $w$ and $w^*$ are not necessarily balanced anymore, but are still feasible.
Consider flow $f:=w^* - w\in\mathbb{R}^E$ over the network induced by $N\cup A^*$. 
We partition the vertices of this network into five subsets: $N_{hi}$, $N_{lo}$, $A^*\setminus A$, $A^* \cap A_{hi}$ and $A^*\cap A_{lo}$; see Figure~\ref{fig:sets}.  
It is easy to see that 
\begin{itemize}
\item $A^*\cap A_{hi}$ has a net excess $e_f(A^*\cap A_{hi})=-|A^*\cap A_{hi}|\cdot (1-a)t^*$,
\item $N$ has a net excess $e_f(N)=|A^*|\cdot t^* - |A^*\cap A_{hi}|\cdot a t^*$, and
\item $N_{lo}$ has a net excess $e_f(N_{lo})\leq \sum_{n\in N_{lo}} s_n < |A_{lo}|\cdot at^*$.
\end{itemize} 

\begin{figure}[htb]
  \centering
	\includegraphics[width=0.55\linewidth]{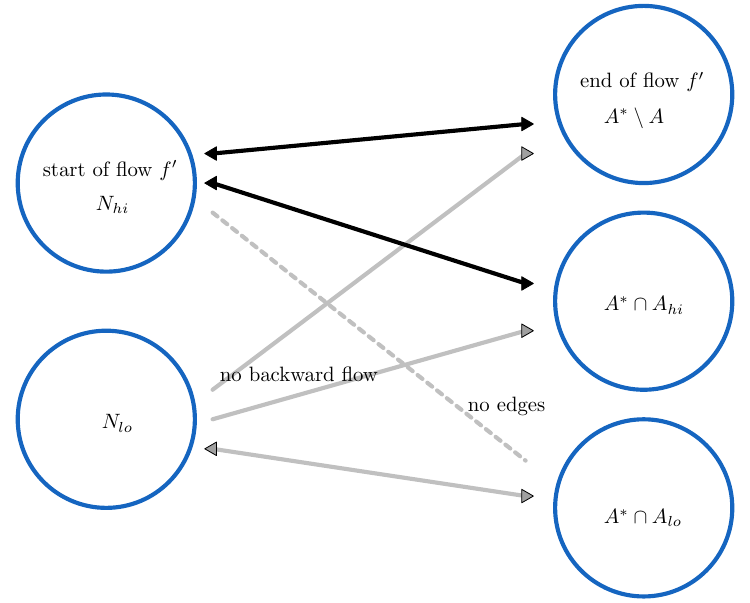}
  \caption{Flow $f'$ starts in $N_{hi}$ and must end in $A^*\setminus A$, because it cannot visit $N_{lo}$ nor $A^*\cap A_{lo}$.}
  \label{fig:sets}
\end{figure}

By Theorem~\ref{thm:decomposition}, we can decompose flow $f$ into circulations and simple paths, where each path is a sub-flow of $f$. 
Let $f'$ be the sum of all paths that start inside subset $N_{hi}\cup(A^*\cap A_{hi})$ and end outside of it. 
It follows that $f'$ is also a sub-flow of $f$, and the amount of flow it extracts from this subset is at least
\begin{align*}
e_f  (N_{hi}\cup(A^*\cap A_{hi})) 
=& e_f(N) - e_f(N_{lo}) + e_f(A^*\cap A_{hi}) \\
>& |A^*|\cdot t^* - |A^*\cap A_{hi}|\cdot a t^* - |A_{lo}|\cdot at^* - |A^*\cap A_{hi}|\cdot (1-a)t^* \\
=& |A^*\setminus A_{hi}|\cdot t^* - |A_{lo}|\cdot at^* \\
\geq & |A^* \setminus A_{hi}|\cdot (1-a)t^* \geq |A^* \setminus A|\cdot (1-a)t^*.
\end{align*}
Now, where does all this flow go?

\emph{Claim B.} Every path in $f'$ must start in $N_{hi}$ and end in $A^*\setminus A$. 
Indeed, it must start in $N_{hi}$ because each vertex in $A*\cap A_{hi}$ has negative excess. 
Furthermore, there are no edges between $N_{hi}$ and $A^*\cap A_{lo}$ (by definition of $N_{hi}$), and there is no flow possible from $(A^*\cap A_{hi})\cup (A^*\setminus A)$ to $N_{lo}$ in $f=w^*-w$ because $w$ has no flow from $N_{lo}$ toward $A_{hi}$ (by Claim A) nor toward $A^*\setminus A$ (by the wlog assumption); see Figure~\ref{fig:sets}. 
This shows that paths in $f'$ cannot visit any vertex in $N_{lo}$ or $A^*\cap A_{lo}$, so they must end in $A^*\setminus A$.

Therefore, paths in $f'$ carry a flow of at least $|A^* \setminus A|\cdot (1-a)t^*$ toward $A^*\setminus A$. 
Finally, for each such path, the last edge goes from $N_{hi}$ to $A^*\setminus A$, so it originates in $N(a)$. 
This proves that $\sum_{n\in N(a)} s_n> |A^* \setminus A|\cdot (1-a) t^*$, which is the third property.
\end{proof}

\section{Algorithmic considerations for the new election rule}\label{s:algorithms}

The goal of this section is threefold. 
First, we prove Theorem~\ref{thm:runtimes} and establish how our heuristic for candidate selection, described in Section~\ref{s:inserting}, can be computed efficiently. 
Second, we improve upon the runtime analysis of $\phragmms$ given in Section~\ref{s:315}, and show that each iteration can be executed in time $O(\bal + |E|)$, down from $O(\bal + |E|\cdot \log k)$. 
Finally, we provide further details on the similarities and differences between $\phragmms$ and $\phragmen$.

As we did in Section~\ref{s:inserting}, we assume in the following that the election instance $(G=(N\cup C, E), s, k)$ is known and does not need to be given as input. Instead, the input is a partial solution $(A,w)$ with $|A|\leq k$. The list of member supports $(supp_w(c))_{c\in A}$ is implicitly passed by reference and updated in every algorithm.

We start with Algorithm~\ref{alg:maxprescore}, which shows how to find the candidate with highest parameterized score for a given threshold $t$.

\begin{algorithm}[htb]
\SetAlgoLined
\KwData{Partial solution $(A,w)$, threshold $t\geq 0$.}
\lFor{each voter $n\in N$}{
compute $slack(n,t)=s_n-\sum_{c\in A\cap C_n} w_{nc}\cdot \min\{1, t/supp_w(c)\}$
}
\lFor{each candidate $c'\in C\setminus A$}{
compute $\prescore(c',t)=\sum_{n\in N_{c'}} \slack(n,t)$
}
Find a candidate $c_t\in\argmax_{c'\in C\setminus A} \prescore(c', t)$\;
\Return $(c_t, \prescore(c_t, t))$\;
 \caption{$\maxprescore(A,w,t)$}
\label{alg:maxprescore}
\end{algorithm}

\begin{lemma}
For a partial solution $(A,w)$ and threshold $t\geq 0$, $\maxprescore(A,w,t)$ executes in time $O(|E|)$ 
and returns a tuple $(c_t,p_t)$ such that $c_t\in C\setminus A$ 
and 
$$p_t=\prescore(c_t,t)=\max_{c'\in C\setminus A} \prescore(c',t).$$
\end{lemma}

\begin{proof}
The correctness of the algorithm directly follows from the definitions of slack and parameterized score. The running time is $O(|E|)$ because each edge in the approval graph $G=(N\cup V, E)$ is inspected at most once in each of the two loops. The first loop also inspects each voter, but we have $|N|=O(|E|)$ since we assume that $G$ has no isolated vertices.
\end{proof}

We move on to computing the highest score. 
For a fixed partial solution $(A,w)$ and for a candidate $c'\in C\setminus A$, consider the function 
\begin{align}\label{eq:scorefunction}
f_{c'}(t):=\prescore(c',t)-t
\end{align}
$$$$ 
in the interval $[0,\infty)$. 
Notice from the definition of parameterized score that this function is convex, continuous and strictly decreasing with no lower bound, and that $f_{c'}(0)\geq 0$; hence it has a unique root corresponding to $score(c')$. We could approximate this root via binary search -- however, we can do better. 
Function $f_{c'}(t)$ is piece-wise linear: if we sort the member supports $\{supp_w(c): \ c\in A\}=\{t_1, \cdots, t_r\}$ so that $t_1 < \cdots < t_r$ for some $r\leq |A|$, then $f_{c'}(t)$ is linear in each interval $[0, t_1), [t_1, t_2), \cdots, [t_r, \infty)$.
Similarly, 
$$f_{\max}(t):= \max_{c'\in C\setminus A} f_{c'}(t) = \max_{c'\in C\setminus A} \prescore(c',t) -t$$ 
is a continuous and strictly decreasing function in the interval $[0,\infty)$, with a unique root $t_{\max}=\max_{c'\in C\setminus A} score(c')$. Unfortunately, this function is in general not linear within each of the intervals above.%
\footnote{It is easy to see that function $f(t)$ is piece-wise linear with $O(|C|\cdot k)$ pieces in total. Hence, one could find its root via binary search by making $O(\log |C|+ \log k)$ calls to $\maxprescore$. 
We present a better approach that only requires $O(\log k)$ such calls.} %
Still, it will be convenient to use binary search to identify the interval that contains $t_{\max}$. We do so in Algorithm~\ref{alg:interval}. The next lemma follows from our exposition and its proof is skipped.

\begin{algorithm}[htb]
\SetAlgoLined
\KwData{Partial solution $(A,w)$.}
Sort the member supports to obtain $0=t_0<t_1<\cdots <t_r$, where $\{t_1, \cdots, t_r\}=\{supp_w(c): \ c\in A\}$\;
\lIf{$p_{t_r}\geq t_r$ where $(c_{t_r},p_{t_r})\leftarrow \maxprescore(A,w,t_r)$}{
	\Return $t_r$}
Let $j_{lo}=0$, $j_{hi}=r-1$\;
\While{$j_{lo}<j_{hi}$}{
  Let $j=\lceil (j_{lo}+j_{hi})/2 \rceil$\;
  \leIf{$p_{t_j}\geq t_j$ where $(c_{t_j},p_{t_j})\leftarrow \maxprescore(A,w,t_j)$}{
  Set $j_{lo}\leftarrow j$}{
  Set $j_{hi}\leftarrow j-1$}
}
\Return $t_{j_{lo}}$\;

 \caption{$\interval(A,w)$}
\label{alg:interval}
\end{algorithm}

\begin{lemma}\label{lem:interval}
For a partial solution $(A,w)$, $\interval(A,w)$ makes $O(\log |A|)$ calls to $\maxprescore$, and thus runs in time $O(|E|\cdot \log k)$. It returns a value $t'$ with $t'\leq t_{\max}:=\max_{c'\in C\setminus A} \score(c')$, and such that for each candidate $c'\in C\setminus A$, the value of $\prescore(c',t)$ is linear in $t$ within the interval $[t',t_{\max}]$.
\end{lemma}

Moving on, for a candidate $c'\in C\setminus A$ and a value $x\geq 0$, consider the linearization of function $f_{c'}(t)$ at $x$ -- more precisely, the linear function that coincides with $f_{c'}(t)$ over the interval $[x, x+\eps]$ as $\eps>0$ tends to zero. 
If we denote by $r_{c', x}$ the unique root of this linearization, we have that
\begin{align*}
    0=& f_{c'}(r_{c', x})|_{\text{linearized at } x}\\
    =& \prescore(c', r_{c', x})|_{\text{linearized at } x} - r_{c', x}\\
    =& \sum_{n\in N_{c'}} \slack(n,r_{c', x})|_{\text{linearized at } x} - r_{c', x}\\
    =& \sum_{n\in N_{c'}} \bigg( s_n - \sum_{c\in A\cap C_n: \ \supp_{w}(c)< x}w_{nc} 
		- \sum_{c\in A\cap C_n: \ \supp_w(c)\geq x} \frac{w_{nc}\cdot r_{c', x} }{\supp_w(c)} \bigg) - r_{c', x},
\end{align*} 
where we used the definitions of parameterized score and slack. Solving for $r_{c', x}$, we obtain
\begin{align}\label{eq:linearized}
    r_{c', x}=\frac{\sum_{n\in N_{c'}} \Big( s_n - \sum_{c\in A\cap C_n: \ \supp_w(c)< x} w_{nc} \Big)}%
    {1+\sum_{n\in N_{c'}} \sum_{c\in A\cap C_n: \ \supp_w(c)\geq x} \frac{w_{nc}}{\supp_w(c)}}.
\end{align}

We make a couple of remarks about these linearization roots. 
First, since $f_{c'}(t)$ is a convex decreasing function, any linearization will lie to its left, and in particular any linearization root will lie to the left of its own root, i.e., 
$$r_{c', x}\leq \score(c') \quad \text{for each } c'\in C\setminus A \text{ and each } x\geq 0.$$

\begin{figure}[htb]
  \centering
	\includegraphics[width=0.6\linewidth]{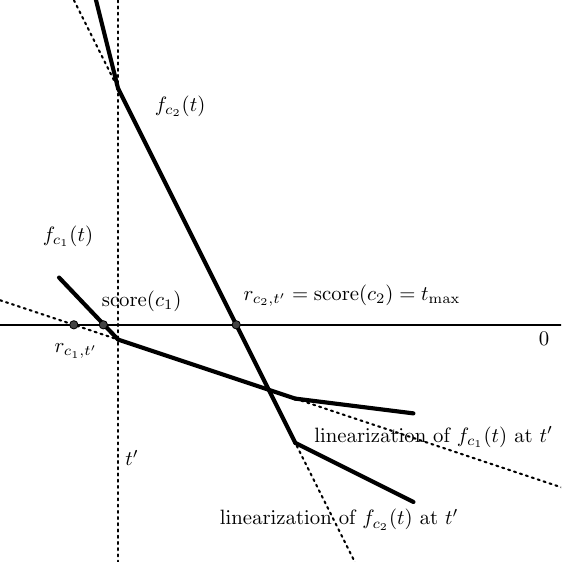}
  \caption{For each candidate $c_i$, the root $\score(c_i)$ of function $f_{c_i}(t)$ lies to the right of $r_{c_i, t'}$, the root of its linearization at $t'$. These two roots coincide for $c_2=c_{\max}$. }
  \label{fig:maxscore}
\end{figure}

On the other hand, for the candidate $c_{\max}$ with highest score $t_{\max}$, and for $x=t'$, the output of $\interval(A,w)$, we have that the corresponding linearization coincides with function $f_{c_{\max}}(t)$ in the interval $[t', t_{\max}]$, so the linearization root $r_{c_{\max}, t'}$ equals the function root $t_{\max}=\score(c_{\max})$. 
See Figure~\ref{fig:maxscore}. Consequently, %
\begin{align*}
r_{c_{\max}, t'} &= \score(c_{\max}) = \max_{c'\in C\setminus A} \score(c') 
	\geq \max_{c'\in C\setminus A} r_{c', t'} \geq r_{c_{\max}, t},
\end{align*}
i.e., $c_{\max}$ is simultaneously the candidate with highest score and the one with highest linearization root at $t'$, and these values coincide. 
We use this fact to find the candidate and its score. We formalize these observations in Algorithm~\ref{alg:maxscore} and the lemma below.

\begin{algorithm}[htb]
\SetAlgoLined
\KwData{Partial solution $(A,w)$.}
Let $t'\leftarrow \interval(A,w)$\;

\For{each voter $n\in N$}{
Compute $p_n:=s_n-\sum_{c\in A\cap C_n: \ \supp_w(c)< t'} w_{nc}$\;
Compute $q_n:=\sum_{c\in A\cap C_n: \ \supp_w(c)\geq t'} w_{nc}/\supp_w(c)$\;
}
\lFor{each candidate $c'\in C\setminus A$}{
compute $r_{c', t'}=\frac{\sum_{n\in N_{c'}} p_n}{1+\sum_{n\in N_{c'}} q_n}$}
Find a candidate $c_{\max}\in\argmax_{c'\in C\setminus A} r_{c', t'}$\;
\Return $(c_{\max}, r_{c_{\max}, t'})$\;
 \caption{$\maxscore(A,w)$}
\label{alg:maxscore}
\end{algorithm}

\begin{lemma}\label{lem:maxscore}
For a partial solution $(A,w)$, $\maxscore(A,w)$ runs in time $O(|E|\cdot \log k)$ and returns a tuple $(c_{\max}, t_{\max})$ such that $c_{\max}\in C\setminus A$ and $t_{\max}=\score(c_{\max})=\max_{c'\in C\setminus A} \score(c')$.
\end{lemma}
\begin{proof}
The correctness of the algorithm follows from the arguments above. 
Each of the \textbf{for} loops executes in time $O(|E|)$ because in each one of them each edge is examined at most once. 
The running time is dominated by the call to algorithm $\interval(A,w)$, taking time $O(|E|\cdot \log k)$.
\end{proof}

This completes the proof of Theorem~\ref{thm:runtimes}. 
We highlight again that the heuristic for candidate selection in $\phragmms$ runs in time $O(|E|\cdot \log k)$, thus almost matching the complexity of the heuristic in $\phragmen$ which is $O(|E|)$ per iteration. 

Next, we reconsider the complexity of $\phragmms$ (Algorithm~\ref{alg:balanced}). 
At the start of each iteration with current partial solution $(A,w)$, notice by Lemma~\ref{lem:315localoptimality} that the highest score $t_{\max}$ must be lower than the least member support $t_1=supp_w(A)$. So, $t_{\max}$ lies in the interval $[0,t_1]$, and we can skip the computation of Algorithm $\interval(A,w)$ as we know that it would return $t'=0$. 
Without this computation, $\maxscore(A,w)$ (Algorithm~\ref{alg:maxscore}) runs in time $O(|E|)$, so the runtime of a full iteration of $\phragmms$ can be performed in time $O(\bal + |E|)$, down from $O(\bal + |E|\cdot \log k)$ as was established in Section~\ref{s:315}.

Finally, we discuss some similarities and differences between the $\phragmms$ and $\phragmen$ heuristics. 
For the sake of completeness, we present here the $\phragmen$ algorithm explicitly. 
We note that the version of $\phragmen$ proposed in~\cite{brill2017phragmen} only considers unit votes. 
In Algorithm~\ref{alg:phragmen} we give a generalization that admits arbitrary vote strengths. 
Clearly, each one of the $k$ iterations of the main loop runs in time $O(|E|)$, because each of the two internal \textbf{for} loops examines each edge in $E$ at most once. 

\begin{algorithm}[htb]
\SetAlgoLined
\KwData{Bipartite approval graph $G=(N\cup C, E)$, vector $s$ of vote strengths, target committee size $k$.}
Initialize $A=\emptyset$, $load(n)=0$ for each $n\in N$, and $load(c')=0$ for each $c'\in C$\;
\For{$i=1,2,\cdots k$}{
\lFor{each candidate $c'\in C\setminus A$}{
update $load(c') \leftarrow \frac{1+\sum_{n\in N_{c'}} s_n\cdot load(n)}{\sum_{n\in N_{c'}} s_n}$}
Find $c_{\min}\in \arg\min_{c'\in C\setminus A} load(c')$\;
Update $A\leftarrow A+c_{\min}$\;
\For{each voter $n\in N_{c_{\min}}$}{
Update $load(n)\leftarrow load(c_{\min})$\;
}
}
\Return $A$\;
\caption{$\phragmen$, proposed in~\cite{brill2017phragmen}}
\label{alg:phragmen}
\end{algorithm}

Assume that we consider inserting a candidate $c'\in C\setminus A$ to the partial solution $(A,w)$, and recall that for a voter $n\in N_{c'}$ approving of that candidate, and a threshold $t$, we define 
$$\slack(n,t)=s_n - \sum_{c\in A\cap C_n} w_{nc} \cdot\min \{1, t/\supp_w(c)\}.$$ 
This formula expresses the fact that to each current member $c\in A\cap C_n$, we reduce its edge weight $w_{nc}$ by multiplying it by a factor $\min \{1, t/\supp_w(c)\}$, and use the now-available vote strength from voter $n$ (its \emph{slack}) to give support to the new member $c'$. This edge multiplication factor is somewhat involved but sensible, as it removes a higher fraction of vote from members with higher support, and leaves members with low support untouched; see Section~\ref{s:inserting} for further intuition.

In contrast, in the same context, we claim that the $\phragmen$ heuristic can be thought of as using a constant edge multiplication factor $t/\supp_w(A\cap C_n)$, where we recall that $\supp_w(A\cap C_n):=\min_{c\in A\cap C_n} \supp_w(c)$. This is, of course, a much simpler approach, corresponding to a coarser solution rebalancing method. 

We now prove our claim. Suppose we use the edge multiplication factor above, and consequently define the voter's slack as
\begin{align}\label{eq:alt-slack}
\slack'(n,t):=s_n - \frac{t}{\supp_w(A\cap C_n)}\sum_{c\in A\cap C_n} w_{nc}.
\end{align}
If we define parameterized scores and scores as before, we can retrieve the new score value for candidate $c'$ by finding the root of the function in equation~\eqref{eq:scorefunction}, which is now linear. 
With a similar computation as the one we did for equation~\eqref{eq:linearized}, we obtain
\begin{align*}
\score'(c') 
&=\frac{\sum_{n\in N_{c'}} s_n}{1+\sum_{n\in N_{c'}} \frac{1}{\supp_w(A\cap C_n)} \sum_{c\in A\cap C_n} w_{nc} } \\
&\geq  \frac{\sum_{n\in N_{c'}} s_n}{1+\sum_{n\in N_{c'}} \frac{s_n}{\supp_w(A\cap C_n)}}, 
\end{align*}
where the inequality follows by feasibility (inequality~\ref{eq:feasible}), and is tight if we assume that the current partial solution $(A,w)$ uses up all the vote strength of voter $n$ whenever $A\cap C_n$ is non-empty. If $A\cap C_n=\emptyset$, then $\supp_w(A\cap C_n)=\infty$ by convention and the corresponding term vanishes in the denominator. 
This new score, to be maximized among all unelected candidates, corresponds precisely to the inverse of the \emph{candidate load} being minimized in the $\phragmen$ heuristic; see Algorithm~\ref{alg:phragmen}. The corresponding \emph{voter load} is in turn set to the inverse of $\supp_w(A\cap C_n)$, which the algorithm updates with the assumption that the new candidate $c'$ always becomes the member with least support. This completes the proof of the claim.

In view of this last result, we can say that our new heuristic provides two main advantages with respect to $\phragmen$: 
First, by using edge weights explicitly, the algorithm handles a more robust notion of loads. 
This enables $\phragmms$ to deal with arbitrary input solutions, a fact that we exploit in Appendix~\ref{s:LS}, and in contrast to $\phragmen$ which needs to make assumptions on the structure of the current solution at the beginning of each iteration.  
Second, our heuristic uses a better rebalancing method that provides more slack to voter $v$ for the same threshold $t$. Indeed, identity $\eqref{eq:slack}$ is at least as large as identity $\eqref{eq:alt-slack}$, and usually larger. Hence, new candidates are granted higher scores and are added to the committee with higher supports.

\section{A lazy greedy algorithm}\label{s:lazymms}

In this section we prove Theorem~\ref{thm:2eps} and present $\lazy$, a variant of $\MMS$ (Algorithm~\ref{alg:mms}) that is faster by a factor $\Theta(k)$ and offers virtually the same approximation guarantee.

\begin{algorithm}[htb]
\SetAlgoLined
\KwData{Approval graph $G=(N\cup C, E)$, vector $s$ of vote strengths, committee size $k$, threshold support $t\geq 0$.}
Initialize $A=\emptyset$, $w=0\in\R^E$, and $U=C$ \;
\tcp{$U$ is set of ``uninspected'' candidates} 
\While{$U\neq \emptyset$}{
	Find $c_{\max}\in \argmax_{c'\in U} \score(c')$ \;
	\tcp{try unispected candidate of highest score} 
	Remove $c_{\max}$ from $U$\;
	Compute a balanced weight vector $w'$ for $A+c_{\max}$\;
	\If{$\supp_{w'}(A+c_{\max})\geq t$}{
		Update $A\leftarrow A+c_{\max}$ and $w\leftarrow w'$ \;
		\lIf{$|A|=k$} { \Return $(A,w)$}
	}
}
\Return a failure message\;
\caption{$\lazy$}
\label{alg:lazy}
\end{algorithm}

$\lazy$ (Algorithm~\ref{alg:lazy}) is lazier than $\MMS$ in the sense that for each candidate it inspects, it decides on the spot whether to add it to the current partial solution, if its insertion keeps the least member support above a certain threshold $t$, or permanently reject it. In particular, each candidate entails the computation of a single balanced weight vector, as opposed to $O(k)$ vectors in $\MMS$. 
For a threshold $t\geq 0$ given as input, the algorithm either succeeds and returns a full solution $(A,w)$ with $\supp_w(A)\geq t$, or returns a failure message. 
The idea is then to run trials of $\lazy$ over several input thresholds $t$, performing binary search to converge to a value of $t$ where it flips from failure to success, and return the output of the last successful trial. 
In terms of runtime, our binary search requires only $O(\log (1/\eps))$ trials -- as we shall prove -- and in each trial each of the $O(|C|)$ iterations computes a balanced weight vector in time $\bal$, for an overall complexity of $O(\bal\cdot |C| \log (1/\eps))$. 
In each iteration, the highest score in $U$ can be found in time $O(|E|\log k)$ with a variant of Algorithm $\maxscore$, hence this complexity is dominated by that of computing a balanced vector.  

We start by proving that for low values of $t$, the algorithm is guaranteed to succeed. 
We highlight that in the following proof the order in which the candidate set $C$ is traversed is irrelevant. 

\begin{lemma}\label{lem:success}
If $(A^*, w^*)$ is an optimal solution to the given instance of maximin support, and $t^*=\supp_{w^*}(A^*)$, then for any input threshold $t$ with $0\leq t\leq t^*/2$, $\lazy$ is guaranteed to succeed.
\end{lemma}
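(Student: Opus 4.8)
The plan is to argue by contradiction, using the structural results already available, chiefly Lemma~\ref{lem:2sols} together with Lemma~\ref{lem:2balanced} and point~1 of Lemma~\ref{lem:balanced}. Assume that for some input threshold $t$ with $0\leq t\leq t^*/2$ the algorithm returns a failure message; I will derive a contradiction. Note in passing that the order in which $U$ is traversed (always picking the uninspected candidate of highest score) plays no role in what follows.

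First I would record a simple invariant maintained throughout the execution of $\lazy$: at every moment the current partial solution $(A,w)$ is balanced for $A$ and satisfies $supp_w(A)\geq t$. This is immediate, since $(A,w)$ starts as $(\emptyset,0)$ with $supp_w(\emptyset)=\infty$ by convention, and is only ever overwritten by $(A+c_{\max},w')$ when $w'$ has been computed as a balanced vector for $A+c_{\max}$ and passes the test $supp_{w'}(A+c_{\max})\geq t$. In the failure scenario, let $(A,w)$ denote the solution at the moment $U$ is emptied; then $|A|=m<k$ and the invariant holds for this solution.

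The second and main step is to show that emptying $U$ without reaching committee size $k$ forces that \emph{every} candidate $c\in C\setminus A$ is "permanently bad", meaning that a balanced weight vector for $A+c$ gives $supp(A+c)<t$. Indeed, each such $c$ was inspected at some earlier iteration at which the committee was some set $A_c$, and was rejected, so the balanced vector $w_c'$ for $A_c+c$ has some member $d_c\in A_c+c$ with $supp_{w_c'}(d_c)<t$. Since the committee only ever grows in $\lazy$, we have $A_c\subseteq A$, and since $c$ was never added we have $c\notin A$ and $c\notin A_c$; hence $A_c+c\subseteq A+c$, and both are committees on which we may consider balanced vectors. By Lemma~\ref{lem:2balanced}, balanced member supports can only decrease as the committee grows, so the balanced vector for $A+c$ assigns $d_c$ a support $\leq supp_{w_c'}(d_c)<t$, whence $supp(A+c)<t$. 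This "rejections are permanent" observation is the step I expect to be the main obstacle, since it is the only place where one must reconcile the committee against which a candidate was tested with the final committee; Lemma~\ref{lem:2balanced} is precisely the tool that handles it, but one has to be careful about the fact that the inspected candidate lies in neither $A_c$ nor the final $A$.

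Finally I would reach the contradiction. Since $|A|=m<k=|A^*|$ we have $A\neq A^*$, so Lemma~\ref{lem:2sols} applied to $(A,w)$ and the optimal solution $(A^*,w^*)$ yields a candidate $c'\in A^*\setminus A$ and a feasible solution $(A+c',w'')$ with $supp_{w''}(A+c')\geq \min\{supp_w(A),\, \tfrac{1}{2}supp_{w^*}(A^*)\}=\min\{supp_w(A),\, t^*/2\}\geq \min\{t,\,t^*/2\}=t$, using the invariant and the hypothesis $t\leq t^*/2$. But a balanced weight vector for $A+c'$ maximizes the least member support over all feasible weight vectors (Lemma~\ref{lem:balanced}, point~1 with $r=1$), so the balanced solution for $A+c'$ also satisfies $supp(A+c')\geq t$, contradicting the conclusion of the previous step that every candidate of $C\setminus A\supseteq A^*\setminus A$ is permanently bad. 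Hence no failure can occur, which proves the lemma.
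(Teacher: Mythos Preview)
Your proof is correct and follows essentially the same approach as the paper: assume failure, take the final partial balanced solution $(A,w)$ with $|A|<k$ and $supp_w(A)\geq t$, invoke Lemma~\ref{lem:2sols} to obtain a candidate $c'\in A^*\setminus A$ with a feasible augmentation of support at least $t$, and derive a contradiction with the fact that $c'$ was rejected. The only minor difference is in the bridging step between the committee $A_{c'}$ at the time of inspection and the final committee $A$: you push the rejection \emph{forward} via Lemma~\ref{lem:2balanced} (rejection at $A_{c'}$ implies balanced support for $A+c'$ is still below $t$), whereas the paper pushes the good witness \emph{backward} by the more elementary observation that the feasible vector $w''$ for $A+c'$ already gives every member of the subset $A_{c'}+c'\subseteq A+c'$ support at least $t$, so the balanced vector for $A_{c'}+c'$ does too---thereby avoiding any appeal to Lemma~\ref{lem:2balanced}.
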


\begin{proof}
Assume by contradiction that for some input threshold $t\leq t^*/2$, $\lazy$ fails. 
Thus, after traversing the whole candidate set $C$, the algorithm ends up with a partial solution $(A,w)$ with $|A|<k$ and $\supp_w(A)\geq t$. 
By Lemma~\ref{lem:2sols}, there must be a candidate $c'\in A^*\setminus A$ and a feasible solution $(A+c', w')$ such that $\supp_{w'}(A+c')\geq t$. 
Notice as well that for any subset $S$ of $A+c'$, vector $w'$ still provides it with a support of at least $t$, and consequently any balanced weight vector for $S$ also provides it with a support of at least $t$. 
This implies that at whichever point the algorithm inspected candidate $c'$, it should have included it in the then-current partial solution, which was a subset of $A$. Hence, $c'$ should be contained in $A$, and we reach a contradiction.
\end{proof}

Next, we establish the number of trials needed to achieve a solution within a factor $(2+\eps)$ from optimal for any $\eps>0$. 

\begin{lemma}\label{lem:lazybinary}
For any $\eps>0$, $O(\log(1/\eps))$ trials of $\lazy$ are sufficient to obtain a solution whose maximin support value is within a factor $(2+\eps)$ from optimal.
\end{lemma}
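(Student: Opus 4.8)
The plan is to first establish a suitable range within which the optimal value $t^*$ must lie, then perform binary search on the input threshold $t$ over that range, using Lemma~\ref{lem:success} as the guarantee that any trial with $t\le t^*/2$ succeeds. The key observation is that $t^*$ is easy to bracket: it lies between the value of any feasible solution (e.g.\ the one produced by a single call to $\balanced$, or even the trivial lower bound obtained from multiwinner approval voting for $r=1$) and a simple upper bound such as $\hat t=\sum_{n\in N}s_n/k$, since by feasibility no full solution can exceed $\hat t$. Hence we know a priori an interval $[\ell,u]$ with $\ell\le t^*\le u$ and $u/\ell$ bounded by a polynomial in the input size (in fact, by $O(k)$ if we start from a constant-factor approximation), so $\log(u/\ell)$ is polynomially bounded and, more to the point, independent of $\eps$.

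Next I would run binary search on $t$ inside $[\ell,u]$: each trial of $\lazy$ with input threshold $t$ runs in time $O(\bal\cdot|C|)$ and returns either success (in which case the returned solution $(A,w)$ satisfies $supp_w(A)\ge t$) or failure. By Lemma~\ref{lem:success}, every $t\le t^*/2$ yields success, so the ``failure region'' of $t$ is contained in $(t^*/2,\infty)$; binary search therefore converges to a threshold $t'$ at the success/failure boundary with $t'>t^*/2\cdot(1-\delta)$ for a relative precision $\delta$ we control by the number of iterations. After $O(\log(1/\delta))$ halving steps we reach $t'$ with $t'\ge \frac{t^*}{2}\cdot\frac{1}{1+\eps/2}$, say; the last successful trial then returns a solution whose support value is at least $t'\ge t^*/(2+\eps)$, which is the claimed $(2+\eps)$-approximation. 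The total number of trials is the number of binary-search steps, namely $O(\log(u/\ell))+O(\log(1/\eps))=O(\log(1/\eps))$ once we absorb the fixed (instance-dependent but $\eps$-independent) term $O(\log(u/\ell))$ into the leading constant, or state the bound as $O(\log(u/\ell)+\log(1/\eps))$ and note the first term is dominated.

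The main obstacle — really the only subtlety — is making the binary search arithmetic clean: one must argue that the relative gap between consecutive candidate thresholds shrinks geometrically, so that after $O(\log(1/\eps))$ steps the boundary threshold $t'$ satisfies $t^*/(2t')\le 1+\eps$, and then check that $\lazy$ succeeding at $t'$ genuinely delivers $supp_w(A)\ge t'$ (which is immediate from the success condition $supp_{w'}(A+c_{\max})\ge t$ enforced at every insertion). A second, minor point is confirming that the \emph{monotonicity} needed for binary search holds — i.e.\ that if $\lazy$ succeeds for threshold $t$ it also succeeds for every smaller threshold; this follows because lowering $t$ only makes the acceptance test $supp_{w'}(A+c_{\max})\ge t$ easier to pass and never removes a candidate that was previously acceptable, so the success region is a downward-closed interval of thresholds. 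With monotonicity in hand the binary search is standard, and combining it with Lemma~\ref{lem:success} and the per-trial runtime $O(\bal\cdot|C|)$ gives both the iteration count and, together with Theorem~\ref{thm:mms}'s proof technique for PJR carried over to $\lazy$, the full statement of Theorem~\ref{thm:2eps}.
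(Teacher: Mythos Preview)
Your overall strategy---bracket $t^*$, binary-search on the threshold, and invoke Lemma~\ref{lem:success}---is the same as the paper's. Two points deserve correction, however.

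First, your choice of upper bracket $u=\hat t=\sum_{n\in N} s_n/k$ does not yield the claimed $O(\log(1/\eps))$ trial count. You correctly observe that with this choice $u/\ell$ may be as large as $\Theta(k)$, which gives $O(\log k+\log(1/\eps))$ binary-search steps; the $\log k$ term cannot be absorbed into a universal constant, nor is it ``dominated'' by $\log(1/\eps)$ for arbitrary $\eps$. The paper avoids this by taking \emph{both} endpoints from the constant-factor approximation: if $t$ is the value returned by $\balanced$ (an $\alpha$-approximation with $\alpha=3.15$), set $t'\leftarrow t/2$ and $t''\leftarrow \alpha\cdot t$. Then $t'\le t^*/2$, so $\lazy$ succeeds at $t'$ by Lemma~\ref{lem:success}, and $t''\ge t^*$, so $\lazy$ fails at $t''$ (or else it returns an optimal solution and we are done). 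The starting ratio $t''/t'=2\alpha$ is an absolute constant, and geometric-mean bisection square-roots this ratio at each step, so the number of trials needed to reach $t''/t'\le 1+\eps/2$ is $O(\log(1/\eps))$ with a universal hidden constant.

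Second, the monotonicity claim---that success of $\lazy$ at threshold $t$ implies success at every smaller threshold---is neither needed nor adequately justified. Lowering $t$ can cause a previously rejected candidate to be accepted early, after which the evolving partial solution $(A,w)$, and hence the score-based inspection order, diverge from the run at the larger threshold; your one-line argument does not rule out a later failure. Fortunately the binary search does not require monotonicity at all: it suffices to maintain one success point $t'$ and one failure point $t''$ throughout. At termination, the fact that $t''$ is a failure point forces $t''>t^*/2$ by the contrapositive of Lemma~\ref{lem:success}, whence $t'\ge t''/(1+\eps/2)>t^*/(2+\eps)$, and the solution produced at $t'$ has least member support at least $t'$. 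This is precisely the paper's argument, and it sidesteps the monotonicity question entirely.
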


\begin{proof}
First, we need to compute a constant-factor estimate of the optimal objective value $t^*$. 
One way to do that is, of course, to execute $\phragmms$ (Algorithm~\ref{alg:balanced}), which provides an approximation guarantee of $\alpha=3.15$ and runs in time $O(\bal\cdot k)$.%
\footnote{In fact, it can be checked that the $\phragmms$ algorithm is equivalent to an execution of $\lazy$ with threshold $t=0$.} 
If $t$ is the objective value of its output, and we initialize the variables $t'\leftarrow t/2$, $t''\leftarrow \alpha\cdot t$, then $\lazy$ is guaranteed to succeed for threshold $t'$ and fail for $t''$. We keep these properties as loop invariants as we perform binary search with $\lazy$, in each iteration setting the new threshold value to the geometric mean of $t'$ and $t''$. This way, the ratio $t''/t'$ starts with a constant value $2 \alpha$, and is square-rooted in each iteration. 
By Lemma~\ref{lem:success}, to achieve a $(2+\eps)$-factor guarantee it suffices to find threshold values $t'<t''$ such that $\lazy$ succeeds for $t'$ and fails for $t''$ and whose ratio is bounded by $t''/t'\leq 1+\eps/2$, and return the output for $t'$. If it takes $T+1$ iterations for our binary search to bring this ratio below $(1+\eps/2)$, then $(2\alpha)^{1/2^T} > (1+\eps/2)$, so $T= O(\log (\eps^{-1} \log(\alpha))) = O(\log(\eps^{-1}))$. This completes the proof. 
\end{proof}

Finally, we prove that whenever Algorithm $\lazy$ succeeds and returns a full solution, this solution satisfies PJR. For this, we exploit the order in which we inspected the candidates. This completes the proof of Theorem~\ref{thm:2eps}. 

\begin{lemma}
For any input threshold $t$, at the end of each iteration of $\lazy$ we have that if $(A,w)$ is the current partial solution, then $\supp_w(A)\geq \max_{c'\in C\setminus A} \score_{(A,w)}(c')$. 
Therefore, if the algorithm succeeds and returns a full solution, this solution satisfies PJR.
\end{lemma}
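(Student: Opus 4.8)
The statement to prove is: for any input threshold $t$, at the end of each iteration of $\lazy$, if $(A,w)$ is the current partial balanced solution then $supp_w(A)\geq \max_{c'\in C\setminus A} score_{(A,w)}(c')$; consequently, a successful full output satisfies PJR. The plan is to argue by induction on the iterations, mirroring the structure of the proof of Lemma~\ref{lem:315localoptimality} but exploiting the specific traversal order of $\lazy$, namely that in each iteration it picks the \emph{uninspected} candidate of highest score. The key point is that the candidates are inspected in an order that processes higher-score candidates before lower-score ones (scores being measured at the moment of inspection), and that once a candidate is rejected it never comes back.

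First I would set up notation: let $(A_i,w_i)$ be the balanced partial solution after iteration $i$ (after a candidate is possibly added and rebalanced), with $(A_0,w_0)=(\emptyset,0)$, for which the claim holds vacuously since $supp_{w_0}(\emptyset)=\infty$. For the inductive step, consider iteration $i$ in which $\lazy$ inspects $c_{\max}$, the uninspected candidate with highest current score. There are two cases. If $c_{\max}$ is added, then $A_i = A_{i-1}+c_{\max}$ and $w_i$ is balanced for $A_i$; here I would reproduce the chain of inequalities from Lemma~\ref{lem:315localoptimality}: $supp_{w_i}(A_i)\geq supp_{w'}(A_i)\geq\min\{supp_{w_{i-1}}(A_{i-1}), score_{(A_{i-1},w_{i-1})}(c_{\max})\}$ by Lemma~\ref{lem:insert}, and this is $\geq \max_{c'\in C\setminus A_{i-1}} score_{(A_{i-1},w_{i-1})}(c')$ by the induction hypothesis together with the fact that $c_{\max}$ has the highest score among \emph{uninspected} candidates — and here I need the extra observation that every \emph{inspected-but-rejected} candidate already had score at most $supp_{w_{i-1}}(A_{i-1})$ at the time it was inspected, hence (by Lemma~\ref{lem:2balanced}, as the committee only grows) its score with respect to the current larger balanced solution is still at most that. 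Finally Lemma~\ref{lem:2balanced} again gives $\max_{c'\in C\setminus A_{i-1}} score_{(A_{i-1},w_{i-1})}(c')\geq \max_{c'\in C\setminus A_i} score_{(A_i,w_i)}(c')$, closing the induction. If $c_{\max}$ is \emph{rejected}, then $(A_i,w_i)=(A_{i-1},w_{i-1})$ and the claim follows immediately from the induction hypothesis, but I also record that $c_{\max}$ now has score at most $supp_{w_i}(A_i)$ — indeed rejection means $supp_{w'}(A_{i-1}+c_{\max})<t\leq supp_{w_{i-1}}(A_{i-1})$, and by Lemma~\ref{lem:insert} this forces $score_{(A_{i-1},w_{i-1})}(c_{\max})\leq supp_{w_{i-1}}(A_{i-1})$, since otherwise inserting at that score would keep support $\geq\min\{supp_{w_{i-1}}(A_{i-1}),t\}=t$ contradicting rejection — which is exactly the fact used above to handle previously-rejected candidates.

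For the PJR conclusion: if the algorithm succeeds it returns a full solution $(A,w)$ with $|A|=k$ satisfying $supp_w(A)\geq\max_{c'\in C\setminus A} score_{(A,w)}(c')$, so Lemma~\ref{lem:localopt} directly gives that $A$ satisfies PJR. The main obstacle I anticipate is the bookkeeping around \emph{rejected} candidates: unlike $\balanced$, which adds a candidate every iteration, $\lazy$ may reject many, so the inductive invariant must be strong enough to also bound the scores of rejected candidates with respect to the current solution; the cleanest fix is to carry the strengthened invariant "$supp_{w_i}(A_i)\geq score_{(A_i,w_i)}(c')$ for every $c'\in C\setminus A_i$, inspected or not," and to observe that rejection at threshold $t\leq supp(A)$ certifies a low score via Lemma~\ref{lem:insert}, while the monotonicity of scores under committee growth (Lemma~\ref{lem:2balanced}) propagates the bound forward across iterations in which the committee expands. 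Everything else is a direct transcription of the $\balanced$ argument.
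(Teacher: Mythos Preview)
Your inductive structure is sound and the rejection case is fine, but the addition case has a genuine gap when there are previously rejected candidates. You assert that
\[
\min\bigl\{supp_{w_{i-1}}(A_{i-1}),\ score_{(A_{i-1},w_{i-1})}(c_{\max})\bigr\}\ \geq\ \max_{c'\in C\setminus A_{i-1}} score_{(A_{i-1},w_{i-1})}(c'),
\]
but this need not hold: by the induction hypothesis $score(c_{\max})\leq supp_{w_{i-1}}(A_{i-1})$, so the left side equals $score(c_{\max})$, and you would need $c_{\max}$ to have the highest score among \emph{all} of $C\setminus A_{i-1}$. Yet $c_{\max}$ is only the highest-score \emph{uninspected} candidate, and a previously rejected candidate can have a strictly larger current score than $c_{\max}$ (both are bounded by $supp_{w_{i-1}}(A_{i-1})$, but there is no ordering between them). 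Your ``recording'' that rejection forces $score\leq supp$ is both too weak for this purpose and redundant, since it already follows from the induction hypothesis.

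The fix is to extract the sharper consequence of rejection: if $c'$ is rejected at some earlier step with balanced partial solution $(A',w')$, then in fact $score_{(A',w')}(c')<t$; otherwise $prescore(c',t)\geq t$, so by Lemma~\ref{lem:insert} the output of $\ins(A',w',c',t)$ has support at least $\min\{supp_{w'}(A'),t\}=t$, and the balanced vector for $A'+c'$ does at least as well, contradicting rejection. By Lemma~\ref{lem:2balanced} the score of $c'$ stays below $t$ at every later balanced partial solution. Since an accepted step yields $supp_{w_i}(A_i)\geq t$ by construction, you get $supp_{w_i}(A_i)\geq t>score_{(A_i,w_i)}(c')$ for every rejected $c'$, while uninspected candidates are handled exactly by your chain through $score(c_{\max})$. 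This is essentially the paper's argument, which organizes the same idea more compactly: before the first rejection the run coincides with $\balanced$ and Lemma~\ref{lem:315localoptimality} applies; at the first rejection the rejected candidate is the overall maximizer of score and has score strictly below $t$, so from then on \emph{every} score stays below $t\leq supp_w(A)$ by Lemma~\ref{lem:2balanced}.
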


\begin{proof}
The second statement immediately follows from the first one together with Lemma~\ref{lem:localopt}, hence we focus on proving the first statement. 
Fix an input threshold $t$ and some iteration of $\lazy$, and let $(A,w)$ be the partial solution at the end of it. 
We consider three cases. Case 1: If all candidates inspected so far have been added to the solution and not rejected, then up to this point the construction coincides with Algorithm $\phragmms$, and the claim follows by Lemma~\ref{lem:315localoptimality}. 
Case 2: Suppose the last iteration was the first to reject a candidate, and let $c'$ be this candidate. Then, $c'$ has the highest score in $C\setminus A$, and we claim that this score must be below threshold $t$, and hence below $\supp_w(A)$. Otherwise, by Lemma~\ref{lem:insert} we have that Algorithm $\ins(A,w,c',t)$ could find a weight vector that gives $A+c'$ a support above $t$, so a balanced weight vector would also give $A+c'$ a support above $t$ which contradicts the fact that $c'$ was rejected. 
Case 3: If a candidate was rejected in a previous iteration, then at the time of the first rejection we had that the highest score in $C\setminus A$ was below $t$, and this inequality must continue to hold true in further iterations by Lemma~\ref{lem:2balanced}, because scores can only decrease. This completes the proof.
\end{proof}

\section{Delayed proofs}\label{s:proofs}

In this section we make frequent references to the definitions and results related to network flow that we present in Appendix~\ref{s:flow}. We start with the following auxiliary result which will be used by some of the proofs in this section.

\begin{lemma}\label{lem:path}
Let $w\in \R^E$ be a feasible weight vector for a given instance, let $c,c'\in C$ be two candidates with $\supp_w(c)<\supp_w(c')$, and suppose there is a simple path $p\in\mathbb{R}^E$ that carries non-zero flow from $c'$ to $c$. If $w+p$ is non-negative and feasible, then $w$ is not balanced for any committee $A$ that contains $c$.
\end{lemma}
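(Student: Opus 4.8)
The plan is to argue by contradiction: I assume that $w$ is balanced for some committee $A$ with $c'\in A$, and I exhibit a feasible perturbation of $w$ that either strictly increases the total member support, or leaves it unchanged while strictly decreasing the sum of squared member supports. Either outcome contradicts one of the two defining properties of a balanced vector (Section~\ref{s:prel}), so no such $A$ can exist.

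First I would set up the perturbation. Write $\delta := p(c) = -\sum_{n\in N_c} p_{nc}$ for the amount of flow the path pushes out of $c$; by hypothesis $\delta>0$, and since $p$ has zero excess at every vertex other than $c$ and $c'$ while the total excess of any vector vanishes, we get $p(c') = -\delta$. For $\lambda\in(0,1]$ set $w_\lambda := (1-\lambda)w + \lambda(w+p) = w + \lambda p$. Because the feasible region $\{w\ge 0 : \sum_{c''\in C_n} w_{nc''}\le s_n \text{ for all } n\}$ is convex and both $w$ and $w+p$ lie in it (the latter by hypothesis), $w_\lambda$ is non-negative and feasible for every such $\lambda$ (equivalently, $\lambda p$ is a sub-flow of the flow $p = (w+p)-w$ between feasible vectors, so Lemma~\ref{lem:subflow} gives the same conclusion). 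A short computation using the zero-excess condition then shows $supp_{w_\lambda}(c'')=supp_w(c'')$ for all candidates $c''\notin\{c,c'\}$, while $supp_{w_\lambda}(c)=supp_w(c)-\lambda\delta$ and $supp_{w_\lambda}(c')=supp_w(c')+\lambda\delta$.

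Next I split into two cases according to whether $c\in A$. If $c\notin A$, then among the members of $A$ only $c'$ changes support, so $\sum_{c''\in A}supp_{w_\lambda}(c'') = \sum_{c''\in A}supp_w(c'') + \lambda\delta$ is strictly larger than the total member support of $w$, contradicting that $w$ maximizes it. If $c\in A$, the $\pm\lambda\delta$ contributions cancel, so $w_\lambda$ has the same (hence also maximal) total member support; I then consider $\Phi(\lambda):=\sum_{c''\in A}(supp_{w_\lambda}(c''))^2$, which differs from $\Phi(0)$ only in the $c$ and $c'$ terms and satisfies $\Phi'(0)=2\delta\big(supp_w(c')-supp_w(c)\big)<0$ by the hypothesis $supp_w(c)>supp_w(c')$. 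Hence $\Phi(\lambda)<\Phi(0)$ for all sufficiently small $\lambda>0$, so $w_\lambda$ is a feasible maximizer of the total support with strictly smaller sum of squared supports — contradicting the second defining property of a balanced vector. In both cases we reach a contradiction, completing the proof.

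I expect the argument to be essentially routine once the perturbation is in place. The only points needing care are the sign bookkeeping — translating "a path carrying flow from $c$ to $c'$" into the support decreases at $c$ and increases at $c'$ under the flow conventions of Section~\ref{s:prel} — and the observation that one must perturb by a small multiple $\lambda p$ rather than by $p$ itself, since $\delta$ may well exceed $supp_w(c)-supp_w(c')$ and the quadratic comparison is only guaranteed near $\lambda=0$. I do not anticipate any deeper obstacle.
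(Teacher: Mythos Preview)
Your proof is correct and follows essentially the same approach as the paper's: a case split on whether $c\in A$, using a scaled path $\lambda p$ to either increase total support (when $c\notin A$) or keep it fixed while decreasing the sum of squared supports (when $c\in A$). The only cosmetic differences are that the paper picks an explicit scaling $\eps=\min\{\lambda,(supp_w(c)-supp_w(c'))/2\}$ and computes the quadratic drop directly as $2\eps(supp_w(c)-supp_w(c')-\eps)\ge 2\eps^2>0$, whereas you argue via convexity of the feasible region and the sign of $\Phi'(0)$.
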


\begin{proof}
Fix a committee $A\subseteq C$ that contains $c$. If $c'$ is not in $A$, then $w+p$ provides a greater sum of member supports over $A$ than $w$, so the latter is not balanced as it does not maximize this sum. Now suppose both $c$ and $c'$ are in $A$. Let $\lambda>0$ be the flow value carried by $p$, let $\eps:=\min\{\lambda, (\supp_w(c') - \supp_w(c))/2\}>0$, and let $p'$ be the scalar multiple of $p$ whose flow value is $\eps$. By an application of Lemma~\ref{lem:subflow} over $w$ and $w':=w+p$, and the fact that $p'$ is a sub-flow of $p=w'-w$, we have that $w+p'$ is non-negative and feasible. 
Moreover, vectors $w$ and $w+p'$ clearly provide the same sum of member supports over $A$. Finally, if we compare their sums of member supports squared, we have that
\begin{align*}
\sum_{d\in A}  \supp_w^2(d)-\sum_{d\in A} \supp^2_{w+p'}(d) 
=& \supp_w^2(c) +\supp_w^2(c')-\supp^2_{w+p'}(c) -\supp^2_{w+p'}(c')\\
=& \supp_w^2(c) +\supp_w^2(c') - (\supp_w(c) + \eps)^2 - (\supp_w(c') - \eps)^2 \\
=& 2\eps\cdot (\supp_w(c') - \supp_w(c) - \eps) \\
\geq & 2\eps\cdot(2\eps - \eps)=2\eps^2>0.
\end{align*}

Therefore, $w$ is not balanced for $A$, as it does not minimize the sum of member supports squared.  
\end{proof}

\begin{proof}[Proof of Lemma~\ref{lem:balanced}]
Fix a balanced partial solution $(A,w)$. The first statement is that for any $1\leq r\leq |A|$, function 
$$F_r(w'):=\min_{A'\subseteq A, \ |A'|=r} \sum_{c\in A'} \supp_{w'}(c)$$
is maximized by vector $w$ over all feasible vectors $w'\in\R^E$. 
Assume by contradiction that there is a parameter $r$ and a feasible $w'$ such that $F_r(w')>F_r(w)$. 
We also assume without loss of generality that the members of $A=\{c_1, \cdots, c_{|A|}\}$ are enumerated in such a way that whenever $i<j$, we have $\supp_w(c_i)\leq \supp_w(c_j)$, and if this inequality is tight then $\supp_{w'}(c_i)\leq \supp_{w'}(c_j)$.
With this enumeration we obtain the identity $F_r(w)=\sum_{i=1}^r \supp_w(c_i)$. 
Thus, by our assumption by contradiction, 
$$ \sum_{i=1}^r \supp_{w}(c_i) = F_r(w) < F_r(w') \leq  \sum_{i=1}^r \supp_{w'}(c_i). $$
%
%\begin{align*}
%    \sum_{i=r+1}^{|A|} \supp_{w'}(c_i) &= \sum_{i=1}^{|A|} \supp_{w'}(c_i) - \sum_{i=1}^{r} \supp_{w'}(c_i) \\
%    & = \sum_{i=1}^{|A|} \supp_{w}(c_i) - \sum_{i=1}^{r} \supp_{w'}(c_i) \\
%    & < \sum_{i=1}^{|A|} \supp_{w}(c_i) - \sum_{i=1}^{r} \supp_{w}(c_i) \\
%    & = \sum_{i=r+1}^{|A|} \supp_{w}(c_i). \\
%\end{align*}
%
Consider the flow $f:=w'-w\in\mathbb{R}^E$. 
By the last inequality, set $A_r:=\{c_1, \cdots, c_r\}\subseteq A$ has a negative net excess, so by Theorem~\ref{thm:decomposition} $f$ must have a sub-flow $p$ that is a simple path starting in a vertex outside $A_r$ with positive excess and ending in a vertex $c_i$ in $A_r$ with negative excess, for some $1\leq i\leq r$. 
Now, path $p$ must also start inside committee $A$, as otherwise vector $w+p$ is feasible by Lemma~\ref{lem:subflow} and provides a larger sum of member supports than $w$, contradicting the fact that the latter is balanced for $A$. 
Hence, $p$ starts in an vertex $c_j$ in $A$ for some $r<j\leq |A|$. 
Moreover, by our choice of member enumeration it must be the case that $\supp_w(c_i)<\supp_w(c_j)$, because if the inequality was tight we would have $\supp_{w'}(c_i)<\supp_{w'}(c_j)$, which implies that $c_i$ has a larger excess than $c_j$, $e_f(c_i)> e_f(c_j)$, contradicting the fact that $c_i$ has negative excess and $c_j$ has positive excess.
Finally, by Lemma~\ref{lem:path}, $w$ is not balanced for $A$ (nor for any committee that contains $c_i$), and we reach a contradiction. 

The second statement follows directly from the fact that $w$ maximizes the sum of member supports, and thus all of the aggregate vote strength of represented voters (i.e., voters in $\cup_{c\in A} N_c$) must be directed to members of $A$. 
We move on to the third statement. 
Assume by contradiction that there is a voter $n\in N$ and two candidates $c, c'\in A\cap C_n$ such that $w_{nc}>0$ and $\supp_w(c)>\supp_w(c')$. 
Let $p\in\mathbb{R}^E$ be the simple path of length two that carries a flow of value $w_{nc}$ from $c$ to $c'$ via $n$, i.e, $p_{nc'}=-p_{nc}=w_{nc}$, and $p$ is zero elsewhere. 
It can be checked that $w+p$ is non-negative and feasible, so by Lemma~\ref{lem:path} $w$ is not balanced for $A$, which is a contradiction. 

Finally, we prove that if a feasible weight vector satisfies properties 2 and 3, then it is necessarily balanced for $A$. 
In fact, we claim that all vectors satisfying these properties provide exactly the same list of member supports $(\supp_w(c))_{c\in A}$, and hence all are balanced for $A$. 
Let $w, w'\in\R^E$ be two such weight vectors. It easily follows from feasibility (inequality~\ref{eq:feasible}) and property 2 that both provide the same sum of member supports, namely 
$$\sum_{c\in A} \supp_w(c) = \sum_{c\in A} \supp_{w'}(c) =\sum_{n\in \cup_{c\in A} N_c} s_n.$$
 
Now, assume by contradiction and without loss of generality that there is a candidate $c\in A$ for which $\supp_{w}(c)<\supp_{w'}(c)$, and consider the flow $f:=w'-w$. 
Clearly, all vertices in $N$ and in $C\setminus A$ have zero excess relative to $f$, while $c$ has negative excess. 
By Theorem~\ref{thm:decomposition}, there is a simple path $p$ ending in $c$ and starting in a vertex with positive excess; 
moreover, $p$ is a sub-flow of $f$, which implies that this starting vertex must be in $A$, and in fact all of the candidates visited by $p$ must be in $A$ as well.  

Now, path $p$ alternates between members of $A$ and voters. Hence, there must be three consecutive vertices $c_1, n, c_2$ in it, with $c_1, c_2\in A\cap C_{n}$, such that $c_1$ has a strictly larger excess than $c_2$, i.e., 
$$- \supp_{w'}(c_1)+\supp_w(c_1) > - \supp_{w'}(c_2) + \supp_w(c_2),$$ 

which in turn implies that either $\supp_{w'}(c_1)<\supp_{w'}(c_2)$ holds or $\supp_{w}(c_1)>\supp_w(c_2)$ holds, or both. 
If $\supp_{w'}(c_1)<\supp_{w'}(c_2)$, we reach a contradiction with the fact that $w'$ satisfies property 3 and that $w'_{nc_2}$ must be strictly positive since $p$ (and thus also $f=w'-w$) is directed from $n$ to $c_2$. 
Similarly, if $\supp_w(c_2)<\supp_{w}(c_1)$ we reach a contradiction with the fact that $w$ satisfies condition 3 and that $w_{nc_1}$ must be strictly positive since $p$ is directed from $c_1$ to $n$. 

Properties 2 and 3 can clearly be tested in time $O(|E|)$. This completes the proof of the lemma.
\end{proof}

\begin{proof}[Proof of Lemma~\ref{lem:badexamples}]
In the example of Figure~\ref{fig:example}, the optimum value for  maximin support is clearly 1, achieved for instance by choosing the $k$ honest candidates. 
Hence, if an $\alpha$-approximation algorithm elects $j$ adversarial candidates, one of these candidates must be given a support that is simultaneously at most $1/j$ and at least $1/\alpha$, so $j\leq \alpha$. This proves the first claim.

We continue with the PAV rule. In this example, it can be checked that PAV yields the same result as sequential-PAV, and that honest candidates are elected in order, i.e., $c_1$, then $c_2$, and so on. 
Now, if at some point the rule has elected $i$ honest and $j-1$ adversarial members, with $i+j=k$, the score of the next honest candidate is $(k-i)/(i+1)$, and that of the next adversarial candidate is $1/j$. 
As we always pick the candidate with highest score, the last candidate will be adversarial -- and hence there will be $j$ adversarial candidates elected -- if $1/j > (k-i)/(i+1)=j/(k-j+1)$. 
It can be checked that $j=\sqrt{k}-1/2$ satisfies this inequality, so the rule elects at least this many adversarial representatives. 

We analyze $\phragmen$ next. We take the continuous formulation where every voter starts with zero vote strength and gains strength at a constant speed of one unit per second, candidates have unit cost, and a candidate is elected as soon as its supporters can afford it, spending the corresponding vote strength; see~\cite{lackner2020approval}.  
It will take $1/k+1/(k-1)+\cdots + 1/(k-i)= H_k - H_{k-i-1}$ seconds for the rule to elect $i+1$ honest candidates, where $H_i=\sum_{t=1}^i 1/t$ is the $i$-th harmonic number, and $j$ seconds to elect $j$ adversarial candidates, where honest and adversarial candidates are elected with independent time frames. 
If at some point there are $i$ honest and $j-1$ adversarial candidates with $i+j=k$, the last elected candidate will be adversarial -- and thus there will be $j$ adversarial candidates elected -- if $j< H_k - H_{k-i-1} = H_k - H_{j-1}=\ln(\frac{k}{j}) -o(1)$. 
From this it follows that the rule elects at least $(1-o(1)) \ln k$ adversarial candidates.

We finally consider Rule X, which consists of two phases. 
In the first phase, $k$ units of vote strength are evenly distributed among the $k+1$ voters, i.e., $k/(k+1)$ units per voter, and candidates have a unit cost as before. 
The vote distribution from voters to candidates is somewhat involved, but it suffices to notice that the adversary cannot afford any candidates, honest candidates are elected in order, and the cost of each new candidate is evenly shared among its supporters. 
At the beginning of the election of the $i$-th honest candidate, each of its $k-i+1$ supporters has a vote strength of $k/(k+1) - 1/k - \cdots - 1/(k-i+1)=1-H_{k+1}-H_{k-i}$, so the candidate can be afforded if and only if $(k-i+1)(1-H_{k+1} - H_{k-i})\geq 1$. As a result, there will be $k[1- e^{-1-o(1)}]$ honest candidates elected in the first phase. 
Now, the rule is not specific about how to elect the remaining candidates in the second phase, so it could elect up to $k/e^{1+o(1)}=\Omega(k)$ adversarial candidates. 
If the remaining seats are filled by running $\phragmen$, as suggested by the authors of Rule X~\cite{peters2019proportionality}, then the rule selects at least as many adversarial candidates as $\phragmen$ does, since the adversarial voter still has all of its budget available. This completes the proof.
\end{proof}

\begin{proof}[Proof of Lemma~\ref{lem:2balanced}]
The second statement comparing scores follows directly from the first one and the definitions of slack, parameterized score, and score. Hence we focus on the first statement, i.e., that $\supp_w(c)\geq \supp_{w'}(c)$ for each member $c\in A$.

Consider the flow $f:=w'-w\in\mathbb{R}^E$: it suffices to prove that no member of $A$ has negative excess relative to it. Assume by contradiction that there is such a member $c\in A$ with negative excess.  
By Theorem~\ref{thm:decomposition}, $f$ must have a sub-flow that is a simple path ending in $c$ and starting in vertex with positive excess. 
This starting vertex must be a candidate $c'$ inside $A$, as otherwise vector $w+p$ is feasible by Lemma~\ref{lem:subflow} and offers a greater sum of member supports over $A$ than $w$, which contradicts the fact that $w$ is balanced for $A$. 
Now, the fact that $e_f(c) < 0 < e_f(c')$ implies that 
$$- \supp_{w'}(c) + \supp_{w}(c) <0< - \supp_{w'}(c') + \supp_{w}(c'),$$
which implies that either $\supp_{w}(c)<\supp_{w}(c')$, or $\supp_{w'}(c')<\supp_{w'}(c)$, or both. 
If $\supp_{w}(c)<\supp_{w}(c')$, then by Lemma~\ref{lem:path}, $w$ is not balanced for $A$ (nor for any committee containing $c$). 
Similarly, if $\supp_{w'}(c')<\supp_{w'}(c)$, notice that $w'-p$ is non-negative and feasible by Lemma~\ref{lem:subflow}, so again Lemma~\ref{lem:path} applied to vector $w'$ and path $-p$ (which starts in $c$ and ends in $c'$) implies that $w'$ is not balanced for $A'$ (nor for any committee containing $c'$). 
In either case we reach a contradiction.
\end{proof}

\begin{proof}[Proof of Lemma~\ref{lem:Lebesgue}]
Recall that for any set $A\subseteq \mathbb{R}$, the indicator function $1_A:\mathbb{R}\rightarrow \mathbb{R}$ is defined as $1_A(t)=1$ if $t\in A$, and $0$ otherwise. For any $i\in I$, we can write
$$ f(x_i) = \int_{0}^{f(x_i)} dt = \int_0^{\lim_{x\rightarrow \infty} f(x)} 1_{(-\infty, f(x_i)]}(t)dt,$$
and thus
\begin{align*}
    \sum_{i\in I} \alpha_i f(x_i) &= \int_0^{\lim_{x\rightarrow \infty} f(x)} \Big(\sum_{i\in I} \alpha_i 1_{(-\infty, f(x_i)]}(t)\Big)dt \\
		&= \int_0^{\lim_{x\rightarrow \infty} f(x)} \Big(\sum_{i\in I: \ f(x_i)\geq t} \alpha_i \Big)dt.
\end{align*}
This is a Lebesgue integral over the measure with weights $\alpha_i$. Now, conditions on function $f(x)$ are sufficient for its inverse $f^{-1}(t)$ to exist, with $f^{-1}(0)=\chi$. Substituting with the new variable $x=f^{-1}(t)$ on the formula above, where $t=f(x)$ and $dt=f'(x)dx$, we finally obtain
$$\sum_{i\in I} \alpha_i f(x_i) =\int_{\chi}^{\infty} \Big( \sum_{i\in I: \ x_i\geq x} \alpha_i \Big)\cdot f'(x)dx,$$
as claimed.
\end{proof}

\begin{proof}[Proof of Lemma~\ref{lem:subflow}]
We prove the claim only for $w+f'$, as the proof for $w'-f'$ is symmetric. 
For each edge $nc\in E$, the value of $(w+f')_{nc}$ must fall between $w_{nc}$ and $(w+f)_{nc}=w'_{nc}$. As both of these values are non-negative, the same holds for $(w+f')_{nc}$. 
Notice now from inequality \eqref{eq:feasible} that proving feasibility corresponds to proving that the excess $e_{w+f'}(n)$ is at most $s_n$ for each voter $n\in N$. We have 
$$e_{w+f'}(n) = \sum_{c\in C_n} (w+f')_{nc}= \sum_{c\in C_n} ( w_{nc} + f_{nc}') = e_w(n) + e_{f'}(n). $$
If the excess $e_{f'}(n)$ is non-positive, then $e_{w+f'}(n)\leq e_w(n) \leq s_n$, where the last inequality holds because $w$ is feasible. 
Otherwise, we have  $0< e_{f'}(n)\leq e_{f}(n)$, and thus $e_{w+f'}(n)\leq e_w(n) + e_{f}(n) = e_{w+f}(n) = e_{w'}(n) \leq s_n$, since $w'$ is feasible. This completes the proof.
\end{proof}

\end{document}